\documentclass[12pt]{iopart}
\usepackage[utf8]{inputenc}
\usepackage[T1]{fontenc}
\usepackage[english]{babel}
\usepackage[pdftex]{graphicx}
\usepackage{bm,bbm,amsthm,amsfonts,amssymb,amsopn,amsmath}
\usepackage{iopams,braket}
\usepackage{dsfont,mathtools,physics}
\usepackage{url,cite}
\usepackage{doi}
\usepackage{etoolbox}
\usepackage{enumitem}
\usepackage[makeroom]{cancel}
\usepackage{scalerel,color,pgf,tikz}
\usepackage{calc,xstring,ifthen}
\usepackage{pdftexcmds,adjustbox}

%This package is so the figures stay in their sections:
\usepackage[section]{placeins}

\renewcommand{\vec}[1]{{\bm{\mathrm{#1}}}}
\newcommand{\ul}[1]{{\underline{#1}}}
\newcommand{\W}{{W}^{\phantom{\prime}}}
\newcommand{\V}{{W}^{\prime}}
\newcommand{\WV}{{W}^{(\prime)}}
\newcommand{\vW}{\mathbf{W}^{\phantom{\prime}}}
\newcommand{\vV}{\mathbf{W}^{\prime}}
\newcommand{\A}{{A}^{\phantom{\prime}}}
\newcommand{\B}{{A}^{\prime}}
\newcommand{\AB}{{A}^{(\prime)}}
\newcommand{\vA}{\mathbf{A}^{\phantom{\prime}}}
\newcommand{\vB}{\mathbf{A}^{\prime}}
\newcommand{\Uo}{\mathbb{U}_{\mathrm{o}}}
\newcommand{\U}{\mathbb{U}}
\newcommand{\Ue}{{\mathbb U}_{\mathrm{e}}}
\newcommand{\Ueo}{{\mathbb U}_{\mathrm{e/o}}}
\newcommand{\tUo}{\hat{\mathbb U}_{\mathrm{o}}}
\newcommand{\tUe}{\hat{\mathbb U}_{\mathrm{e}}}
\newcommand{\tUeo}{\hat{\mathbb U}_{\mathrm{e/o}}}
\newcommand{\tdUo}{\tilde{\mathbb U}_{\mathrm{o}}}
\newcommand{\tdUe}{\tilde{\mathbb U}_{\mathrm{e}}}
\newcommand{\lrV}{X^{\mathrm{L/R}}}
\newcommand{\lrW}{Y^{\mathrm{L/R}}}
\newcommand{\lV}{X^{\mathrm{L}}}
\newcommand{\lW}{Y^{\mathrm{L}}}
\newcommand{\rV}{X^{\mathrm{R}}}
\newcommand{\rW}{Y^{\mathrm{R}}}
\newcommand{\lL}{L^{\mathrm{L}}}
\newcommand{\rL}{L^{\mathrm{R}}}
\newcommand{\lR}{R^{\mathrm{L}}}
\newcommand{\rR}{R^{\mathrm{R}}}
\newcommand{\lspan}{\mathrm{span}}
\newcommand{\End}{\mathrm{End}}
\newcommand{\ave}[1]{{\langle #1\rangle}}
\newcommand\scalemath[2]{\scalebox{#1}{\mbox{\ensuremath{\displaystyle #2}}}}
\newcommand{\be}{\begin{eqnarray}}
\newcommand{\ee}{\end{eqnarray}}
\newcommand{\un}[1]{\underline{#1}}
\newcommand{\ii}{ {\rm i} }
\renewcommand{\hm}[2]{{\hat{#1}^{(#2)}}}
\newcommand{\hmp}[2]{{\hat{#1}^{'(#2)}}}
\newcommand{\hmu}[2]{{\hat{\mathbf{#1}}_{#2}}}
\newcommand{\hmup}[2]{{\hat{\mathbf{#1}}'_{#2}}}
\newcommand{\mmu}[2]{{\mathbf{#1}_{#2}}}
\newcommand{\mmup}[2]{{\mathbf{#1}'_{#2}}}

\def\tr{{\,{\rm tr}\,}}

\def\one{\mathbbm{1}}

\def\cf{\sigma}
\def\cA{{\mathfrak a}}    
\def\cB{{\mathfrak b}}    
\def\cC{{\mathfrak c}} 
\def\cE{{\vartheta}}

\def\WW{{{\hat{\mathbf{W}}}}}

\def\cO{{\mathcal O}}

\newcommand{\CC}{\mathbb{C}}

% tikz definitions
\definecolor{half}{rgb}{0.95,0.95,0.95}
\definecolor{full}{rgb}{0,0,0}
\definecolor{halfborder}{rgb}{0.8,0.8,0.8}
\definecolor{border}{rgb}{0.3,0.3,0.3}
\definecolor{colU}{rgb}{0.71,0.8,0.76}
\definecolor{colP}{rgb}{0.69,0.50,0.86}
\definecolor{colLines}{rgb}{0.31,0.31,0.31}
\definecolor{colObs}{rgb}{1,0.39,0.28}

\usetikzlibrary{decorations.pathmorphing,decorations.markings,patterns,decorations.pathreplacing,shapes.misc}

\newcommand\textrectangle[3]{
  \draw[border,fill=half] ({(#1)},{(#2-1)})  -- ({(#1+1)},{(#2)})  -- ({(#1)},{(#2+1)})  
  -- ({(#1-1)},{(#2)})  -- cycle;
  \node at ({(#1)},{(#2)}) {\scalebox{0.7}{#3}}
}

\newcommand\halfdashedrectangle[2]{
  \draw[halfborder,thick] ({(#1)},{(#2-1)})  -- ({(#1+1)},{(#2)})  -- ({(#1)},{(#2+1)})  
  -- ({(#1-1)},{(#2)})  -- cycle;
  \draw[border,thin,dashed,fill=halfborder]
  ({(#1)},{(#2-1)})  -- ({(#1+1)},{(#2)})  -- ({(#1)},{(#2+1)}) -- ({(#1-1)},{(#2)})  -- cycle;
}

\newcommand\redemptyrectangle[2]{
  \draw[red,very thick] ({(#1)},{(#2-1)})  -- ({(#1+1)},{(#2)})  -- ({(#1)},{(#2+1)})  
  -- ({(#1-1)},{(#2)})  -- cycle;
}

\newcommand\redfullrectangle[2]{
  \draw[red,very thick,fill=full] ({(#1)},{(#2-1)})  -- ({(#1+1)},{(#2)})  -- ({(#1)},{(#2+1)})  
  -- ({(#1-1)},{(#2)})  -- cycle;
}

\newcommand\redundeterminedrectangle[2]{
  \draw[red,very thick,fill=halfborder] ({(#1)},{(#2-1)})  -- ({(#1+1)},{(#2)}) 
  -- ({(#1)},{(#2+1)}) -- ({(#1-1)},{(#2)})  -- cycle;
}

\newcommand\emptyrectangle[2]{
  \draw[border] ({(#1)},{(#2-1)})  -- ({(#1+1)},{(#2)})  -- ({(#1)},{(#2+1)})  
  -- ({(#1-1)},{(#2)})  -- cycle;
}

\newcommand\fullrectangle[2]{
  \draw[border,fill=full] ({(#1)},{(#2-1)})  -- ({(#1+1)},{(#2)})  -- ({(#1)},{(#2+1)})  
  -- ({(#1-1)},{(#2)})  -- cycle;
}

\newcommand\undeterminedrectangle[2]{
  \draw[border,fill=halfborder] ({(#1)},{(#2-1)})  -- ({(#1+1)},{(#2)}) 
  -- ({(#1)},{(#2+1)}) -- ({(#1-1)},{(#2)})  -- cycle;
}

\newcommand\rectangle[3]{
  \ifthenelse{\equal{#3}{1}}{\fullrectangle{#1}{#2}}{\ifthenelse{\equal{#3}{2}}{\undeterminedrectangle{#1}{#2}}
  {\emptyrectangle{#1}{#2}}};
  %\ifthenelse{\equal{#3}{1}}{\fullrectangle{#1}{#2}}{\emptyrectangle{#1}{#2}};
}

\newcommand\redrectangle[3]{
  \ifthenelse{\equal{#3}{1}}{\redfullrectangle{#1}{#2}}{\ifthenelse{\equal{#3}{2}}{\redundeterminedrectangle{#1}{#2}}
  {\redemptyrectangle{#1}{#2}}};
  %\ifthenelse{\equal{#3}{1}}{\fullrectangle{#1}{#2}}{\emptyrectangle{#1}{#2}};
}

\newcommand\bCircle[3]{
    \draw [thick,colLines,fill=#3] ({(#1)},{(#2)}) circle (0.4);
}

\newcommand\sCircle[3]{
    \draw [thick,colLines,fill=#3] ({(#1)},{(#2)}) circle (0.15);
}

\newcommand\prop[4]{
    \gridLine{(#1)}{(#3)}{(#2)}{(#3)};
    \sCircle{(#1)}{(#3)}{#4};
    \sCircle{(#2)}{(#3)}{#4};
    \bCircle{((#1)+(#2))/2}{(#3)}{#4};
}

\newcommand\tproj[4]{
    \gridLine{(#1)}{(#2)}{(#1)}{(#3)};
    \sCircle{(#1)}{(#2)}{#4};
    \sCircle{(#1)}{(#3)}{#4};
    \sCircle{(#1)}{((#2)+(#3))/2}{#4};
}

\newcommand\tprop[4]{
    \gridLine{(#1)}{(#2)}{(#1)}{(#3)};
    \sCircle{(#1)}{(#2)}{#4};
    \sCircle{(#1)}{(#3)}{#4};
    \bCircle{(#1)}{((#2)+(#3))/2}{#4};
}

\newcommand\gridLine[4]{
    \draw [thick,colLines] ({(#1)},{(#2)}) -- ({(#3)},{(#4)});
}

\newcommand\leftHook[2]{
    \draw[thick,colLines] ({(#1)},{(#2)}) arc (90:240:0.2);
}

\newcommand\rightHook[2]{
    \draw[thick,colLines] ({(#1)},{(#2)}) arc (90:-50:0.2);
}

\newcommand\obs[2]{
    \draw [thick,rounded corners=1,colLines,fill=colObs] ({(#1)},{(#2)}) +(0.15,0.15) rectangle 
    +(-0.15,-0.15);
}

\newcommand\ME[2]{
    \draw [thick,colLines,fill=colLines] ({(#1)},{(#2)}) circle (0.15);
}

\newtheorem{theorem}{Theorem}

\makeatletter
\def\@mkboth#1#2{}
\newlength\appendixwidth
\preto\appendix{\addtocontents{toc}{\protect\patchl@section}}
\newcommand{\patchl@section}{%
  \settowidth{\appendixwidth}{\textbf{Appendix }}%
  \addtolength{\appendixwidth}{1.5em}%
  \patchcmd{\l@section}{1.5em}{\appendixwidth}{}{\ddt}%
}
\makeatother

\begin{document}
\title{Rule 54: Exactly solvable model of nonequilibrium statistical mechanics}
\author{Berislav Bu\v ca$^1$, Katja Klobas$^2$, and Toma\v z Prosen$^3$}
\address{$^1$ Clarendon Laboratory, University of Oxford, Parks Road, Oxford OX1 3PU, United Kingdom}
\address{$^2$ Rudolf Peierls Centre for Theoretical Physics, University of Oxford, Parks Road, Oxford OX1 3PU, United Kingdom}
\address{$^3$ Department of Physics, Faculty of Mathematics and Physics, University of Ljubljana, Ljubljana, Slovenia}
\eads{
  \mailto{berislav.buca@physics.ox.ac.uk},
  \mailto{katja.klobas@physics.ox.ac.uk},
  \mailto{tomaz.prosen@fmf.uni-lj.si}}

\begin{abstract}
    We review recent results on an exactly solvable model of nonequilibrium statistical mechanics, specifically the classical Rule 54 reversible cellular automaton and some of its quantum extensions.  We discuss the exact microscopic description of nonequilibrium dynamics as well as the equilibrium and nonequilibrium stationary states. This allows us to obtain a rigorous handle on the corresponding emergent hydrodynamic description, which is treated as well. Specifically, we focus on two different paradigms of Rule 54 dynamics. Firstly, we consider a finite chain driven by stochastic boundaries, where we provide exact matrix product descriptions of the nonequilibrium steady state, most relevant decay modes, as well as the eigenvector of the tilted Markov chain yielding exact large deviations for a broad class of local and extensive observables. Secondly, we treat the explicit dynamics of macro-states on an infinite lattice and discuss exact closed form results for dynamical structure factor, multi-time-correlation functions and inhomogeneous quenches. Remarkably, these results prove that the model, despite its simplicity, behaves like a regular fluid with coexistence of ballistic (sound) and diffusive (heat) transport. Finally, we briefly discuss quantum interpretation of Rule 54 dynamics and explicit results on dynamical spreading of local operators and operator entanglement.

\end{abstract}

\tableofcontents

\section{Introduction}

Exactly solved models are a major cornerstone of statistical mechanics and physics in general. While free (quadratic) models and their perturbations provide some insight, to achieve realistic statistical physical behaviour found in real materials requires strongly interacting models of which only several solvable classes are known. The first and most notable class, in the context of equilibrium physics, are two-dimensional lattice models related to solutions of the Yang-Baxter equation which give crucial exact information about the universality classes of statistical systems in two dimensions~\cite{baxterbook}, and relate to Bethe-ansatz solvable quantum models in one dimension~\cite{bethe1931theorie,sutherland2004beautiful,takahashi1999thermodynamics}. However, for studying out of equilibrium properties, in particular for time-dependent correlation functions or quantum quenches, such Yang-Baxter-Bethe solvable models represent a~much harder challenge. Computation of time correlation functions through the so-called form-factor-expansion is a formidable task, so far accomplished with only partial success in particular models \cite{pozsgay2008formI,pozsgay2008formII,essler2009finite,calabrese2011quantum,calabrese2012quantum,granet2020finite,granet2020systematic}. Nevertheless, integrability techniques resulted in a successful hydrodynamic approach in such models. The most notable of which is the so-called generalised hydrodynamics (GHD)~\cite{bertini2016transport,castroalvaredo2016emergent,alba2021generalizedhydrodynamic}, which has achieved remarkable success in predicting large space-time scale behaviour of observables in integrable systems.

However, hydrodynamics is not a rigorous theory and in particular it relies on the assumption of a clear separation of space-time scales. The fact that sub-ballistic corrections generically result in diffusion, as derived within GHD in Ref.~\cite{denardis2018hydrodynamic} (see also~\cite{denardis2021correlation}), could be at least partly attributed to the central assumption behind the hydrodynamic picture, which is the immediate loss of memory (correlations) in the quasi-particle scattering processes. It is thus of utmost importance to have at our disposal another type of model -- or a class of models -- with generic physical behaviour and for which dynamical physical quantities are accessible by a rigorous analysis free from assumptions.  Within such a class of models we can then achieve the `holy grail' of nonequilibrium statistical physics, which is to derive macroscopic transport laws from reversible and deterministic microscopic equations of motion \cite{lebowitz1999statistical}. 

Indeed, in the last several years it has been recognised that such a model exists and seemingly belongs to a class distinct from regular Bethe-ansatz solvable models. It is the Rule 54 of the family of reversible cellular automata (RCA54) proposed and classified in Ref.~\cite{bobenko1993two}. The model RCA54 can also be understood as a staggered sublattice version of the model 250R of reversible local automata as classified earlier by Takesue~\cite{takesue1987reversible}, or a discrete-time deterministic limit of the Fredrickson-Andersen model~\cite{fredrickson1984kinetic}. It is arguably the simplest microscopic (deterministic) physical theory in 1+1 dimensions with strong local interactions and asymptotically freely propagating excitations -- \emph{quasi-particles}, or \emph{solitons}. RCA54 has been proposed to be \emph{integrable} already in \cite{bobenko1993two} based on mainly qualitative arguments. The first exact solution, however, came in Ref.~\cite{prosen2016integrability}, where the nonequilibrium steady state of the model driven by a pair of chemical baths at the boundaries was analytically found. This led to many other results, such as generalisation to larger families of boundary driving~\cite{inoue2018two,prosen2017exact}, diagonalisation of the Markov propagator~\cite{prosen2017exact,bucaprosenreview}, and exact large-deviation results in the boundary driven setup~\cite{buca2019exact}. Later, many properties of the model on an infinite lattice without stochastic boundaries have been exactly obtained, such as dynamical structure factor and multi-time correlation functions~\cite{klobas2019timedependent,klobas2020matrix,klobas2020space}. The model has also been studied in the quantum context~\cite{gopalakrishnan2018facilitated}, and its simple but non-trivial dynamics was found to provide an ideal setting to study large-scale properties of the physics of local observables and operator spreading~\cite{gopalakrishnan2018operator,gopalakrishnan2018hydrodynamics,friedman2019integrable,alba2019operator,alba2020diffusion,klobas2021exact,klobas2021exactII,klobas2021entanglement}. Furthermore, an intriguing connection with the dynamics of $T\overline{T}$-deformed conformal field theories was recently established~\cite{medenjak2021ttbar,medenjak2021thermal}.

In particular, with respect to transport of quasiparticle excitations (or solitons), Rule 54 provides a model of a generic physical fluid with coexistence of ballistic (convective) and diffusive (conductive) transport. Remarkably, unlike typical Bethe-ansatz solvable systems, this model in many instances allows for fully \emph{closed-form} solutions despite being interacting and thus it allows the understanding of the aforementioned generic transport behaviour on a microscopic level. The purpose of this review is to provide a comprehensive overview of recent results on the Rule 54 model and discuss their comparison with simple predictions of hydrodynamic theory. 

\section{Definition of the model and summary of the results covered}
\begin{figure}
    \centering
    \includegraphics[width=\textwidth]{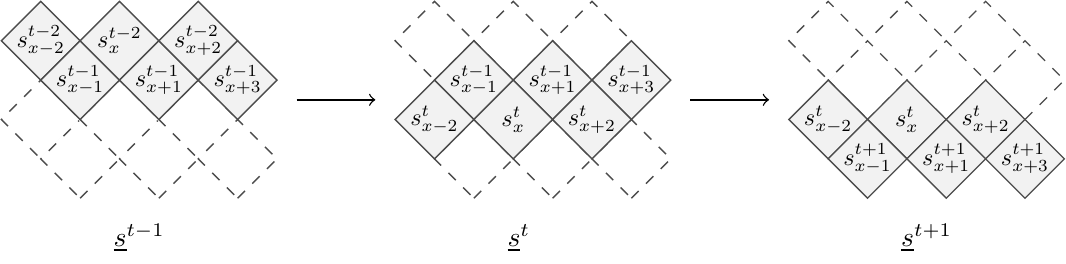}
    \caption{\label{fig:TEGeofig} Schematic representation of the time
    evolution. At time $t$, the sites with the same parity as $t$ get updated
    according to the local time-evolution rules~\eqref{eq:TErules}, while the
    rest stay the same.
    }
\end{figure}

We consider deterministic dynamics defined on a $1+1$ dimensional discrete
lattice with space-time points labelled by $(x,t)\in\mathbb Z^2$ and a field
variable $s_x^t$ taking only binary values $s_x^t\in\mathbb Z_2=\{0,1\}$. We
may restrict the dynamics only to a staggered (diamond, or light-cone)
sublattice of $\mathbb Z^2$ of points $(x,t)$ satisfying $x+t=0\pmod{2}$
and define
\begin{eqnarray}\label{eq:defGeometry1}
s^{t+1}_x = \chi(s^t_{x-1},s^{t-1}_x,s^t_{x+1})    
\end{eqnarray}
for some binary function $\chi:\mathbb Z_2^3\to\mathbb Z_2$. Specifically, the function with the binary code $54$ reads
\begin{eqnarray}\label{eq:TErules}
    s_2^{\prime}=\chi(s_1,s_2,s_3)\equiv s_1+s_2+s_3+s_1 s_3\pmod 2,
\end{eqnarray}
and can be graphically represented as
\begin{eqnarray}\fl \label{eq:TEDrules}
    \begin{tikzpicture}[scale=0.375,baseline={([yshift=-0.5ex]current bounding box.center)}]
        \rectangle{-1}{0}{0};
        \rectangle{0}{1}{0};
        \rectangle{1}{0}{0};
        \redrectangle{0}{-1}{0};
        \node at (-1,0) {\scalebox{1}{$s_1$}};
        \node at (0,1) {\scalebox{1}{$s_2$}};
        \node at (1,0) {\scalebox{1}{$s_3$}};
        \node at (0,-1) {\scalebox{1}{$s_2^{\prime}$}};
    \end{tikzpicture}\quad
    \begin{tikzpicture}[scale=0.375,baseline={([yshift=-0.5ex]current bounding box.center)}]
        \rectangle{-1}{0}{0};
        \rectangle{0}{1}{0};
        \rectangle{1}{0}{1};
        \redrectangle{0}{-1}{1};
    \end{tikzpicture}\quad
    \begin{tikzpicture}[scale=0.375,baseline={([yshift=-0.5ex]current bounding box.center)}]
        \rectangle{-1}{0}{0};
        \rectangle{0}{1}{1};
        \rectangle{1}{0}{0};
        \redrectangle{0}{-1}{1};
    \end{tikzpicture}\quad
    \begin{tikzpicture}[scale=0.375,baseline={([yshift=-0.5ex]current bounding box.center)}]
        \rectangle{-1}{0}{0};
        \rectangle{0}{1}{1};
        \rectangle{1}{0}{1};
        \redrectangle{0}{-1}{0};
    \end{tikzpicture}\quad
    \begin{tikzpicture}[scale=0.375,baseline={([yshift=-0.5ex]current bounding box.center)}]
        \rectangle{-1}{0}{1};
        \rectangle{0}{1}{0};
        \rectangle{1}{0}{0};
        \redrectangle{0}{-1}{1};
    \end{tikzpicture}\quad
    \begin{tikzpicture}[scale=0.375,baseline={([yshift=-0.5ex]current bounding box.center)}]
        \rectangle{-1}{0}{1};
        \rectangle{0}{1}{0};
        \rectangle{1}{0}{1};
        \redrectangle{0}{-1}{1};
    \end{tikzpicture}\quad
    \begin{tikzpicture}[scale=0.375,baseline={([yshift=-0.5ex]current bounding box.center)}]
        \rectangle{-1}{0}{1};
        \rectangle{0}{1}{1};
        \rectangle{1}{0}{0};
        \redrectangle{0}{-1}{0};
    \end{tikzpicture}\quad
    \begin{tikzpicture}[scale=0.375,baseline={([yshift=-0.5ex]current bounding box.center)}]
        \rectangle{-1}{0}{1};
        \rectangle{0}{1}{1};
        \rectangle{1}{0}{1};
        \redrectangle{0}{-1}{0};
    \end{tikzpicture}
\end{eqnarray}
where white/black boxes represent field values $0/1$, and red-framed boxes correspond to the updated values at the next time instance. As the graphical representation suggests, the middle site is changed whenever at least one of the neighbours is black. Dynamics is generated by two sequences of parallel updates (see Fig.~\ref{fig:TEGeofig}), which in two steps maps a {\em configuration} over a zig-zag saw $\underline{s}^{t-1}$ first to $\ul{s}^t$ and then to $\ul{s}^{t+1}$, where $\underline{s}^{t} \equiv (\ldots,s^{t-1}_{x-1},s^{t}_{x},s^{t-1}_{x+1},\ldots)$. An example of a typical trajectory can be seen in Fig.~\ref{MCsnap} (ignoring the boundaries for the time being). The dynamics can be interpreted as a gas of \emph{solitons} travelling with speeds $\pm 1$ which scatter pair-wise, while each scattering produces a time-lag (or shift) of a soliton of one step in time (or space). In particular, the diagrams in~\eqref{eq:TEDrules} can be interpreted as different possible local configurations of solitonic dynamics. The first diagram represents empty space, diagrams $2$ and $7$ show a left mover, diagrams $4$ and $5$ represent a right mover, while the rest of the diagrams correspond to different steps of the scattering event between two oppositely-moving solitons---the two solitons first merge, temporarily disappear and reappear immediately afterwards (given by diagrams $6$, $3$ and $8$ respectively).

The key problem discussed in this review is how to achieve the full understanding of equilibrium and nonequilibrium statistical physics of this model.  This goal can be pursued within two paradigms of statistical mechanics.  In the first, we consider the system defined on a finite lattice $\mathbb Z_n$, of even size $n$, and we couple the left and the right edge of the chain to stochastic baths of solitons. After removing the degrees of freedom of the reservoirs we end up with a perfect Markov chain model, where the cells in the bulk are updated deterministically, while the cells near the boundaries are evolved stochastically. The set of parameters characterizing the boundary driving uniquely determines the nonequilibrium stationary state (NESS) that the system approaches at long times. In Section~\ref{sect:BD} we show how to provide exactly solvable Markovian boundaries and how to construct a correlated NESS in terms of the so-called \emph{patch-state ansatz}. Additionally, we prove a general ergodicity theorem for the boundary driven setup, which guarantees the uniqueness and exponential relaxation to the steady state. In Section~\ref{sect:MPA} we then provide a solution to the steady state cancellation mechanism in terms of a simple cubic algebra whose representation yields a compact matrix product solution to the full steady state.  Remarkably, this form can be extended to subleading eigenvectors of the Markov propagator and allows us to obtain an essential part of its spectrum, which, in particular, gives access to the spectral gap characterising the relaxation to the NESS. Moreover, the same cubic algebra can be further generalised to generate an exact large deviation function describing fluctuations (in the steady state) of a large class of spatially inhomogeneous observables, as elaborated in Section~\ref{sect:largedev}.

The second main paradigm considers the dynamics of the thermodynamic states of the system defined on an infinite lattice, or equivalently, the dynamics of local observables of the large but finite system far from the boundaries.  In Section~\ref{sect:hydro} we provide a basic hydrodynamic description of the model using the two elementary local conserved charges: the densities of left and right movers. In Section~\ref{sec:TS} we provide a space-time dual description of the model. We start by considering an efficient matrix-product-state representation of stationary probability distributions of configurations in time, which determine all multi-time correlations of ultra-local observables, evaluated in equilibrium thermodynamic states. We then proceed by showing that RCA54 allows for a deterministic description also when considering the space-evolution of time-configurations. For facilitating computations we introduce efficient diagrammatics, akin to quantum circuit diagrams, which are used to compactly encode a diversity of rather formal algebraic relations. Arguably the strongest result, elaborated in Section~\ref{sect:timeMPA}, is the construction of an exact time-dependent matrix product ansatz (tMPA) for time evolution of arbitrary local observables. Although the dimension of tMPA is formally infinite, it effectively grows only as $\propto t^2$ when time evolution up to time $t$ is considered. The result allows for a number of explicit computations, such as the dynamical structure factor (two-point space-time correlation function) and the time-dependent density profiles following an inhomogeneous quench. Furthermore, it provides a formal bound on the rate of operator spreading for the quantum interpretation of the model.

\section{Boundary driven cellular automaton and general equilibrium states}
\label{sect:BD}

\begin{figure}
 \centering	
\vspace{-1mm}
\includegraphics[width=0.38\columnwidth]{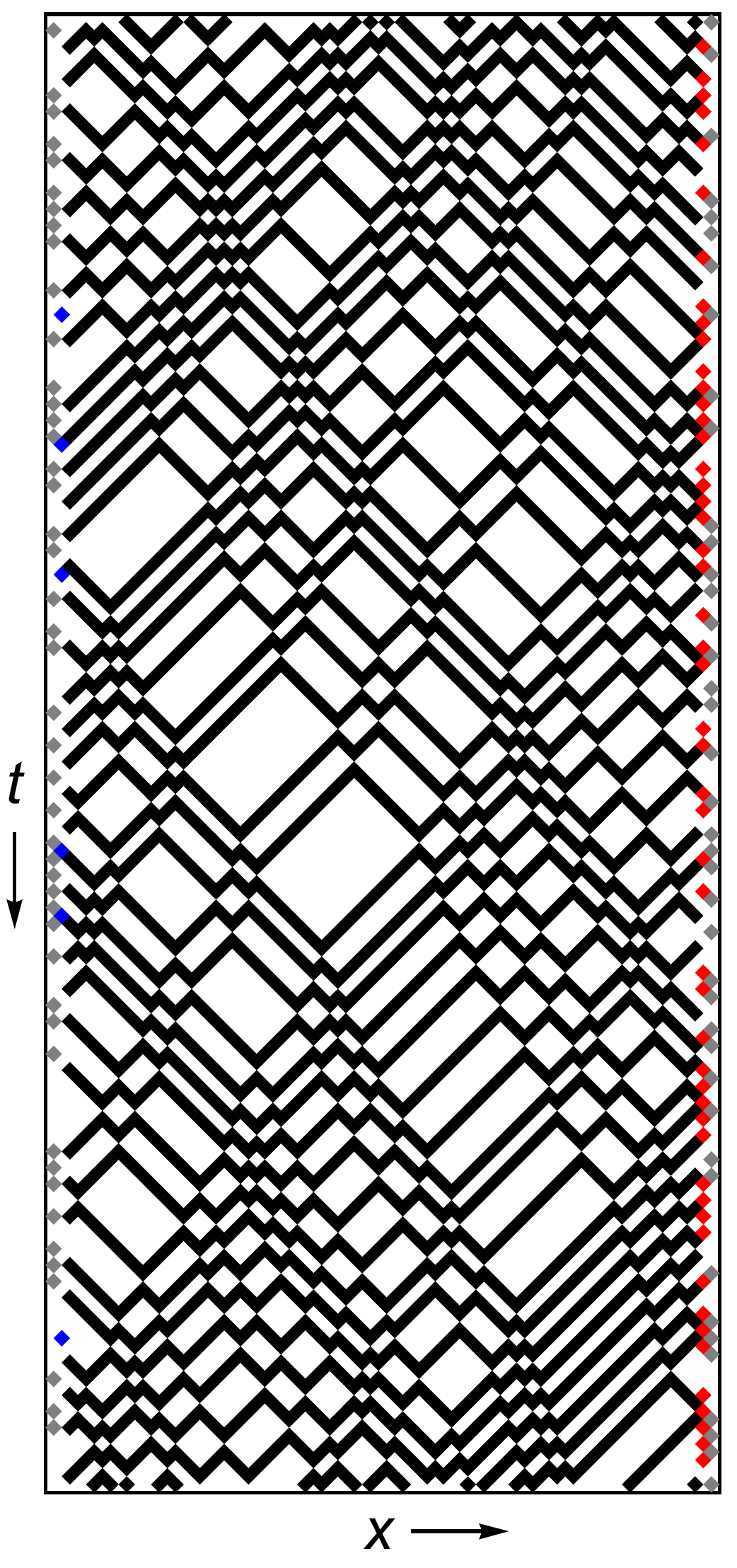}
\vspace{-1mm}
\caption{A snapshot of Monte Carlo dynamics of nonequilibrium stochastically boundary driven deterministic CA, Rule 54, for $n=80$, $\alpha=0.9,\beta=0.1,\gamma=0.4,\delta=0.6$
(Bernoulli driving \eqref{eq:bernoulli} with $\zeta=\eta=1/2$).
Time runs downwards. Grey squares denote occupied environmental cells which are generated by a Bernoulli shift with probability $1/2$, while blue (red) squares are occupied boundary cells determined via ultralocal Markov chains. Note that the right end is ``hotter'' than the left one and that the average (steady-state) soliton current flows to the left.}
\label{MCsnap}
\end{figure}

We begin by considering the model subject to stochastic boundary driving. The system is defined on an \emph{even} number of sites $n$ with instantaneous dynamical variables specified over a zig-zag saw ,
\begin{eqnarray}
    \un{s}=(s_1,s_2,s_3\ldots,s_n)\equiv (s^1_1,s^0_2,s^1_3\ldots,s^0_n)\in\mathcal{C}_n,
\end{eqnarray}
where $\mathcal{C}_n=\mathbb{Z}_2^{\times n}$ is the set of all configurations on the lattice of size $n$.  The \emph{macroscopic} state  of the system $\vec{p}$ is a probability distribution over the set of all configurations and can be interpreted as a $2^n$-dimensional vector
\begin{eqnarray}
\vec{p}=(p_{\ul{s}})_{\ul{s}\in\mathcal{C}_{n}}\in\left(\mathbb{R}^2\right)^{\otimes n},
\end{eqnarray}
with $p_{\ul{s}}\ge0$ denoting the probability of configuration $\ul{s}$. If dynamics were completely deterministic, such a probabilistic description of a finite system would be redundant. However, we allow for stochastic updates of boundary pairs of cells $(s_1,s_2)$, and $(s_{n-1},s_n)$ which cannot be consistently fixed by the deterministic rule~\eqref{eq:TErules}.  We make a minimal assumption that the stochastic updates of boundary cells are local and Markovian, i.e.\ the probability of jumps of $s_1$ (or $s_n$) can only depend on cells $s_1,s_2$ (or $s_{n-1},s_n$) in the immediate space-time neighbourhood.

We thus define the time evolution of the state vector $\vec{p}(t)$ starting from some initial state $\vec{p}(0)$ as,
\begin{eqnarray}
\mathbf{p}(t)=\U^t \mathbf{p}(0),
\label{eq:markov}
\end{eqnarray}
where the one-period propagator $\U$ ($2^n\times 2^n$ matrix)
is factored in two half-time steps, as shown in~Fig.~\ref{stochasticprop},
\begin{eqnarray}
\U=\Uo \Ue. \label{Ustoch}
\end{eqnarray}
Each step is now given in terms of parallel updates of even or odd sites
\begin{eqnarray}
&\Ue=U_{123}U_{345}\cdots U_{n-3\,n-2\,n-1} U^{\rm{R}}_{n-1\,n}, \label{Ue}\\
&\Uo= U^{\rm{L}}_{12} U_{234} U_{456}  \cdots U_{n-2\,n-1\,n}, \label{Uo}
\end{eqnarray}
where $U_{j-1\,j\,j+1}=\one_{2^{j-2}}\otimes U\otimes \one_{2^{n-j-1}}$, with $U$
being an $8\times 8$ matrix, acts non-trivially on a triple of adjacent sites
$(j-1,j,j+1)$ and only affects the cell at position $j$ depending on the values
of cells at positions $(j-1,j,j+1)$ according to the Rule 54. Specifically:
\be \label{eq:defU}
U_{(t,t',t''),(s,s',s'')} =
\delta_{t,s} \delta_{t',\chi(s,s',s'')} \delta_{t'',s''}.
\ee
The boundary updates
\begin{eqnarray}
U^{\rm{L}}_{12}=U^{\rm{L}} \otimes \one_{2^{n-2}}, \qquad
U^{\rm{R}}_{n-1\,n}= \one_{2^{n-2}} \otimes U^{\rm{R}},
\end{eqnarray}
are generated in terms of $4\times 4$ stochastic matrices acting on the left-most (right-most) two sites. As $U^{\rm{L,R}}$ are stochastic, i.e.\ their non-negative matrix elements in each column add to one, the full propagator $\U$ is a stochastic matrix conserving probability (see Fig.~\ref{stochasticprop}). 
\begin{figure}
 \centering	
\vspace{-1mm}
\includegraphics[width=0.45\columnwidth]{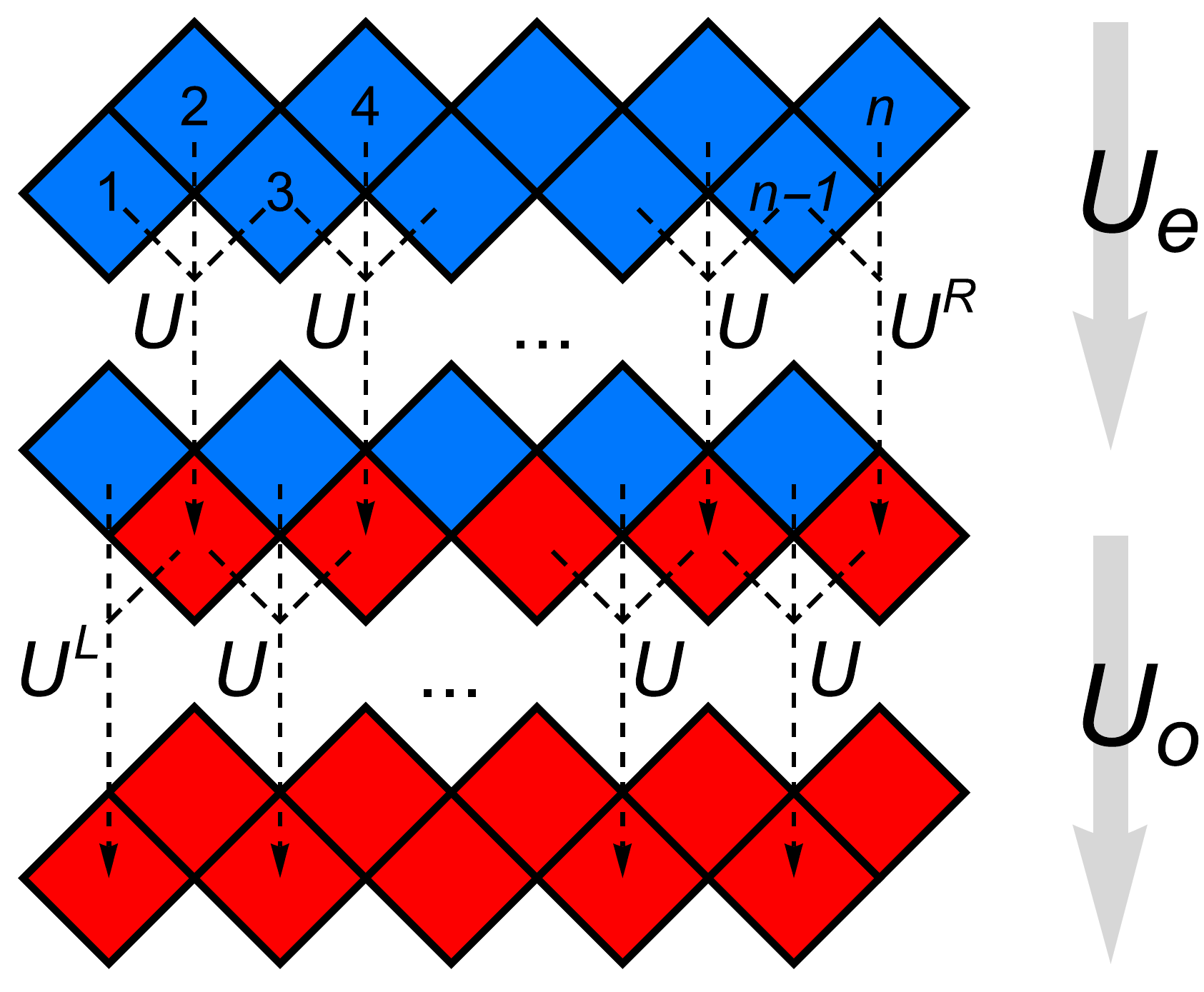}
\vspace{-1mm}
\caption{Illustration of the stochastic boundary driving (\ref{Ustoch}) with two site boundary stochastic maps $U^{\rm L,R}$ (see Eqs.~(\ref{Ue},\ref{Uo})). In blue/red we denote the sites before/after the update.}
\label{stochasticprop}
\end{figure}
The boundary matrices have to satisfy the following compatibility conditions
\begin{eqnarray}
[U^{\rm L}_{12},U_{234}]=0,\qquad
[U_{123},U^{\rm R}_{34}]=0,
 \label{boundcomm}
\end{eqnarray}
essentially implying that only the boundary cell may get stochastically updated, while the site next to it (i.e.\ $n-1$ in $U^{\mathrm{R}}_{n-1\,n}$ and $2$ in $U^{\mathrm{L}}_{1 2}$) acts as a control cell, analogously to the non-central cells in the bulk propagator.

Requiring that the many-body Markov chain process (\ref{eq:markov}) admits an exactly solvable nonequilibrium steady state (NESS) -- an eigenvector $\mathbf{p}_0$ of eigenvalue $1$, $\U \vec{p}_0 =\mathbf{p}_0$ -- puts additional constraints to the boundary operators $U^{\rm L},U^{\rm R}$. A complete classification of exactly solvable {\em local} (2-site) boundaries is composed of two multi-parametric families. The first is the so-called {\em Bernoulli driving} (originally proposed in \cite{prosen2016integrability} and generalised in \cite{inoue2018two}):
\begin{eqnarray}\fl
U^{\rm L}\! = \! 
\begin{bmatrix}
q^{\rm L}_1 & & 1-q^{\rm L}_2 & \cr
& \alpha & & 1-\beta \cr
1-q^{\rm L}_1 & & q^{\rm L}_2 & \cr
& 1-\alpha & & \beta 
\end{bmatrix}\!, \qquad
U^{\rm R}\! = \!
\begin{bmatrix}
q^{\rm R}_1 & 1-q^{\rm R}_2 & & \cr
1-q^{\rm R}_1  & q^{\rm R}_2 & & \cr
& & \gamma & 1-\delta \cr
& & 1-\gamma & \delta 
\end{bmatrix}\!, \label{eq:bernoulli}
\end{eqnarray}
where
\begin{eqnarray}
    \eqalign{
        &q^{\rm L}_1=\zeta + \alpha - 2\zeta\alpha,\qquad
        q^{\rm R}_1 = \eta + \gamma-2\eta\gamma, \\
    &q^{\rm L}_2=\zeta + \beta - 2\zeta\beta,\qquad
    q^{\rm R}_2 = \eta + \delta-2\eta\delta.}
\end{eqnarray}
Each bath is parametrised by a triple of independent parameters: $\zeta,\alpha,\beta\in[0,1]$ for the left boundary, and $\eta,\gamma,\delta\in[0,1]$ for the right boundary.  They admit a simple interpretation in terms of a composition of an ultralocal Markov chain (a $2\times 2$ stochastic matrix parametrised by probabilities $\alpha,\beta$ on the left and $\gamma,\delta$ on the right), and a local Rule 54 map centred around the same site, while the value of the imaginary cell inside the bath is given in terms of a Bernoulli process (with the probability $\zeta$ on the left and $\eta$ on the right).

Another distinct family is the {\em conditional driving}
\cite{prosen2017exact}:
\begin{eqnarray}\fl
U^{\rm{L}}\!=\!
    \begin{bmatrix}
 \alpha  &  & \alpha  &  \\
 0 & \beta  &  & \beta  \\
 1-\alpha  &  & 1-\alpha  &  \\
 0 & 1-\beta  &  & 1-\beta  \\
    \end{bmatrix}\!,\qquad
U^{\rm{R}}\!=\!
    \begin{bmatrix}
 \gamma  & \gamma  &  &  \\
 1-\gamma  & 1-\gamma  &  &  \\
  &  & \delta  & \delta  \\
  &  & 1-\delta  & 1-\delta  \\
    \end{bmatrix}\!,
\label{eq:conditional}
\end{eqnarray}
where $\alpha, \beta, \gamma, \delta \in [0,1]$ are some driving rates parametrising the left and the right bath. We refer to this as conditional driving, since in $U^{\rm{L}}_{12}$ ($U^{\rm{R}}_{n-1\,n}$) the probability of changing the site 1 depends \emph{only} on the state of the neighbouring site 2 (changing the site $n$ depends only on the state of the site $n-1$). For instance, if the site 2 is in state 0 then the site 1 will be stochastically set to state 0 with the rate $\alpha$ or to state 1 with the rate $1-\alpha$. On the other hand, if the site 2 is in the state 1, the site 1 will be set to state 0 or 1 with the rates $\beta$ or $1-\beta$, respectively. The analogous holds for $U^{\rm{R}}_{n-1\,n}$.

These two classes of boundary driving maintain the integrability of the model, which may be intuitively understood as a consequence of the fact that they (with some probabilities) create or destroy solitons at the boundaries. We note that the information about three-site subconfigurations is needed to completely characterize the local particle content, therefore a minimal matrix product ansatz providing such a state should admit a $3$-dimensional auxiliary space. This fact will be important in the sections below, when dealing with solutions to boundary driving.

\subsection{Holographic ergodicity}\label{subsec:holErg}
\begin{figure}
    \centering	
    \vspace{-1mm}
    \includegraphics[width=0.26\columnwidth]{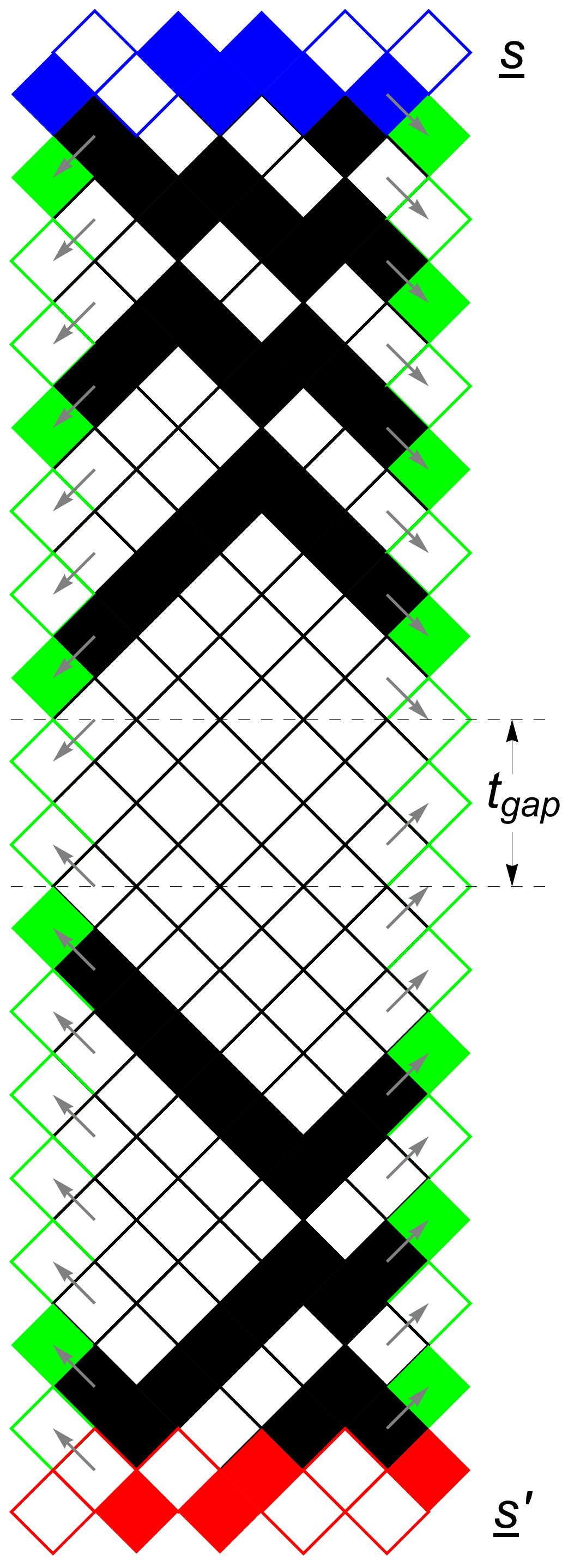}
    \vspace{-1mm}
    \caption{Illustration of the proof of irreducibility and aperiodicity of the Markov matrix $\U$. Blue and red configurations, $\un{s}=(1,0,0,1,1,1,1,0,1,0)$ and $\un{s}'=(0,0,1,0,1,1,0,0,0,1)$, are connected with the Markov-graph walk for generic probabilities $0<\alpha,\beta,\gamma,\delta < 1$ in at least $t_0=15$ time steps where the boundary cells are chosen as indicated by green cells (the boundary conditions are generated by causal/anti-causal absorbing boundaries in the upper/lower part of the walk). The values of the boundary cells are thus determined by copying the values of the near-by bulk cells in the direction of the grey arrows. Consequently $\left[\U^{t_0+t_{\rm gap}}\right]_{\un{s},\un{s}'} > 0$ for any $t_{\rm gap} \ge 0$ (and any other pair of initial/final configurations $\un{s},\un{s}'$ with possibly different $t_0$), which implies irreducibility and aperiodicity of $\U$.
    }
    \label{IrredChart}
\end{figure}
Establishing the existence of a unique NESS and relaxation towards it from an arbitrary initial probability state vector amounts to showing the following statement~\cite{prosen2016integrability}:
\begin{theorem}
The $2^n \times 2^n$ matrix $\U$, Eq.~\eqref{Ustoch}, is irreducible and aperiodic for generic values of driving parameters, more precisely, for an open set $0 < \alpha,\beta,\gamma,\delta < 1$ for conditional driving
\eqref{eq:conditional}, and $0 < \zeta,\alpha,\beta,\eta,\gamma,\delta < 1$ for Bernoulli driving \eqref{eq:bernoulli}.
\end{theorem}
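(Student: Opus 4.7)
The plan is to show that for any two configurations $\un{s},\un{s}'\in\mathcal{C}_n$ there exists an integer $t_0=O(n)$ such that $[\U^{t_0+t_{\rm gap}}]_{\un{s}',\un{s}}>0$ for every $t_{\rm gap}\ge 0$. The $t_{\rm gap}=0$ statement gives irreducibility directly, and the existence of two consecutive connecting times forces $\gcd=1$ and hence aperiodicity.

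First I would separate the deterministic from the stochastic ingredients. The bulk map $\chi$ is an involution in its middle argument, $\chi(s_1,\chi(s_1,s_2,s_3),s_3)=s_2$, so the staggered bulk evolution is reversible; once the values of the corner cells $s_1^\tau$ and $s_n^\tau$ produced by $U^{\rm L},U^{\rm R}$ at each time $\tau$ are fixed, the full space-time pattern is determined by $\un{s}$ (equivalently, by $\un{s}'$ read backwards). Consequently $[\U^T]_{\un{s}',\un{s}}$ is a sum, over all boundary-cell sequences compatible with both endpoints, of products of the boundary conditional probabilities. Inspection of \eqref{eq:bernoulli} and \eqref{eq:conditional} shows that on the specified open parameter sets each corner cell can be forced to either value $0$ or $1$ with strictly positive probability, regardless of the neighbouring control cell. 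It therefore suffices to exhibit one admissible sequence of corner-cell values realising the desired $\un{s}\to\un{s}'$.

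For this I would use the light-cone construction illustrated in Fig.~\ref{IrredChart}. Split the window of $t_0$ steps into an upper causal half and a lower anti-causal half, meeting at an interior gluing slice. In the upper half, run the bulk forward from $\un{s}$; since the information speed is one, after roughly $n/2$ steps all initial data has exited through the boundaries, and the interior has been overwritten by the chosen boundary inputs, which we drive to the vacuum $(0,0,\ldots,0)$. In the lower half, run the bulk backwards from $\un{s}'$ (using reversibility) and choose the boundary inputs so that, reading time backwards, the state emerges from the same vacuum. Gluing the two halves on the vacuum slice produces an admissible trajectory, and because the vacuum is a fixed point of the bulk rule, inserting any $t_{\rm gap}\ge 0$ extra steps in the middle (with both corner cells held at $0$, a positive-probability event in the open regime) leaves the trajectory admissible, yielding $[\U^{t_0+t_{\rm gap}}]_{\un{s}',\un{s}}>0$.

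The main obstacle, and where I would spend most of the care, is verifying that the causal and anti-causal cones can genuinely be stitched together without a hidden bulk obstruction, such as a conservation law that would forbid reaching the vacuum. Two features rule this out. Firstly, the RCA54 update does not conserve soliton number: solitons may be absorbed or emitted one at a time at the boundaries, so any interior pattern can be erased from the outside in by a suitable corner-cell sequence. Secondly, in the generic regime the particular boundary probabilities invoked (notably the ones sending $(0,0)\mapsto(0,0)$) are bounded below by products of the open-interval parameters and are thus strictly positive. Once this bookkeeping is carried out carefully, the product of conditional probabilities along the constructed trajectory is manifestly nonzero, establishing both irreducibility and aperiodicity.
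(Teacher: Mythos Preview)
Your proposal is correct and follows essentially the same approach as the paper: drive the configuration to the vacuum via absorbing boundary choices (the paper makes this explicit with the rule $s_1(t)=s_2(t-1)$, $s_n(t)=s_{n-1}(t-1)$), use time-reversibility to build the second half of the walk from $\un{s}'$ back to vacuum, and insert an arbitrary number of vacuum steps in between to establish aperiodicity. The only imprecision is your $O(n)$ estimate for the clearing time---because scattering delays solitons, the paper gives the safer bound $t_+<n^2$---but this does not affect the argument, since any finite $t_0$ suffices.
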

\begin{proof}
    According to the Perron-Frobenius theorem~\cite{perron_frobenius}, a finite, non-negative matrix $\U$ is \emph{irreducible}, if for any pair of configurations $\ul{s},\ul{s}^{\prime}\in\mathcal{C}_n$ there exists a natural number $t_0\in\mathbb{N}$ such that the corresponding matrix element of the $t_0$-th power is nonzero,
\begin{eqnarray}
    \left[\U^{t_0}\right]_{\ul{s}^{\prime},\ul{s}}>0.
\end{eqnarray}
An irreducible matrix $\U$ is \emph{aperiodic} if for any configuration $\ul{s}\in\mathcal{C}_n$, the greatest common divisor of the set of recurrence times $\{t_j\}$ is $1$,
\begin{eqnarray}
    \gcd\left(\{t_j\in\mathbb{N};\ \left[\U^{t_j}\right]_{\ul{s},\ul{s}}>0\}\right)=1.
\end{eqnarray}

Let us first show irreducibility. By the definition of boundary propagators (cf.~\eqref{eq:bernoulli} and~\eqref{eq:conditional}), the Markov matrix $\U$ connects each configuration $\un{s}$ to exactly 4 other configurations $\un{s}'$, which exhibit all the possible configurations of boundary bits, $(s'_1,s'_n)\in\{(0,0),(0,1),(1,0),(1,1)\}$. This holds for all values of parameters $\alpha,\beta,\gamma,\delta$ (and $\zeta,\eta$), except for the marginal case when some of the parameters are equal to $0$ or $1$, but this option is excluded by the assumption of the theorem.  Let us now take a sufficiently large positive integer $t_0$, to be determined below, and fix
$\un{s}(t_0)\equiv \un{s}',\un{s}(0)\equiv\un{s}$.
We shall then construct a walk 
\be
\un{s}(0)\to\un{s}(1)\to\un{s}(2)\to \cdots \to\un{s}(t_0),
\ee 
    which can be understood as a path through the Markov graph defined by positive elements of $\U$ that connects $\un{s}$ and $\un{s}'$ in $t_0$ steps and implies $\left[\U^{t_0}\right]_{\un{s}',\un{s}} > 0$ (see Fig.~\ref{IrredChart} for a `self-contained' graphic illustration of the idea of proof). We are still free to choose the values of the boundary cells $s_{1,n}(t)$ along the walk $t \in\{1,2,\ldots,t_0-1\}$ apart from the ends. For the first part of the walk $t = 1,2\ldots t_+$, up to some $t_+<t_0$, we are fixing them with the rule
\be
s_{1}(t) = s_{2}(t-1),\quad s_{n}(t) = s_{n-1}(t-1).
\label{eq:absorb}
\ee
    The evolution of the interior values of the cells $s_x(t)$ for $1 < x < n$ and $t \le t_+$ is then completely specified by the deterministic RCA54, while (\ref{eq:absorb}) provide the {\em causal absorbing} boundary conditions. Indeed, each time the boundary cell, say $x=1$, gets occupied, $s_{1}(t)=1$, the soliton is absorbed (see Fig.~\ref{IrredChart}).  As the solitons only move ballistically (at speed 1) and scatter pairwise (with time-lag 1), it is clear that a finite time scale $t_+ \in\mathbb N$ exists, surely smaller than $n^2$, after which all the solitons will be absorbed and we end up in a vacuum configuration $\un{s}(t_+) = (0,0\ldots,0)$. 

For the rest of the walk $t\in\{t_++1,\ldots t_0\}$ we need to show that alternative boundary rules exists, which create the configuration $\un{s}'$ out of the vacuum in another $t_-=t_0-t_+$ steps. This is easily achieved by using {\em time-reversibility} of RCA54 and arguing that a vacuum configuration is again generated from $\un{s}'$ in some $t_-$ steps if the {\em anti-causal absorbing} boundary conditions are set, which are equivalent to \eqref{eq:absorb} when the time runs backwards,
\be
s_{1}(t) = s_{2}(t+1),\; s_{n}(t) = s_{n-1}(t+1),\quad {\rm for}\quad t = t_0-1,\ldots,t_0 - t_-\,.
\label{eq:absorb2}
\ee
The entire walk then connects $\un{s}$ to $\un{s}'$ in $t_0=t_++t_-$ steps and implies $\left[\U^{t_0}\right]_{\un{s}',\un{s}} > 0$, for an arbitrary pair $\un{s},\un{s}' \in {\mathcal{C}_n}$, with the minimal possible integer $t_0$ in general depending on the choice of $\un{s},\un{s}'$. This proves irreducibility of \eqref{Ustoch}.

Considering $\un{s}'=\un{s}$, we have just shown that $\left[\U^{t_0}\right]_{\un{s},\un{s}} > 0$ for some $t_0=t_{+}+t_{-}$ depending on $\un{s}$.  However, in between annihilating the configuration $\un{s}$ in $t_+$ time steps and then creating it again in another $t_-$ steps\footnote{Note that in general $t_-\neq t_+$ as a generic configuration $\un{s}$ is not time-reversal invariant.}, we can stay in the vacuum state for an arbitrary additional number of steps $t_{\rm gap} \ge 0$. In this way the walk is increased by a segment of $t_{\rm gap}$ intermediate vacuum configurations, and we still have a non-zero matrix element,
\be
    \left[\U^{t_0 +t_{\rm gap}}\right]_{\un{s},\un{s}} > 0.
\ee
The greatest common divisor of the set $\{t_0+t_{\rm gap};\ t_{\rm gap}\in\mathbb Z_+\}$ is clearly 1, so we have shown aperiodicity.
\end{proof}

In conclusion, the Perron-Frobenius theorem \cite{perron_frobenius} guarantees that the NESS probability state vector $\mathbf{p}_0$, satisfying the fixed point condition $\U\vec{p}_{0}=\vec{p}_{0}$, is \emph{unique} and all other eigenvalues of $\U$ lie strictly inside the unit circle.  As a consequence, the Markov dynamics \eqref{eq:markov} is {\em ergodic and mixing} and an arbitrary initial probability state vector $\mathbf{p}(0)$ converges to NESS exponentially fast in $t$. %Note, however, that the proof is strictly valid only for finite system sizes. In fact, as we will see later, in the thermodynamic limit we can have additional eigenvalues that converge to the unit circle and allow for possibly non-mixing dynamics or non-stationarity. 

\subsection{NESS: Patch State Ansatz} \label{sec:PSA}

\begin{figure}
 \centering	
\vspace{-1mm}
\includegraphics[width=0.5\columnwidth]{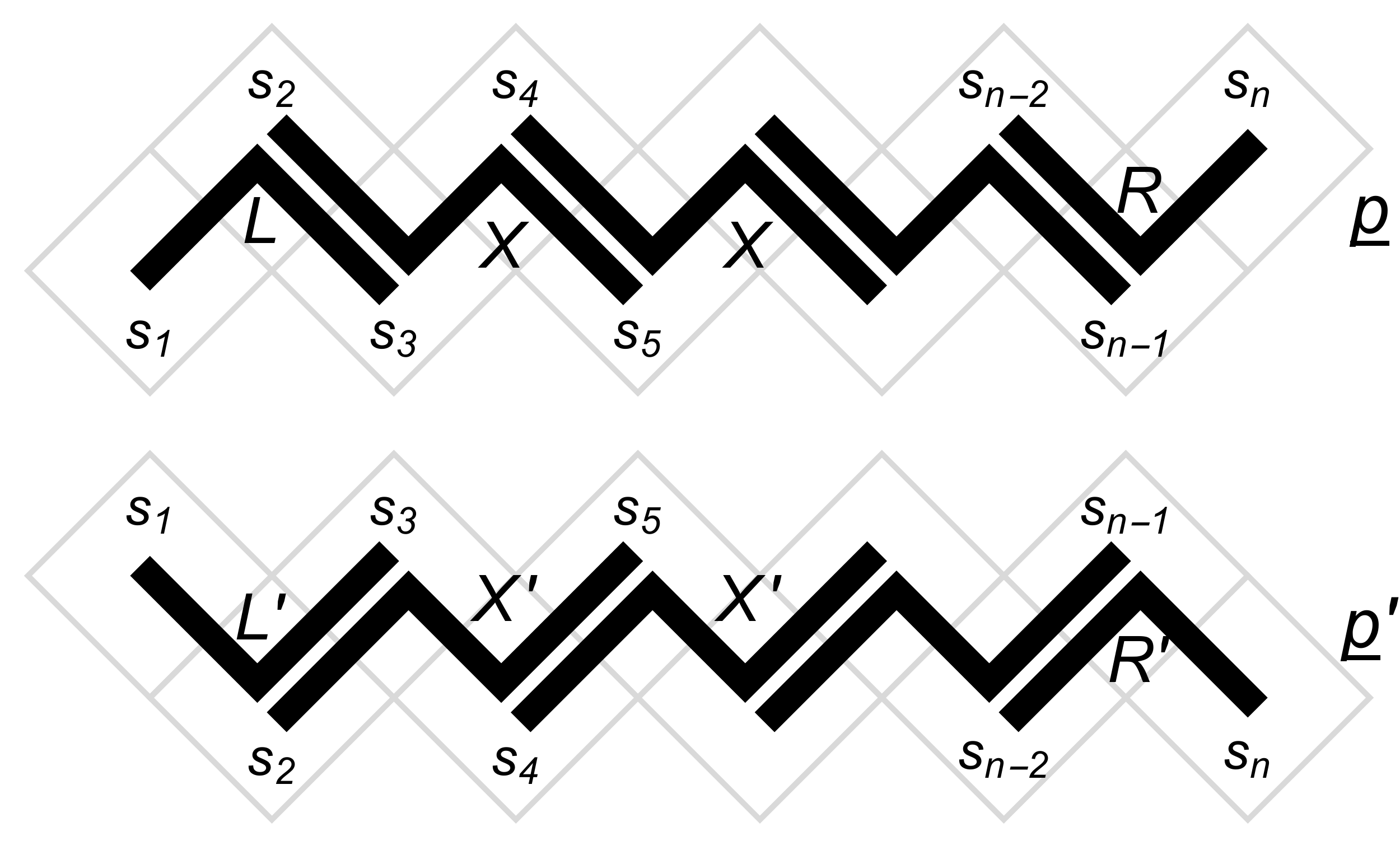}
\vspace{-1mm}
\caption{Illustration of the patch state ansatz \eqref{eq:patch} for NESS probability state vectors $\mathbf{p},\mathbf{p}'$.}
\label{fig:PSAscheme}
\end{figure}

The NESS probability state vector $\mathbf{p}=\mathbf{p}_0$ can be split into even and odd time slice, satisfying a pair of fixed point equations
\begin{eqnarray}
    \mathbf{p}'=\Ue\mathbf{p},\qquad
    \mathbf{p}=\Uo\mathbf{p}'.
    \label{eq:fp}
\end{eqnarray}
We shall now explicitly construct the probability state vectors $\mathbf{p}$ and $\mathbf{p}'$, by solving Eq.~\eqref{eq:fp} in terms of a simple ansatz,  which we term a {\em patch state ansatz} (PSA) [illustrated in Fig.~\ref{fig:PSAscheme}].

For either of the two boundary driving families (\ref{eq:bernoulli},\ref{eq:conditional}), the NESS solution $\mathbf{p},\mathbf{p}'\in\mathbb{R}^{2^{n}}$ of the fixed point condition (\ref{eq:fp}) can be written in the form
\begin{eqnarray}
    \eqalign{
  p_{s_1,s_2,\ldots,s_n} = L_{s_1 s_2 s_3} X_{s_2 s_3 s_4 s_5} X_{s_4 s_5 s_6 s_7}  \cdots X_{s_{n-4} s_{n-3} s_{n-2} s_{n-1}} R_{s_{n-2} s_{n-1} s_n}, \\
    p'_{s_1,s_2,\ldots,s_n} = L'_{s_1 s_2 s_3} X'_{s_2 s_3 s_4 s_5} X'_{s_4 s_5 s_6 s_7} \cdots X'_{s_{n-4} s_{n-3} s_{n-2} s_{n-1}} R'_{s_{n-2} s_{n-1}s_n},}
\label{eq:patch}
\end{eqnarray}
for some rank-4 and rank-3 tensors of strictly positive components $X_{ss'uu'}$, $X'_{ss'uu'}$, $L_{suu'}$, $L'_{suu'}$, $R_{ss'u}$, $R'_{ss'u}$, with binary indices $s,s',u,u'\in\{0,1\}$. To uniquely determine the algebraic expressions for these tensors in terms of the parameters of the model, one has to plug the ansatz~\eqref{eq:patch} into the NESS condition~\eqref{eq:fp}.

First we find a minimal sufficient set of equations that the tensors have to satisfy, by fixing the normalization and taking into account the gauge symmetry of the ansatz. The normalization of the PSA \eqref{eq:patch} can be chosen such that 
\be
X_{0000}=X'_{0000}=1,\quad L_{000}=R'_{000}=1.
\ee
Clearly $X_{0000}=X'_{0000}$, otherwise the probabilities of the vacuum configurations
$p_{00\ldots 0}$ and $p'_{00\ldots0}$ would scale differently with $n$ which is not possible since they are directly connected by both even and odd propagators, i.e.\ 
$\left[{\Uo}\right]_{(00\ldots 0)(00\ldots0)},\left[\Ue\right]_{(00\ldots 0)(00\ldots0)}>0$.
%$U_{\rm o}$ and $U_{\rm e}$ directly connect $(0,0,\ldots,0,0)$ only with configurations $(s_1,0,\ldots,0,s_n)$.

Let us now assume the ansatz (\ref{eq:patch}) and write all components of  Eqs.~(\ref{eq:fp}),
$\left[\Ue\vec{p}-\vec{p}'\right]_{\un{s}} = \left[\Uo\vec{p}'-\vec{p}\right]_{\un{s}}=0$,
pertaining to 4-cluster configurations in the bulk of the form\footnote{Symbol $0^{\{k\}}$ denotes $0$ repeated $k$ times.} $\un{s} = (0^{\{ 2k+1\}},s,s',u,u',0^{\{n-5-2k\}})$, for $k=0,1,\ldots,m-3$, which results in the following two \emph{bulk} equations:
\begin{eqnarray}
    \eqalign{
X'_{0 0 s s'} X'_{s s' u u'} X'_{u u' 0 0} X'_{0000} = \label{eq:bulk1and2}\\
    \qquad X_{0 0 \chi(0 s s') s'} X_{\chi(0 s s') s' \chi(s' u u') u'} X_{\chi(s' u u') u' \chi(u' 0 0) 0} X_{\chi(u' 0 0) 000},\\
X_{0000} X_{0 0 s s'} X_{s s' u u'} X_{u u' 0 0} = \\
    \qquad X'_{000\chi(0 0 s)} X'_{0 \chi(0 0 s) s \chi(s s' u)} X'_{s \chi(s s' u) u \chi(u u' 0)} X'_{u \chi(u u' 0) 0 0}.}
\end{eqnarray}
Similarly, considering $3$-cluster configurations near each boundary, $\un{s}=(v',s,s',0^{\{n-3\}})$ and $\un{s}=(0^{\{n-3\}},s,s',u)$, we obtain a set of $4$ \emph{boundary} equations,
\begin{eqnarray}
\eqalign{
L'_{v' s s'}X'_{s s' 0 0}X'_{0000}R'_{000} = 
L_{v' \chi(s' s s') s'}X_{\chi(v' s s') s' \chi(s' 0 0) 0}X_{\chi(s' 0 0) 000}\frac{R_{000}+R_{001}}{2}, \\
L'_{000}X'_{0 0 s s'} R'_{s s' u} = 
L_{000}\sum_{t',t} P^{\rm R}_{(s',u),(t',t)} X_{00 \chi(0 s t') t'} R_{\chi(0 s t') t' t},\\
L_{v' s s'} X_{s s' 0 0}R_{000} =
 \sum_{t',t} P^{\rm L}_{(v',s),(t',t)} L'_{t' t \chi(t s' 0)} X'_{t \chi(t s' 0) 0 0}R'_{000},\\
L_{000}X_{0000}X_{0 0 s s'} R_{s s' u} = 
    \frac{L'_{000}+L'_{100}}{2}X'_{000 \chi(0 0 s)}X'_{0 \chi(0 0 s) s \chi(s s' u)} R'_{s \chi(s s' u) u}\, . \label{beqs}}
\end{eqnarray}
%\begin{eqnarray}
%&&  X'_{0 0 s s'} X'_{s s' u u'} X'_{u u' 0 0} X'_{0000} = \label{eq:bulk1}\\
%&& \qquad X_{0 0 \chi(0 s s') s'} X_{\chi(0 s s') s' \chi(s' u u') u'} X_{\chi(s' u u') u' \chi(u' 0 0) 0} X_{\chi(u' 0 0) 000},\nonumber\\
%&& X_{0000} X_{0 0 s s'} X_{s s' u u'} X_{u u' 0 0} = \nonumber \\
%&& \qquad X'_{000\chi(0 0 s)} X'_{0 \chi(0 0 s) s \chi(s s' u)} X'_{s \chi(s s' u) u \chi(u u' 0)} X'_{u \chi(u u' 0) 0 0},\label{eq:bulk2}\\
%&& L'_{v' s s'}X'_{s s' 0 0}X'_{0000}R'_{000} = 
%L_{v' \chi(s' s s') s'}X_{\chi(v' s s') s' \chi(s' 0 0) 0}X_{\chi(s' 0 0) 000}\frac{R_{000}+R_{001}}{2}, \nonumber \\
%&& L'_{000}X'_{0 0 s s'} R'_{s s' u} = 
%L_{000}\sum_{t',t} P^{\rm R}_{(s',u),(t',t)} X_{00 \chi(0 s t') t'} R_{\chi(0 s t') t' t}, \nonumber\\
%&&L_{v' s s'} X_{s s' 0 0}R_{000} =
% \sum_{t',t} P^{\rm L}_{(v',s),(t',t)} L'_{t' t \chi(t s' 0)} X'_{t \chi(t s' 0) 0 0}R'_{000},\nonumber\\
%&&L_{000}X_{0000}X_{0 0 s s'} R_{s s' u} = 
%\frac{L'_{000}+L'_{100}}{2}X'_{000 \chi(0 0 s)}X'_{0 \chi(0 0 s) s \chi(s s' u)} R'_{s \chi(s s' u) u}\, . \label{beqs}
%\end{eqnarray}
The total number of $2\times 16+4\times 8 - 4=60$ unknowns can be further reduced by exploiting the following {\em gauge symmetry} 
\begin{eqnarray}
    \eqalign{
 X_{s s' t t'}  \longrightarrow f_{s s'} X_{s s' t t'} f^{-1}_{t t'}, \\
L_{s' t t'}  \longrightarrow L_{s' t t'} f^{-1}_{t t'},  \\
    R_{s s' t}  \longrightarrow f_{s s'} R_{s s' t},}\qquad
\eqalign{
X'_{s s' t t'}  \longrightarrow g_{s s'} X'_{s s' t t'} g^{-1}_{t t'}, \\
L'_{s' t t'}  \longrightarrow L'_{s' t t'} g^{-1}_{t t'},  \\
    R'_{s s' t}  \longrightarrow g_{s s'} R'_{s s' t},}
    \label{gauge}
\end{eqnarray}
which conserves the patch ansatz (\ref{eq:patch}), as well as the defining equations (\ref{eq:bulk1and2},\ref{beqs}) for arbitrary nonzero gauge `fields' $f_{s s'}, g_{s s'}$. 

While a detailed analysis and explicit expressions for the PSA tensors in terms of boundary parameters can be found in references \cite{prosen2016integrability,inoue2018two}, here we only discuss a generic solution form of the bulk part of the equations. Specifically, the bulk equations \eqref{eq:bulk1and2} can be solved independently (before incorporating boundary conditions), and the solution can be parametrised uniquely -- up to a choice of gauge (\ref{gauge}) --
in terms of two free (spectral) parameters $\xi,\omega$
\begin{eqnarray}\fl
X_{s s' t t'} = T_{(ss'),(tt')}(\xi,\omega),\quad
X'_{s s' t t'} = T_{(ss'),(tt')}(\omega,\xi),\quad
    T(\xi,\omega)\!=\!
\begin{bmatrix}
    1 & 1 & \xi & 1 \\
    \xi  \omega  & \xi  \omega  & 1 & \omega  \\
    \omega  &  \omega  & \xi  \omega  & \omega \\
    \xi  & \xi  & \xi & \xi  \omega
\end{bmatrix}\! .
\label{eq:tm}
\end{eqnarray}
The boundary equations \eqref{beqs} then fix the spectral parameters $\xi,\omega$ as functions of the boundary driving parameters $\alpha,\beta,\gamma,\delta$ (and $\xi,\eta$) [see Section~\ref{sect:MPA}]. The remarkable fact is that the solutions do not explicitly depend on the system size $n$, hence it is clear that such a NESS can only describe ballistic transport (i.e.\ net soliton current independent of $n$).  

\subsection{Conserved charges}\label{subsec:conservedCharges}
The $4\times 4$ matrix $T(\xi,\omega)$, defined in Eq.~(\ref{eq:tm}), can be interpreted as a two-parametric transfer matrix generating the local charges of the RCA54 model. To simplify the discussion, let us assume periodic boundary conditions, $n+1\equiv 1$, and consider a steady state vector $\vec{p}(\xi,\omega)$ whose components are given in terms of the transfer matrix as,
%we assume periodic boundary conditions in this section, $n+1\equiv 1$. We can then write a $2^n$ dimensional vector $\mathbf{p}(\xi,\omega)$:
\begin{eqnarray}\fl \label{eq:GibbsStatePSA}
p_{s_1,s_2,\ldots,s_n}(\xi,\omega) = 
T_{(s_1,s_2),(s_3,s_4)}(\xi,\omega)
T_{(s_3,s_4),(s_5,s_6)}(\xi,\omega)\cdots
T_{(s_{n-1},s_n),(s_1,s_2)}(\xi,\omega).
\end{eqnarray}
For any pair of positive real parameters $\xi,\omega > 0$,
$\mathbf{p}(\xi,\omega)$ represents the statistical ensemble -- state that is invariant under time translation (as well as translationally invariant in space)
\begin{eqnarray}
    \Ue \vec{p}(\xi,\omega) = 
    \mathbf{p}(\omega,\xi),\mkern30mu
     \Uo \mathbf{p}(\omega,\xi) = 
    \mathbf{p}(\xi,\omega),\mkern30mu
     \U  \mathbf{p}(\xi,\omega) = 
    \mathbf{p}(\xi,\omega),
\end{eqnarray}
whose partition sum is given simply in terms of powers of the transfer matrix as
\begin{eqnarray}
 \sum_{\un{s}} p_{\un{s}}(\xi,\omega) = \tr [T(\xi,\omega)]^{n/2}.
\end{eqnarray}
Parameters $\log\xi$, $\log\omega$ can be interpreted as chemical potentials corresponding to the conserved numbers of left and right movers, $N_{-}$ and $N_{+}$,
\begin{eqnarray}
  p_{\un{s}}(\xi,\omega) \propto \xi^{N_{-}(\un{s})}
  \omega^{N_{+}(\un{s})}.
\end{eqnarray}
Specifically, the latter can be generated directly from the logarithmic derivatives of the transfer matrix
\begin{eqnarray}\label{eq:numParticles}
    \eqalign{
    N_{\rm +}(\un{s}) = \frac{{\rm d}}{{\rm d}\log \xi} 
    \log \prod_{j=1}^{n/2}
    T_{(s_{2j-1},s_{2j}),(s_{2j+1},s_{2j+2})}(\xi,\omega)
    \Big\vert_{\xi=\omega=1},\\
    N_{\rm -}(\un{s}) = \frac{{\rm d}}{{\rm d}\log \omega} 
    \log \prod_{j=1}^{n/2} T_{(s_{2j-1},s_{2j}),(s_{2j+1},s_{2j+2})}(\xi,\omega)\Big\vert_{\xi=\omega=1},}
\end{eqnarray}
which can clearly be written as extensive sums of local densities
\begin{eqnarray}\label{eq:localDensities}
    N_{\pm} = \sum_{j=1}^{n/2} n^j_{\pm}, \qquad
    \eqalign{
    n^j_{+}(\un{s}) =
    \frac{{\rm d}}{{\rm d}\xi}
    T_{(s_{2j-1},s_{2j}),(s_{2j+1},s_{2j+2})}
    \Big\vert_{\xi=\omega=1}, \\
    n^j_{-}(\un{s}) =
    \frac{{\rm d}}{{\rm d}\omega}
    T_{(s_{2j-1},s_{2j}),(s_{2j+1},s_{2j+2})}
    \Big\vert_{\xi=\omega=1}.}
\end{eqnarray}
These are the elementary local charges of RCA54 and will be discussed later in the context of hydrodynamics of the model. We note that $N_{\pm}$ by no means represent a complete set of local charges. Specifically, searching for all local charges with densities of support size $\ell > 3$, we find of the order of Fibonacci number $F_\ell=F_{\ell-1}+F_{\ell-2}$ of trivially conserved charges corresponding to $F_\ell$ non-interacting unidirectional soliton configurations. See the discussion in~\cite{klobas2020exactPhD} for the details.

\section{Matrix product ansatz and Markovian excitations} \label{sect:MPA}
In this section we will recast and generalise the exact solution for the NESS in terms of
a more standard \emph{matrix product ansatz} (MPA).
%using the concept of the auxiliary space (representation space of objects in terms of products of which we encode the probabilities $p_{\un{s}}$.
From now on we fix the following convention: notation for row/column vectors in the auxiliary space $\mathcal{V}_a$ will be Dirac bras/kets. Quantities that are vectors in physical space will be denoted by bold-face Roman letters as before, e.g.\ the probability state vector $\mathbf{p}(t)$. The numeral subscript of an operator or vector in the physical space is the site position in the tensor product physical space $(\mathbb{R}^2)^{\otimes n}$ on which it acts nontrivially. The physical space components will be labelled by a binary index and written in corresponding non-bold font, e.g.\ $p_{s_1,\ldots,s_n}$. Matrices (operators that act nontrivially either in auxiliary or physical space) will be denoted  by capital Roman letters.

We are interested in solving the eigenvalue problem for the Markov propagator,
\begin{eqnarray}
\U \mathbf{p}=\Lambda \mathbf{p}, \label{eigeneq}
\end{eqnarray}
which can be conveniently split into two coupled linear equations for the even and odd half-steps of the period of time propagation (generalising \eqref{eq:fp}),
\be
    \Ue \vec{p}=\Lambda_{\rm{R}} \vec{p'},
    \qquad \Uo \vec{p'}=\Lambda_{\rm{L}} \vec{p},
    \qquad \Lambda=\Lambda_{\rm{R}} \Lambda_{\rm{L}}. \label{eigensplit}
\ee
Once we solve this, we have access to the full dynamics of the probabilities at time $t$ in terms of eigenvalues $\Lambda_j$ ($\Lambda_0=1$) and the corresponding eigenvectors $\mathbf{p}_j$ (assuming that $\U$ is not defective),
\be
\mathbf{p}(t) = c_0 \mathbf{p}_0 + \sum_{j=1}^{2^n-1} c_j \Lambda_j^t \mathbf{p}_j,
    \label{timeeigvecs}
\ee
where $c_j$ are constants given by the initial probability distribution $\mathbf{p}(0)$. Using this labelling, the eigenvector $\vec{p}_0$ denotes the NESS as it does not decay in time, while the components along the rest of the eigenvectors $\vec{p}_j$, $j\ge 1$, decay exponentially with the rate $-\log\left|\Lambda_j\right|$. If all multiplicative rates are bounded away from the unit circle $\left|\Lambda_j\right|<1$, any initial state $\vec{p}(0)$ will relax to the NESS. In principle, one may find eigenvalues that lie on the unit circle, but are not $1$. These would correspond to persistently oscillating eigenvectors analogous to the ones for quantum Markov processes \cite{Buca2019nonstationary,Baumgartner_2008,Albert,wolf2012quantum}.
In our case, however, their existence is prohibited by the Perron-Frobenius theorem (see the discussion in Subsection~\ref{subsec:holErg}).

\subsection{NESS: Matrix Product Ansatz}\label{sec:NESS_mpa}
Following \cite{buca2019exact,prosen2017exact} we take a \emph{staggered} matrix product ansatz (MPA) for the NESS in the form,
\begin{eqnarray}
\mathbf{p}=\bra*{\mathbf{l}_1}\mmu{W}{2}\mmup{W}{3} \mmu{W}{4} \mmup{W}{5}\cdots\mmup{W}{n-3}\mmu{W}{n-2}\ket{\mathbf{r}_{n-1,n}}, \label{pness} \\
\mathbf{p'}=\bra*{\mathbf{l}'_{12}}\mmu{W}{3}\mmup{W}4\mmu{W}{5}\mmup{W}{6}\cdots\mmup{W}{n-2}\mmu{W}{n-1}\ket{\mathbf{r}'_{n}}. \label{ppness}
\end{eqnarray}
We will use this as an ansatz to solve the split eigenvalue equation \eqref{eigensplit}. We require that the operators $\mmu{W}{s},\mmup{W}{s}$ and some operator $S$ satisfy the following \emph{bulk} algebraic relation,
\begin{eqnarray}
U_{123} \mathbf{W}_1 S\, \mathbf{W}_2 \mathbf{W}'_3=\mathbf{W}_1 \mathbf{W}'_2 \mathbf{W}_3 S.  \label{bulk1W} 
\end{eqnarray}
Component-wise the above vector equation reads 
\be
W_{s} S W_{\chi(s,s',s'')} W'_{s''} = W_s W'_{s'} W_{s''} S,
\quad s,s',s''\in\{0,1\}.
\ee
A representation of this algebra can be found on a 3-dimensional auxiliary vector space $\mathcal{V}_a =\mathbb{C}^3$ in terms of matrices 
\begin{eqnarray}\fl\label{eq:matWWp}
  W_0(\xi,\omega) = 
    \begin{bmatrix}
 1 & 0 & 0 \\
 \xi  & 0 & 0 \\
 1 & 0 & 0 \\
\end{bmatrix},
 \qquad 
  W_1(\xi,\omega) =
    \begin{bmatrix}
 0 & \xi  & 0 \\
 0 & 0 & 1 \\
 0 & 0 & \omega  \\
\end{bmatrix},\qquad
 S= \begin{bmatrix}
 1 & 0 & 0 \\
 0 & 0 & 1 \\
 0 & 1 & 0 \\
\end{bmatrix},
\end{eqnarray}
while the other set of auxiliary space matrices $\V_s(\xi,\omega)$ is obtained from
$\W_s(\xi,\omega)$ by swapping the parameters,
\begin{eqnarray}
    \V_s(\xi,\omega)=\W_{s}(\omega,\xi),\quad s\in\{0,1\}.
\end{eqnarray}
The representation is parametrised in terms of two (so far) free variables $\xi$ and $\omega$, which (as we argue later) correspond precisely to the variables parametrising the transfer matrix introduced in~\eqref{eq:tm}, therefore we shall refer to them as \emph{spectral parameters}.  We note that $S^2=\one$, which together with $U^2=\one$ and the bijection between the two sets of matrices implies a \emph{dual} bulk relation,
\begin{eqnarray}
U_{123} \mathbf{W}'_1 \mathbf{W}_2 \mathbf{W}'_3 S = \mathbf{W}'_1 S\, \mathbf{W}'_2 \mathbf{W}_3.  \label{bulk1bW}
\end{eqnarray}

We now turn to the boundaries. We demand that the spectral parameters $\xi,\omega$, eigenvalue parameters $\Lambda_{\rm{L}},\Lambda_{\rm{R}}$ and the boundary vectors $\bra*{\mathbf{l}_1}$, $\bra*{\mathbf{l}'_{12}}$, $\ket{\mathbf{r}_{12}}$, $\ket{\mathbf{r}'_1}$, satisfy the following set of equations,
\begin{eqnarray}
U^{\rm{L}}_{12} \bra*{\mathbf{l}'_{12}}=\Lambda_{\rm{L}} \bra*{\mathbf{l}_1} \mmu{W}{2} S, \label{bound1a} \\
U_{123} \bra*{\mathbf{l}_{1}} \mmu{W}{2} \mmup{W}{3}=\bra*{\mathbf{l}'_{12}} \mmu{W}{3}S,\label{bound2a} \\
    U^{R}_{12} \ket{\mathbf{r}_{12}}=\Lambda_{\rm{R}}\mmup{W}{1}S\ket{\mathbf{r}'_2}, \label{bound3a} \\
    U_{123} \mmup{W}{1} \mmu{W}{2}\ket{\mathbf{r}'_{3}} = \mmup{W}{1}S\ket{\mathbf{r}_{23}}. \label{bound4a}
%U^{R}_{12} \ket{\mathbf{r}_{12}}=\mmup{W}{1}S\ket{\mathbf{r}'_2}, \label{bound3a} \\
%U_{123} \mmup{W}{1} \mmu{W}{2}\ket{\mathbf{r}'_{3}} =\Lambda_{\rm{R}} \mmup{W}{1}S\ket{\mathbf{r}_{23}}. \label{bound4a}
\end{eqnarray}
If these equations are satisfied, the pair of states $\vec{p}$, $\vec{p}^{\prime}$ written in terms of MPAs~\eqref{pness} and \eqref{ppness}, solves the staggered eigenvalue equations~\eqref{eigensplit}.
This can be straightforwardly verified by plugging the MPAs into e.g.\ the first
equation of the set~\eqref{eigensplit}, and apply the appropriate relations to transform $\vec{p}$ into $\vec{p}^{\prime}$,
\begin{eqnarray}\fl
    \eqalign{
        \Ue \vec{p} &= 
    U_{1 2 3}\cdots 
    U_{n-5\,n-4\,n-3}U_{n-3\,n-2\,n-1}
    U^{\mathrm{R}}_{n-1\,n}
    \mel{\vec{l}_{1}}{\vW_2 \vV_3 \cdots \vW_{n-4}\vV_{n-3}\vW_{n-2}}{\vec{r}_{n-1\,n}}\\
    &=\Lambda_{\mathrm{R}} U_{1 2 3}\cdots 
    U_{n-5\,n-4\,n-3}
    U_{n-3\,n-2\,n-1}
    \mel{\vec{l}_{1}}{\vW_2 \vV_3 \cdots \vW_{n-4}\vV_{n-3}\vW_{n-2}\vV_{n-1}S}{\vec{r}^{\prime}_{n}}\\
    &=\Lambda_{\mathrm{R}}U_{1 2 3}\cdots 
    U_{n-5\,n-4\,n-3}
    \mel{\vec{l}_{1}}{\vW_2 \vV_3 \cdots \vW_{n-4}\vV_{n-3}S\vV_{n-2}\vW_{n-1}}{\vec{r}^{\prime}_{n}}\\
    &=\cdots = \Lambda_{\mathrm{R}}
    U_{1 2 3} \mel{\vec{l}_1}{\vW_2 \vV_3 S \vV_4 \vW_5\cdots \vW_{n-1}}{\vec{r}^{\prime}_n}\\
    &= \Lambda_{\mathrm{R}}
    \mel{\vec{l}^{\prime}_{12}}{\vW_3 \vV_4 \vW_5\cdots \vW_{n-1}}{\vec{r}^{\prime}_n}
    =\Lambda_{\mathrm{R}}\vec{p}^{\prime}.
}
\end{eqnarray}
To get to the second line, we apply the boundary propagator $U^{\mathrm{R}}_{n-1\,n}$, which, according to relation~\eqref{bound3a}, replaces the right boundary vector $\ket*{\vec{r}_{n-1\,n}^{\prime}}$ with $\ket*{\vec{r}_{n}}$, and at the same time introduces the delimiter matrix $S$.  We continue by applying the dual bulk relation~\eqref{bulk1bW}, which moves $S$ to the left, while the roles of $\vV_s$ and $\vW_s$ are exchanged (fourth line). We finish by absorbing the matrix $S$ at the left and finally change $\bra*{\vec{l}_{1}}$ into~$\bra*{\vec{l}^{\prime}_{12}}$. The second part of~\eqref{eigensplit} is proven analogously.

Solving the left boundary equations (\ref{bound1a},\ref{bound2a})
for $\xi,\omega$ and boundary vectors $\bra{l_s},\bra{l'_{ss'}}$ (given explicitly in \ref{app:boundaryvectorsMPS}) we obtain
\begin{eqnarray}\label{nessspc1}
\xi=\frac{(\alpha +\beta -1)-\beta  \Lambda_{\rm{L}}}{(\beta -1) \Lambda_{L}^{2}},\qquad
\omega =\frac{\Lambda_{\rm{L}} (\alpha -\Lambda_{\rm{L}})}{(\beta -1)}.
\end{eqnarray}
Similarly, solving the right boundary equations
(\ref{bound3a},\ref{bound4a}) for $\xi,\omega$ and the
$\ket{r'_s},\ket{r_{ss'}}$ (again listed in \ref{app:boundaryvectorsMPS})
we get
\begin{eqnarray}\label{nessspc2}
&\xi=\frac{\Lambda_{\rm{R}} (\gamma -\Lambda_{\rm{R}})}{(\delta -1) },\qquad
&\omega =\frac{ (\gamma +\delta -1)-\delta  \Lambda_{\rm{R}}}{(\delta -1) \Lambda_{\rm{R}}^2}.
\end{eqnarray}
These and all explicit result below are written for the specific case of conditional driving 
(\ref{eq:conditional}), while results for the Bernoulli family are obtained fully analogously \cite{prosen2017exact}.
Equating $\xi$ and $\omega$ from these pairs of equations yields an eigenvalue equation, which expressed in terms of the eigenvalue $\Lambda=\Lambda_{\rm{L}} \Lambda_{\rm{R}}$ reads
\begin{eqnarray}\fl\label{char1}
(\Lambda-1)
    \Big(\Lambda ^3+(1-\alpha\gamma)\Lambda ^2 +
    \beta\delta\,\Lambda  - (\alpha +\beta -1) (\gamma +\delta -1)\left(\Lambda+1\right)
    \Big)=0. 
\end{eqnarray}
%\begin{eqnarray}\fl\label{char1}
%&(\Lambda-1)\times \\ \fl
%&\Big(\Lambda ^3+\Lambda ^2 (1-\alpha  \gamma )+\Lambda  [\beta  \delta -(\alpha +\beta -1) (\ga%mma +\delta -1)]-(\alpha +\beta -1) (\gamma +\delta -1)\Big)=0. \nonumber
%\end{eqnarray}
Importantly, $\Lambda=1$ is always a solution. The corresponding eigenvector is the NESS. The remainder is a cubic polynomial giving three other eigenvalues corresponding to three decay modes. These eigenvalues do not depend on the system size $n$. We refer to these four eigenvalues as the \emph{NESS orbital}.

\subsection{Markovian excitations}

\label{sect:markexc}

In order to generalise the results from the previous subsection to other eigenvalues and eigenvectors (decay modes) of the Markov matrix $\U$ we will follow the spirit of the standard coordinate Bethe ansatz (see e.g.\ \cite{bethe1}) for finding solutions to eigenstates containing interacting particles. However, we need to adapt it to a coordinate \emph{matrix product} ansatz picture similar to the one used for the asymmetric simple exclusion processes \cite{ASEPMatrix}. As we will see the leading decay modes can be understood in terms of localized particle excitations of the NESS, the latter being analogous to the vacuum state. The leading decay modes will thus be a superposition of single-particle excitations. 

The leading decay mode is the eigenvector of $\U$ corresponding to the eigenvalue $\Lambda_1$ with the real part closest to $1$,
\begin{eqnarray}
    \min_{j\ge 1}\big(1-\left|\Re(\Lambda_j)\right|\big)
    = 1-\left|\Re(\Lambda_1)\right|.
\end{eqnarray}
Apart from the eigenvectors with eigenvalues on the unit circle, it is the most dynamically relevant eigenvector, since it dominates the long-time asymptotic relaxation. In this subsection, we will provide a compact MPA representation of the eigenvectors corresponding to the orbital that contains the leading decay mode. The generalisation to other orbitals remains an open question, although the complete set of eigenvalues can be predicted by a simple conjecture.

The starting point is to double the auxiliary space, which now consists of two copies of the original auxiliary space ${\mathcal V}'_a = {\mathcal V}_a\oplus{\mathcal V}_a = \mathbb C^2 \otimes {\mathcal V}_a$. We generalise the $\vec{W}$ operator to the doubled auxiliary space $\mathcal{V}^{\prime}_a$ as,
\begin{eqnarray}
    \eqalign{
\hat{W}_s= e_{11}\otimes W_s(\xi z, \omega/z)+ e_{22}\otimes W_s(\xi/ z, \omega z), \\
\hat{W}'_s=e_{11}\otimes W'_s(\xi z, \omega/z)+ e_{22}\otimes W'_s(\xi/ z, \omega z),
    \label{diagpart}
}
\end{eqnarray}
where $e_{i j} = \ketbra{i}{j}$, $i,j\in \{1,2\}$ is the Weyl basis of $2\times 2$ matrices.  We define a set of operators
$\hat{F}^{(k)}$, $\hat{F}^{\prime(k)}$, $\hat{G}^{(k)}$, $\hat{G}^{\prime(k)}$, $\hat{K}^{(k)}$, $\hat{L}^{\prime(k)}$, acting on ${\mathcal V}'_a$,
\begin{eqnarray}\fl
    \eqalign{
    \eqalign{
\hat{F}^{(k)}=\one+ e_{1 2} \otimes \frac{c_+ z^k F_+ +c_- z^{-k} F_-}{\xi \omega -1},\qquad
&\hat{F}'^{(k)}=\one+ e_{1 2} \otimes \frac{c_+ z^k F'_+ +c_- z^{-k} F'_-}{\xi \omega -1},\\
\hat{G}^{(k)}=\one+ e_{1 2} \otimes \frac{c_+ z^k G_+ +c_- z^{-k} G_-}{\xi \omega -1},
&\hat{G}'^{(k)}=\one+ e_{1 2} \otimes \frac{c_+ z^k G'_+ +c_- z^{-k} G'_-}{\xi \omega -1},\\
    \hat{K}^{(k)}=\one+ e_{1 2} \otimes \frac{c_+ z^k K_+ +c_- z^{-k} K_-}{\xi \omega -1},\\
}\\
    \hat{L}^{(k)}=(z e_{11}+z^{-1} e_{22})\otimes \one+ e_{1 2} \otimes \frac{c_+ z^k L_+ +c_- z^{-k} L_-}{\xi \omega -1},}
    \label{FLhat} 
\end{eqnarray} 
 where we introduced the operators $F^{(\prime)}_{\pm}$, $G^{(\prime)}_{\pm}$, $K_{\pm}$, $L_{\pm}$ that act on ${\mathcal V}_a$, i.e. only on a \emph{single} copy of the original auxiliary space.  The explicit representation of these operators is given in Ref.~\cite{prosen2017exact} in terms of a four dimensional auxiliary space ${\mathcal V}_a=\mathbb C^4$.  Note that unlike the MPA solution corresponding to the NESS orbital, we have so far been unable to find a $3$-dimensional representation of auxiliary space operators, so a $4$-dimensional single-copy auxiliary space is assumed here. The operators~\eqref{FLhat} linearly depend on two complex coefficients $c_+,c_-$, and in the inhomogeneous MPA (introduced below), the complex variable $z$ plays the role of the multiplicative quasi-momentum of the quasi-particle excitation on top of the NESS, which are created by the operators~\eqref{FLhat}.

We now introduce the following inhomogeneous (site-dependent) MPA
\begin{eqnarray}\fl
 \mathbf{p}=\bra*{\mathbf{\hat{l}}_1}\hat{L}^{(0)}\WW'_2\hat{S}\hat{F}^{(1)}\mathbf{\WW}'_3\hat{G}^{(2)}\mathbf{\WW}_4 \cdots 
\hat{F}^{(n-5)}\mathbf{\WW}'_{n-3}\hat{G}^{(n-4)}\mathbf{\WW}_{n-2}\ket{\mathbf{\hat{r}}_{n-1,n}}, \label{p} \\ \fl
  \mathbf{p'}=\bra*{\mathbf{\hat{l}}'_{12}}\hat{F}^{'(1)}\mathbf{\WW}_3\hat{G}^{'(2)}\mathbf{\WW}'_4\hat{F}^{'(3)}\mathbf{\WW}_5 \cdots \hat{G}^{'(n-4)}\mathbf{\WW}'_{n-2}\hat{K}^{(n-3)}\mathbf{\WW}_{n-1}\hat{S}\ket{\mathbf{\hat{r}}'_{n}}, \label{pp}
\end{eqnarray}
and $\mathbf{p}$ is the eigenvector of $\U=\Ue\Uo$ (\ref{Ue},\ref{Uo}) with eigenvalue $\Lambda=\Lambda_{\rm L}\Lambda_{\rm R}$, like before. 
The interpretation of $z$ as the momentum of a single excitation should now be clear. Due to the presence of the off-diagonal nilpotent $e_{12}$ in the site dependent matrices $\hat{F}^{(k)},\hat{G}^{(k)}$ these can only create at most one excitation on the NESS in the MPA \eqref{p} and \eqref{pp}. Furthermore, since the site dependent matrices contain both terms with $z$ and $z^{-1}$, this MPA is like a standing wave.
We deform the bulk relation and its dual, (\ref{bulk1W},\ref{bulk1bW}), to an inhomogeneous form -- where a free integer $k$ denotes a spatial coordinate -- which our newly introduced operators have to satisfy
\begin{eqnarray}
    \eqalign{
U_{123}  \hat{K}^{(k-1)}\mathbf{\hat{W}}_1 \hat{S} \hat{G}^{(k)}\mathbf{\hat{W}}_2 \hat{F}^{(k+1)}\mathbf{\hat{W}}'_3 =\hat{F}'^{(k-1)}\mathbf{\hat{W}}_1 \hat{G}'^{(k)} \mathbf{\hat{W}}'_2  \hat{K}^{(k+1)}\mathbf{\hat{W}}_3 \hat{S}, \\
    U_{123} \hat{G}'^{(k-1)}\mathbf{\hat{W}}'_1 \hat{F}'^{(k)} \mathbf{\hat{W}}_2  \hat{L}^{(k+1)}\mathbf{\hat{W}}'_3 \hat{S}=\hat{L}^{(k-1)}\mathbf{\hat{W}}'_1 \hat{S} \hat{F}^{(k)}\mathbf{\hat{W}}'_2 \hat{G}^{(k+1)}\mathbf{\hat{W}}_3, \label{bulk2}}
\end{eqnarray}
where $\hat{S}=\one \otimes S$. 
Analogously as before we demand the boundary vectors and the other parameters to provide a nontrivial solution to a set of generalised boundary equations,
\begin{eqnarray}
&U^{\rm{L}}_{12} \bra*{\mathbf{\hat{l}}'_{12}}=\Lambda_{\rm{L}} \bra*{\mathbf{\hat{l}}_1} \hmp{G}0 \hmup{W}{2}, \label{bound1} \\
&U_{123} \bra*{\mathbf{\hat{l}}_{1}} \hm{L}{0}\hmup{W}{2}\hat{S} \hm{F}{1}\hmup{W}{3}=\bra*{\mathbf{\hat{l}}'_{12}} \hm{K}{1}\hmu{W}{3}\hat{S},\label{bound2} \\
&U^{R}_{12} \ket{\mathbf{\hat{r}}_{12}}=\hm{F}{n-3}\hmup{W}{1}\hat{S}\ket{\mathbf{\hat{r}}'_2}, \label{bound3} \\
&U_{123} \hm{G}{n-4}\hmup{W}{1} \hm{K}{n-3}\hmu{W}{2}\hat{S} \ket{\mathbf{\hat{r}}'_{3}} =\Lambda_{\rm{R}} \hm{L}{n-4}\hmup{W}{1}\hat{S}\ket{\mathbf{\hat{r}}_{23}}. \label{bound4}
\end{eqnarray}

To understand why $\vec{p}$ is indeed an eigenvector, let us for clarity consider an example with fixed $n=8$ spin sites,
\begin{eqnarray}\nonumber \fl
\Ue \mathbf{p}=
 U_{123}U_{345}U_{567} U^{\rm{R}}_{78} 
\bra*{\mathbf{\hat{l}}_1}\hat{L}^{(0)}\WW'_2\hat{S}\hat{F}^{(1)}\mathbf{\WW}'_3\hat{G}^{(2)}\mathbf{\WW}_4 \hat{F}^{(3)}\mathbf{\WW}'_{5}\hat{G}^{(4)}\mathbf{\WW}_{6}\ket{\mathbf{\hat{r}}_{78}}=\ldots
\end{eqnarray}
We start by first acting with $U_{123}$ (note that all the local operators commute, cf.~\eqref{boundcomm}, and can be applied in any order),
%The first operator in $\Ue$ that acts is $U_{123}$ and it does so on the left boundary vector,
which using \eqref{bound2} creates $\hm{K}{1}\hmu{W}{3}\hat{S}$ giving,
\begin{eqnarray}\nonumber
    &\Ue\vec{p}= U_{345}U_{567} U^{\rm{R}}_{78}
\bra*{\mathbf{\hat{l}}'_{12}} \hm{K}{1}\hmu{W}{3}\hat{S}\hat{G}^{(2)}\mathbf{\WW}_4 \hat{F}^{(3)}\mathbf{\WW}'_{5}\hat{G}^{(4)}\mathbf{\WW}_{6}\ket{\mathbf{\hat{r}}_{78}}.
\end{eqnarray}
The subsequent $U$'s in $\Ue$ transfer $\hm{K}{1}\hmu{W}{3}\hat{S}$ to the penultimate right hand side via the bulk relation \eqref{bulk2}, 
\begin{eqnarray}\nonumber
\ldots= U_{567} U^{\rm{R}}_{78}\bra*{\mathbf{\hat{l}}'_{12}} \hat{F}'^{(1)}\mathbf{\hat{W}}_1 \hat{G}'^{(2)} \mathbf{\hat{W}}'_2  \hat{K}^{(3)}\mathbf{\hat{W}}_3 \hat{S}\hat{G}^{(4)}\mathbf{\WW}_{6}\ket{\mathbf{\hat{r}}_{78}}=\ldots
\end{eqnarray}
Prior to acting with the final $U_{n-3\,n-2\,n-1}$, we apply $U^{\rm{R}}$ so it creates $\hm{F}{n-3}\hmup{W}{n-1}\hat{S}$ at the right end,
\begin{eqnarray} \nonumber
\ldots= U_{567} 
\bra*{\mathbf{\hat{l}}'_{12}} \hat{F}'^{(1)}\mathbf{\hat{W}}_1 \hat{G}'^{(2)} \mathbf{\hat{W}}'_2  \hat{K}^{(3)}\mathbf{\hat{W}}_3 \hat{S}\hat{G}^{(4)}\mathbf{\WW}_{6}\hm{F}{5}\hmup{W}{7}\hat{S}\ket{\mathbf{\hat{r}}'_8}=\ldots
\end{eqnarray}
We then use the top Eq.~\eqref{bulk2} for the final $U_{n-3,n-2,n-1}$ to move $\hm{K}{n-5}\hmu{W}{n-3}\hat{S}$ to the far end into $\hm{K}{n-3}\hmu{W}{n-1}\hat{S}$ and we thus finally obtain the form \eqref{pp} of $\mathbf{p'}$,
\begin{eqnarray}
&\ldots=  \bra*{\mathbf{\hat{l}}'_{12}} \hat{F}'^{(1)}\mathbf{\hat{W}}_1 \hat{G}'^{(2)} \mathbf{\hat{W}}'_2 \hat{F}'^{(3)}\mathbf{\hat{W}}_1 \hat{G}'^{(4)} \mathbf{\hat{W}}'_2  \hat{K}^{(5)}\mathbf{\hat{W}}_3 \hat{S}^2\ket{\mathbf{\hat{r}}'_8}= \mathbf{p'}.
    \label{eq:exampleInh}
\end{eqnarray}
The odd-part of the propagator~$\Uo$ acts analogously in reverse, hence we obtain the second 
part of the eigenvalue equation,
\be
\Uo\Ue \vec{p}=\Uo\vec{p}^{\prime}=\Lambda_L\Lambda_R\vec{p}.
\ee

Note that for $z=1$, $c_+=c_-=0$, one recovers the NESS bulk relations (\ref{bulk1W},\ref{bulk1bW}). The boundary equations given by~(\ref{bound1}-\ref{bound4}) can be solved to obtain the parameters $\xi,\omega,z,c_+,c_-\in\CC$, together with the eigenvectors and eigenvalues of the first (leading decay-mode) orbital. Explicitly, from the left boundary equations we obtain
\begin{eqnarray}\fl\label{paraleft}
\xi=\frac{z (\alpha +\beta -1)-\beta  \Lambda_{\rm{L}}}{(\beta -1) \Lambda_{L}^{2}},\qquad
\omega =\frac{\Lambda_{\rm{L}} (\alpha  z-\Lambda_{\rm{L}})}{(\beta -1) z},\qquad
\frac{c_-}{c_+}=\frac{\Lambda_{L}^{4}}{z^4 (\alpha +\beta -1)},
\end{eqnarray}
and from the right ones,
 \begin{eqnarray}\fl\label{pararight}
\xi=\frac{\Lambda_{\rm{R}} (\gamma  z-\Lambda_{\rm{R}})}{(\delta -1) z},\qquad
\omega =\frac{z (\gamma +\delta -1)-\delta  \Lambda_{\rm{R}}}{(\delta -1) \Lambda_{\rm{R}}^2},
\qquad
\frac{c_+}{c_-}=\frac{\Lambda_{\rm{R}}^{4} z^{4 m+2}}{\gamma +\delta -1},
\end{eqnarray}
where $m=\frac{n}{2}-2$. Pairwise identifying expressions for $\xi$, $\omega$ and $c_+/c_-$ from the left and right boundary equations, we obtain a triple of coupled equations for the unknowns $\Lambda_{\rm L},\Lambda_{\rm R}$ and $z$.
For the details of the solution procedure and the form of the boundary vectors we refer the reader to \cite{prosen2017exact}. Using this solution it can be shown that the eigenvalue of the leading decay mode
$\Lambda_1$ scales with large $n$ as 
$\Lambda_1 = 1 - \Lambda'_1/n + {\mathcal O}(1/n^2)$, with
\begin{eqnarray} 
\Lambda'_1=
\frac{[1-(\alpha +\beta -1) (\gamma +\delta -1)] \log [(\alpha +\beta -1) (\gamma+\delta -1)]}{2 (\alpha +\beta -1) (\gamma +\delta -1)+\alpha  \gamma -\beta  \delta-2}. \label{ldmth}
\end{eqnarray}
The $1/n$ scaling of the gap implies that the relaxation time is proportional to $n$, which is consistent with the ballistic transport observed in~\cite{prosen2016integrability}.
\begin{figure}
 \centering	
\vspace{-1mm}
\includegraphics[width=0.8\columnwidth]{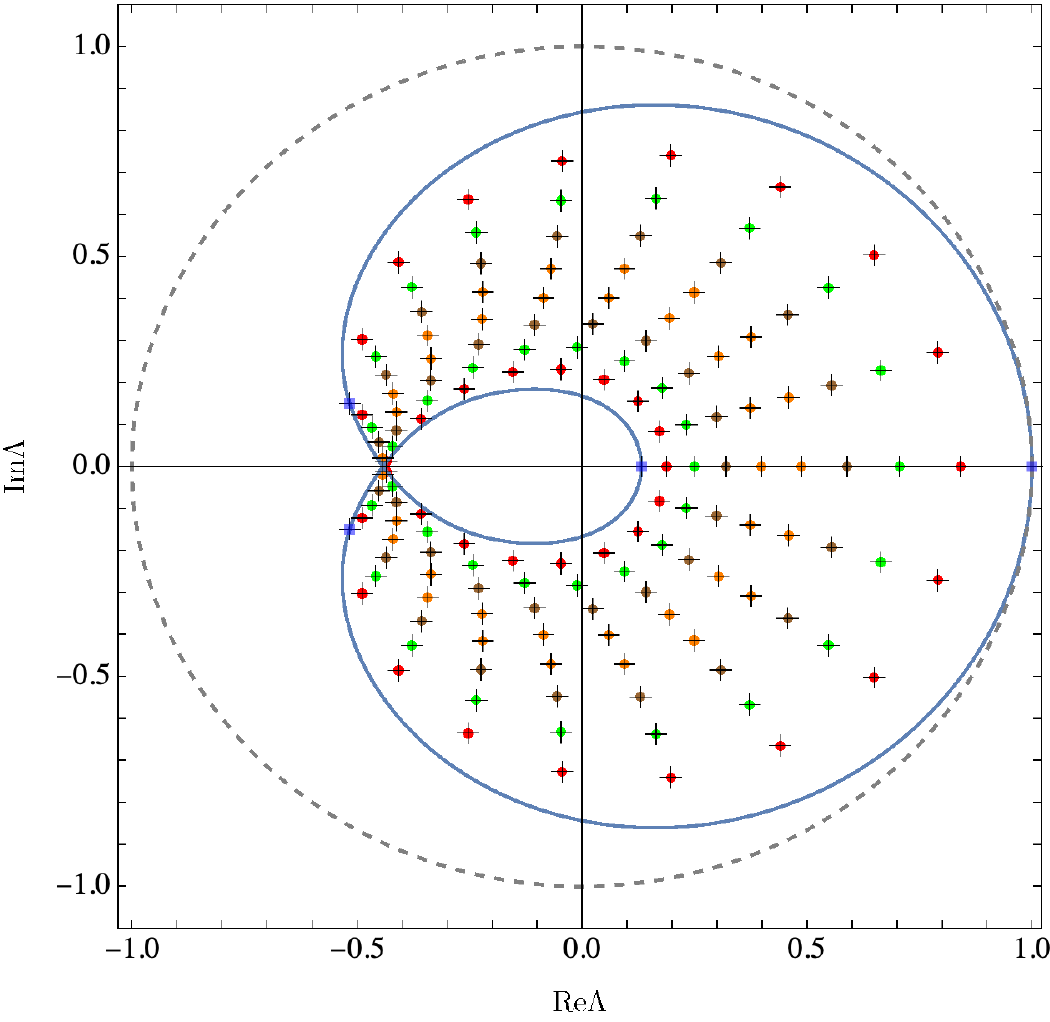}
\vspace{-1mm}
\caption{The spectrum of $\U$ for $n=12$ and $\alpha=1/2,\beta=1/3,\gamma=1/5,\delta=4/7$. The black crosses show numerical results. The red ($p=1$), green ($p=2$), brown ($p=3$), orange ($p=4$) points are solutions for the $p$-th orbital eigenvalues~(\ref{conj}). The blue squares are the roots of characteristic polynomial for the NESS orbital. Note that eigenvalues have to be largely degenerate on average as the number of distinct eigenvalues is $4+18\times 8 = 148 \propto n^2$, while the total number of eigenvalues is $4096= 2^n$. The dashed grey curve is the unit circle. The blue curve is an algebraic curve to which the leading decay mode orbital converges in the thermodynamic limit.}
\label{fig:orbitals}
\end{figure}

Based on these results one can formulate a conjecture for the higher orbital eigenvalues~\cite{prosen2017exact,bucaprosenreview}. Let $p$ denote the orbital number, $1\le p \le m=n/2-2$. Then the \emph{complete} set of eigenvalues $\Lambda=\Lambda_{\mathrm{L}}\Lambda_{\mathrm{R}}$ satisfy the following set of Bethe-like equations,
\begin{eqnarray}
    \eqalign{
\frac{z (\alpha +\beta -1)-\beta  \Lambda_{\rm{L}}}{(\beta -1) \Lambda_{L}^{2}}=\frac{\Lambda_{\rm{R}} (\gamma  z-\Lambda_{\rm{R}})}{(\delta -1) z},\\
 \frac{\Lambda_{\rm{L}} (\alpha  z-\Lambda_{\rm{L}})}{(\beta -1) z}=\frac{z (\gamma +\delta -1)-\delta  \Lambda_{\rm{R}}}{(\delta -1) \Lambda_{\rm{R}}^2}, \\
    (\alpha +\beta -1)^p(\gamma +\delta -1)^p=(\Lambda_{\rm{L}} \Lambda_{\rm{R}})^{4 p} z^{4(m-p)+2},}
    \label{conj}
\end{eqnarray}
where $p=1$ corresponds to the first orbital obtained analytically via Eqs.~(\ref{paraleft}--\ref{pararight}). 
The eigenvalue solutions of \eqref{conj} are shown in Fig.~\ref{fig:orbitals} and agree perfectly with numerical diagonalization of the Markov matrix $\U$. In contrast to the usual Bethe ansatz \cite{bethe1931theorie}, this conjecture implies that we have only one momentum parameter ($z$) even for many-particle excitations. This is consistent with the Bethe-ansatz description of Rule 54 proposed in~\cite{friedman2019integrable},
and can intuitively be understood
%Intuitively, this could be understood
as a consequence of a very singular dispersion relation, namely all quasi-particles (solitons) move with the same speed $\pm 1$. Another many-body effect is a large degeneracy of each eigenvalue  in the orbital $p$ which is conjectured~\cite{prosen2017exact} to be exponential, specifically  $2^{p-1}$.

Expanding the last equation \eqref{conj} for $p=1$ in $1/m$ (with $m=\frac{n}{2}-2$) in the leading order of $1/n$, we get that $z$ becomes unimodular, $z=e^{\ii \kappa}$, where the momentum $\kappa$ is restricted to the interval $[0, \pi)$. Note that if some eigenvalues $\Lambda_L \Lambda_R$ in the $n\to\infty$ limit converged to points on the unit circle different from $1$,
%this would indicate that in the thermodynamic limit the model may have persistent oscillations and/or non-decaying fluctuations in time.
this would imply persistent oscillations and/or non-decaying fluctuations in time for the model in the thermodynamic limit. 
Although one may naively expect that the effect of the boundaries becomes thermodynamically irrelevant, pushing all eigenvalues to collapse onto the unit circle (similarly to the closed model studied in Sections~\ref{sect:hydro}--\ref{sect:timeMPA}), this is not the case. Rather, we have eigenvalues collapsing onto an algebraic curve (Fig.~\ref{fig:orbitals}), which is \emph{inside} the unit disk and only tangential to a unit circle at $\kappa=0$ ($z=1$). This indicates that the effect of the boundaries is thermodynamically nontrivial. Note that this implies that the thermodynamic limit and the long-time limit do not commute. More specifically, the stationary state is the only asymptotic state for any finite system size $n$, but with increasing $n$ the time it takes to reach it diverges.

\section{Exact large deviations}
\label{sect:largedev}
The Markovian matrix $\U$ describes the ensemble averaged dynamics of a stochastic process. In order to gain further insight into the detailed dynamics of the individual realizations of this stochastic process (also referred to as \emph{trajectories}) we need to go beyond the ensemble averaged properties. Fortunately there exists an elegant framework~\cite{Touchette,Hedges2009,lecomte2007thermodynamic}
for calculating the fluctuations of \emph{time-integrated} local (or extensive) observables,
\begin{eqnarray}\fl
 \cO_T &= \sum_{t=0}^{T-1} \sum_{x=1}^{n-1}\Big[
     f_x(s^{2t+\mathrm{mod}(x,2)}_x,s^{2t+\mathrm{mod}(x+1,2)}_{x+1})+
     g_x(s^{2t+2-\mathrm{mod}(x,2)}_x,s^{2t+2-\mathrm{mod}(x+1,2)}_{x+1})\Big],
 %f_{2x}(s^{2t}_{2x},s^{2t+1}_{2x+1})
 %+f_{2x+1}(s^{2t+1}_{2x+1},s^{2t}_{2x+2})
 %+g_{2x}(s^{2t+2}_{2x},s^{2t+1}_{2x+1})\\
 %&+g_{2x+1}(s^{2t+1}_{2x+1},s^{2t+2}_{2x+2})\Big],}
 \label{integratedobs}
\end{eqnarray}
%\begin{eqnarray}\fl
%\eqalign{
 %\cO_T &= \sum_{t=0}^{T-1} \sum_{x=1}^{\frac{n}{2}-1}\Big[
 %f_{2x}(s^{2t}_{2x},s^{2t+1}_{2x+1})
 %+f_{2x+1}(s^{2t+1}_{2x+1},s^{2t}_{2x+2})
 %+g_{2x}(s^{2t+2}_{2x},s^{2t+1}_{2x+1})\\
 %&+g_{2x+1}(s^{2t+1}_{2x+1},s^{2t+2}_{2x+2})\Big],}
 %\label{integratedobs}
%\end{eqnarray}
%\begin{eqnarray}
% \cO_T = \sum_{t=0}^{T-1} \sum_{x=1}^{n-1}\Big[f_x(s^t_x,s^t_{x+1})+g_x(s^{t+1}_x,s^{t+1}_{x+1})\Big],
% \label{integratedobs}
%\end{eqnarray}
in the large time $T$ limit.
The ansatz above encodes arbitrary two-site observables that are local (and extensive) both in space and time and are parametrised by functions $f_x$ and $g_x$.
Note again that the dynamics is defined on a diamond sublattice $x+t=0\pmod{2}$ (see the discussion around Eq.~\eqref{eq:defGeometry1}).
This framework is referred to as \emph{large deviations} (LD), also known as \emph{full counting statistics}. Before we define it, we just emphasise that the observables of the type \eqref{integratedobs} depend on the \emph{full} trajectory realization
%$(\bm{s}^0,\bm{s}^{1},\bm{s}^2,\dots,\bm{s}^{2T-1})$,
$(\ul{s}^0,\ul{s}^{1},\ul{s}^2,\ldots,\ul{s}^{2T-1})$,
where $\ul{s}^t$
is an $n$-point configuration on the $t$-th space-time saw.
For instance, one choice may be the time integrated number of occupied cells  $f_x(s,s')=(s+s')/2$, $g_x(s,s')=0$. 

It can be shown~\cite{Touchette} that for $T \to \infty$ the probability distribution of $\cO_T$ has a large deviation form
\be
P_T(\cO) = \ave{\delta( \cO - \cO_T)} \sim_{T \to \infty} e^{-T \varphi_n(\cO_T / T)},
\ee
where $\varphi_n(x)$ is called the {\em rate function}. 
This implies that the moment generating function also has the large deviation form 
\be
Z_T(\cf) = \ave{e^{-\cf \cO_T}} \sim e^{T \theta_n(\cf)}.
\ee
The derivatives of the {\em scaled cumulant generating function} (SCGF) $\theta_n(\cf)$ at $\cf=0$ correspond to the cumulants of $\cO_T$ divided by time. 
The LD functions may be understood intuitively as free-energies of the trajectories and are related to rate functions by a Legendre transform, $\theta_n(\cf) = - \min_x \left[ \cf x + \varphi_n(x) \right]$. 

In order to compute the SCGF one can show~\cite{Touchette,buca2019exact} that we can deform the propagator $\U$ with a \emph{tilt} or \emph{counting field} $\cf$,
\be
\U(\cf) = \Uo \, G(\cf) \, \Ue \, F(\cf),
\label{eq:tiltU} 
\ee
where we introduced diagonal matrices
\be
F_{\underline{s},\underline{s}'}(\cf) = \delta_{\underline{s},\underline{s}'} \prod_{x=1}^{n-1} f_{s_x,s_{x+1}}^{(x)},\qquad
G_{\underline{s},\underline{s}'}(\cf) = \delta_{\underline{s},\underline{s}'} \prod_{x=1}^{n-1} g_{s_x,s_{x+1}}^{(x)},
\ee
with 
\be \label{eq:tiltOps}
f_{s,s'}^{(x)}=e^{-\cf f_x(s,s')}\quad\text{and}\quad  g_{s,s'}^{(x)}=e^{-\cf g_x(s,s')}.
\ee
We then have that $\theta_n(\cf) = \log \lambda(\cf)$,   
where $\lambda(\cf)$ is the eigenvalue of $\U(\cf)$ with the largest real part. Therefore, by finding this eigenvalue we have access to $P_T(\cO)$ and its cumulants. 

The quantity $\cO_T$ explicitly depends on the various values of the trajectory at all times, which makes its exact evaluation a formidable task. 
%At first sight, the explicit dependence of the quantity $\cO_T$ on the various values of the trajectory at all times.
Remarkably, we can find an exact solution for its statistics in the long-time limit for \emph{arbitrary} two-site observables $f_x(s,s')$, $g_x(s,s')$. We again take an ansatz similar to the one we used for the NESS in Subsection~ \ref{sec:NESS_mpa}. We write an ansatz for the components of the eigenvector of~\eqref{eq:tiltU} as,
\begin{eqnarray}
    \eqalign{
        p_{s_1,\dots,s_n} =  \mel*{l_{s_1}}{W_{s_2}^{(2)}W_{s_3}^{(3)} \cdots W_{s_{n-3}}^{(n-3)}W_{s_{n-2}}^{(n-2)}}{r_{s_{n-1}s_n}},\\
    p'_{s_1,\dots,s_n} =  \mel*{l'_{s_1 s_2}}{V_{s_3}^{(3)}V_{s_4}^{(4)} \cdots V_{s_{n-2}}^{(n-2)}V_{s_{n-1}}^{(n-1)}}{r'_{s_n}}.
    }
\end{eqnarray}
The matrices in the MPA are now site-dependent and we demand that they satisfy the following inhomogeneous bulk algebra, that is generalised to include the tilt operators~\eqref{eq:tiltOps}, written for \emph{even} $x$:
\begin{eqnarray}
    \eqalign{
f^{(x-1)}_{ss'} f^{(x)}_{s's'\!'} W^{(x-1)}_s W^{(x)}_{s'} Z^{(x+1)}_{s'\!'} = 
Z^{(x-1)}_{s} V^{(x)}_{\chi(ss's'\!')} V^{(x+1)}_{s'\!'}, \\
    g^{(x-2)}_{ss'} g^{(x-1)}_{s's'\!'} Z^{(x-2)}_s V^{(x-1)}_{s'} V^{(x)}_{s'\!'} = W^{(x-2)}_s W^{(x-1)}_{\chi(s,s',s'\!')} Z^{(x)}_{s'\!'},}
\end{eqnarray}
where $Z^{(x)}_{s'\!'}$ are site dependent matrices generalising the delimiter matrix $S$ from the homogeneous bulk algebra~\eqref{eq:matWWp}. As before, these matrices admit a 3-dimensional representation,
\begin{eqnarray}\fl
    \eqalign{
    W_0^{(x)} = \begin{bmatrix}
               1 & 0 & 0 \\
               w_1^{(x)} & 0 & 0 \\
               1 & 0 & 0
              \end{bmatrix},\\
    W_1^{(x)} = \begin{bmatrix}
               0 & w_2^{(x)} & 0 \\
               0 & 0 & 1 \\
               0 & 0 & w_3^{(x)}
    \end{bmatrix},}
    \quad
    \eqalign{
    V_0^{(x)} = \begin{bmatrix}
               1 & 0 & 0 \\
               v_1^{(x)} & 0 & 0 \\
               1 & 0 & 0
              \end{bmatrix}, \\
    V_1^{(x)} = \begin{bmatrix}
               0 & v_2^{(x)} & 0 \\
               0 & 0 & 1 \\
               0 & 0 & v_3^{(x)}
    \end{bmatrix},}\quad
    \eqalign{
        Z_0^{(x)} = \begin{bmatrix}
               z_1^{(x)} & 0 & 0 \\
               z_2^{(x)} & 0 & 0 \\
               z_3^{(x)} & 0 & 0
              \end{bmatrix},\\
    Z_1^{(x)} = \begin{bmatrix}
               0 & 0 & z_4^{(x)} \\
               0 & z_5^{(x)} & 0 \\
               0 & z_6^{(x)} & 0
  \end{bmatrix},}
  \label{DevMat}
 \end{eqnarray}
with the precise values of the coefficients $v_j^{(x)},w_j^{(x)},z_j^{(x)}$ given in~\ref{app:Reps}.  We impose the following set of boundary equations for boundary vectors,
\begin{eqnarray}\label{eq:boundRelsLDs}
    \eqalign{
    f^{(1)}_{ss'}f^{(2)}_{s's'\!'} \bra*{l_s}W^{(2)}_{s'} Z^{(3)}_{s'\!'} = \bra*{l'_{s\chi(s,s',s'\!')}}V^{(3)}_{s'\!'},\\
    g^{(n-2)}_{ss'}\!g^{(n-1)}_{s's'\!'} Z^{(n-2)}_{s} V^{(n-1)}_{s'}\!\ket{r'_{s'\!'}} = W^{(n-2)}_{s}\!\ket{r_{\!\chi(s,s',s'\!')s'\!'}},\\
    \sum_{m,m'=0,1} \big[U^{\rm R}\big]_{ss'}^{mm'}f^{(n-1)}_{mm'}\ket{r_{mm'}} = \lambda_{\rm R} Z^{(n-1)}_{s}\ket{r'_{s'}}, \\
    \sum_{m,m'=0,1} \big[U^{\rm L}\big]_{ss'}^{mm'} g^{(1)}_{mm'} \bra*{l'_{mm'}} = \lambda_{\rm L} \bra*{l_s}Z^{(2)}_{s'},
    }
\end{eqnarray}
where the sub-(super-)scripts on $U^{\rm R,L}$ denote the row (column) of the corresponding matrix elements. If these boundary equations are satisfied, the MPAs solve the following eigenvector conditions
\begin{eqnarray}
    \Ue F(\cf) \bm{p} = \lambda_{R}(\cf) \bm{p'},\qquad
    \Uo G(\cf) \bm{p'} = \lambda_{L}(\cf) \bm{p}.
\end{eqnarray}
We then have that $\lambda(\cf) = \lambda_{R}(\cf) \lambda_{L}(\cf)$ is the dominant eigenvalue giving the SCGF. For simplicity we focus on conditional boundary driving~\eqref{eq:conditional}.
The solution to boundary equations~\eqref{eq:boundRelsLDs} consists of boundary vectors (given in~\cite{buca2019exact}) and a consistency condition for the NESS orbital, which takes the form of a quartic polynomial,
%The solutions to the boundary vectors are given in~\cite{buca2019exact}. The consistency condition satisfied by the solution for the NESS orbital gives a polynomial of degree $4$, 
\begin{eqnarray}
\lambda^4 - \alpha \gamma \cA_n \lambda^3 - \omega \cA_n^2 \lambda^2 - \beta\delta\xi \cA_n^3 \lambda + \eta \cA_n^4=0,
\label{lambdacf}
\end{eqnarray}
with  
\begin{eqnarray}\label{eq:LDsolSpectrPar}
    \fl
    \eqalign{
\omega = \cB_n(1-\alpha)(1-\delta)\beta'\gamma'+\cC_n(1-\beta)(1-\gamma)\alpha'\delta',\qquad
    \eta = (\alpha\beta-\widetilde\alpha\widetilde\beta)(\gamma\delta-\widetilde\gamma\widetilde\delta)\xi,\\
    \xi = 
\cB_n\cC_n(1-\alpha)(1-\beta)(1-\gamma)(1-\delta)\frac{\alpha'\beta'\gamma'\delta'}{\widetilde{\alpha}\widetilde{\beta}\widetilde{\gamma}\widetilde{\delta}},
}
\end{eqnarray}
and where
\begin{eqnarray}
    \eqalign{
  \cA_n = \prod_{x=1}^{n-1} (f_{00}^{(x)}g_{00}^{(x)}), \\
  \cB_n = \prod_{x=1}^{n/2} \frac{f_{01}^{(2x-1)}f_{10}^{(2x-1)}g_{11}^{(2x-1)}}{\big(f_{00}^{(2x-1)}\big)^2 g_{00}^{(2x-1)}}
  \prod_{x=1}^{n/2-1} \frac{g_{01}^{(2x)}g_{10}^{(2x)}f_{11}^{(2x)}}{\big(g_{00}^{(2x)}\big)^2 f_{00}^{(2x)}},\\
  \cC_n = \prod_{x=1}^{n/2} \frac{g_{01}^{(2x-1)}g_{10}^{(2x-1)}f_{11}^{(2x-1)}}{\big(g_{00}^{(2x-1)}\big)^2 f_{00}^{(2x-1)}}
  \prod_{x=1}^{n/2-1} \frac{f_{01}^{(2x)}f_{10}^{(2x)}g_{11}^{(2j)}}{\big(f_{00}^{(2x)}\big)^2 g_{00}^{(2x)}},
    }
\end{eqnarray}
and, for $\alpha'=\alpha+\widetilde{\alpha}$, $\beta'=\beta+\widetilde{\beta}$, $\gamma'=\gamma+\widetilde{\gamma}$, $\delta'=\delta+\widetilde{\delta}$,
 \begin{eqnarray}\label{eq:LDsolPartilde}
     \eqalign{
 \widetilde\alpha = \frac{f_{10}^{(1)}g_{11}^{(1)}}{f_{00}^{(1)}g_{01}^{(1)}}(1-\alpha), \qquad 
     &\widetilde\beta = \frac{f_{11}^{(1)}g_{10}^{(1)}}{f_{01}^{(1)}g_{00}^{(1)}}(1-\beta), \\
\widetilde\gamma = \frac{f_{11}^{(n-1)}g_{01}^{(n-1)}}{f_{10}^{(n-1)}g_{00}^{(n-1)}}(1-\gamma), \qquad 
     &\widetilde\delta = \frac{f_{01}^{(n-1)}g_{11}^{(n-1)}}{f_{00}^{(n-1)}g_{10}^{(n-1)}}(1-\delta).}
 \end{eqnarray}
Crucially, when $\sigma=0$ the eigenvalue equation \eqref{lambdacf} reduces to the eigenvalue equation for the NESS orbital \eqref{char1}, as expected. 

\begin{figure}
\vspace{0mm}
\begin{center}
\includegraphics[width=0.8\columnwidth]{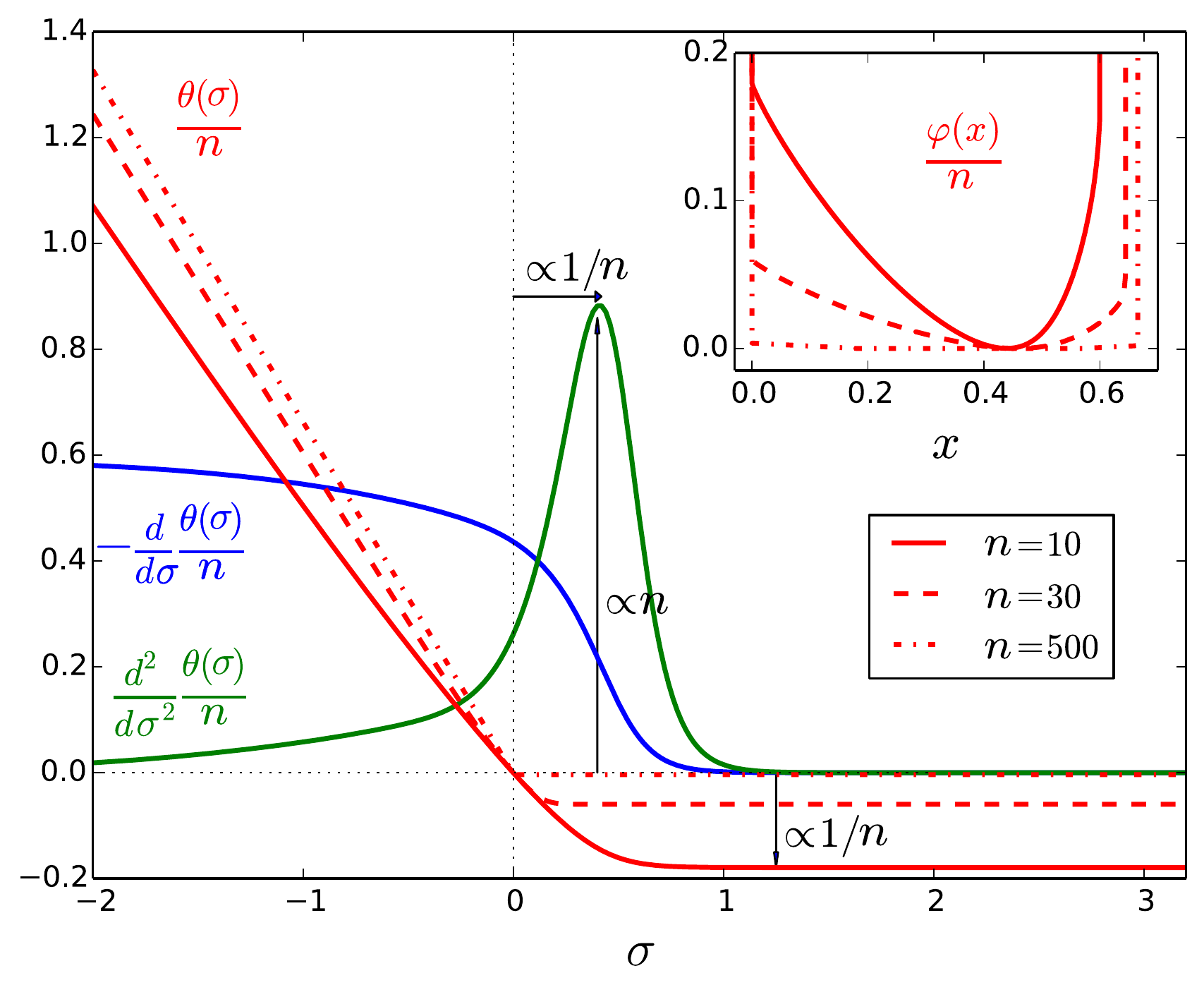}
\end{center}
\vspace{-5mm}
\caption{
The red curves show SCGF $\theta(\sigma)$ 
for the time-integrated number of occupied cells (i.e. $f_{x}(s,s')=\frac{1}{2}(s+s')$ and $g_{x}=0$) from \eqref{integratedobs} for $n=10,30,500$. It approaches the asymptotic form \eqref{theta} in the thermodynamic limit. The blue curve is the order parameter $\lim_{T \to \infty} \langle \cO_T e^{-\sigma \cO_T} \rangle / (T n Z_T(\sigma)) = -\theta'(\sigma)/n$, which has a first-order phase transition between phases $\sigma<0$ and $\sigma>0$. The green susceptibility curve $\theta''(\sigma)/n$ diverges as $\propto n$. The inset is the rate function $\varphi(x)$ with $x=\cO_T / (T n)$. The conditional boundary driving parameters are $(\alpha,\beta,\gamma,\delta)=(1/3,1/8,1/2,2/5)$.
}
\label{fig:cumulants}
\end{figure} 
 
Equation~\eqref{lambdacf} can be solved exactly for arbitrary observables and very large system sizes $n$. However, we find remarkable universality in the scaling of all observables in the thermodynamic limit $n\to\infty$.  As the observables \eqref{integratedobs} are extensive in the system size $n$, we have that the limits
\be 
a:=-\lim_{n \to \infty}\frac{\log \cA_n}{n \cf}, \quad
b:=-\lim_{n \to \infty}\frac{\log \cB_n}{n \cf}, \quad
c:=-\lim_{n \to \infty}\frac{\log \cC_n}{n \cf},
\ee
exist. The SCGF then takes the following simple form
\begin{eqnarray} \label{eq:scaling}
 \theta_n(\cf) = \cE(n \cf)+\cO\Big(\frac{1}{n}\Big),
\end{eqnarray}
where the function $\cE(\varsigma)$ is such that $\exp[\cE(\varsigma)]$ is the maximum real root of the polynomial 
\begin{eqnarray} \label{eq:poly_scaling}
    \eqalign{
    0 &=  e^{4[\cE(\varsigma)+\varsigma a]}-\alpha\gamma e^{3[\cE(\varsigma)+\varsigma a]} \\
    &- \Big[e^{-\varsigma b} (1-\alpha)(1-\delta)+e^{-\varsigma c}(1-\beta)(1-\gamma)\Big]e^{2[\cE(\varsigma)+\varsigma a]} \\
    &-e^{-(b+c)\varsigma}\beta\delta e^{\cE(\varsigma)+\varsigma a} 
  + e^{-(b+c)\varsigma}(\alpha+\beta-1)(\gamma+\delta-1).
    }
\end{eqnarray}
Remarkably \eqref{eq:scaling} gives the large system size form of long-time cumulants of $\cO_T$
\begin{eqnarray}
 \lim_{T \to \infty} \frac{1}{T} \langle\!\langle \cO_T^k \rangle\!\rangle = \left. (-1)^k \frac{d^k}{d\cf^k} \theta_n \right|_{\cf=0} \propto n^{k},
\end{eqnarray}
where $\langle\!\langle \cdot \rangle\!\rangle$ denotes the cumulant.
Crucially, for $k \geq 2$ we see a divergence for $n \to \infty$ at $\cf=0$. 

We can extract explicitly a \emph{universal} form from \eqref{eq:poly_scaling} of the first few cumulants. From $k=1$ we get the average per unit time, 
\begin{eqnarray}
\lim_{T \to \infty} \frac{1}{Tn} \langle \cO_T \rangle =
 a+\frac{\mu b+ \nu c}{2(\mu+\nu)+\alpha\gamma-\beta \delta}+ \mathcal{O}\Big(\frac{1}{n}\Big),
 \end{eqnarray}
while from $k=2$ we get the corresponding variance per unit time
\begin{eqnarray}\fl
    \eqalign{
\lim_{T \to \infty} \frac{1}{Tn} {\rm var}~\cO_T  
    &= n \bigg[ 
-\frac{2bc(1-\alpha\gamma)+\mu b^2 +\nu c^2}{2(\mu+\nu)+\alpha\gamma-\beta \delta} 
    +\frac{3(\mu b + \nu c)^2}{(2(\mu+\nu)+\alpha\gamma-\beta \delta)^2} \\
    &+\!\frac{2(b+c)(\mu b + \nu c)(2-\alpha\gamma)}{(2(\mu+\nu)+\alpha\gamma-\beta \delta)^2} 
    \!-\!
    \frac{2(\mu b + \nu c)^2(4+\mu+\nu-\alpha\gamma)}{(2(\mu+\nu)+\alpha\gamma-\beta \delta)^3}
 \bigg]
    \!+\!\mathcal{O}(1)},
\end{eqnarray}
where $\mu = \gamma(1-\alpha) + \beta(1-\gamma)$ and $\nu = \delta(1-\alpha) + \alpha(1-\gamma)$. The divergence of the variance (see Fig.~\ref{fig:cumulants}) signals a presence of a \emph{dynamical phase transition} in the statistics of the trajectories. 
 
\begin{figure}
\vspace{3mm}
\begin{center}
\includegraphics[width=0.8\columnwidth]{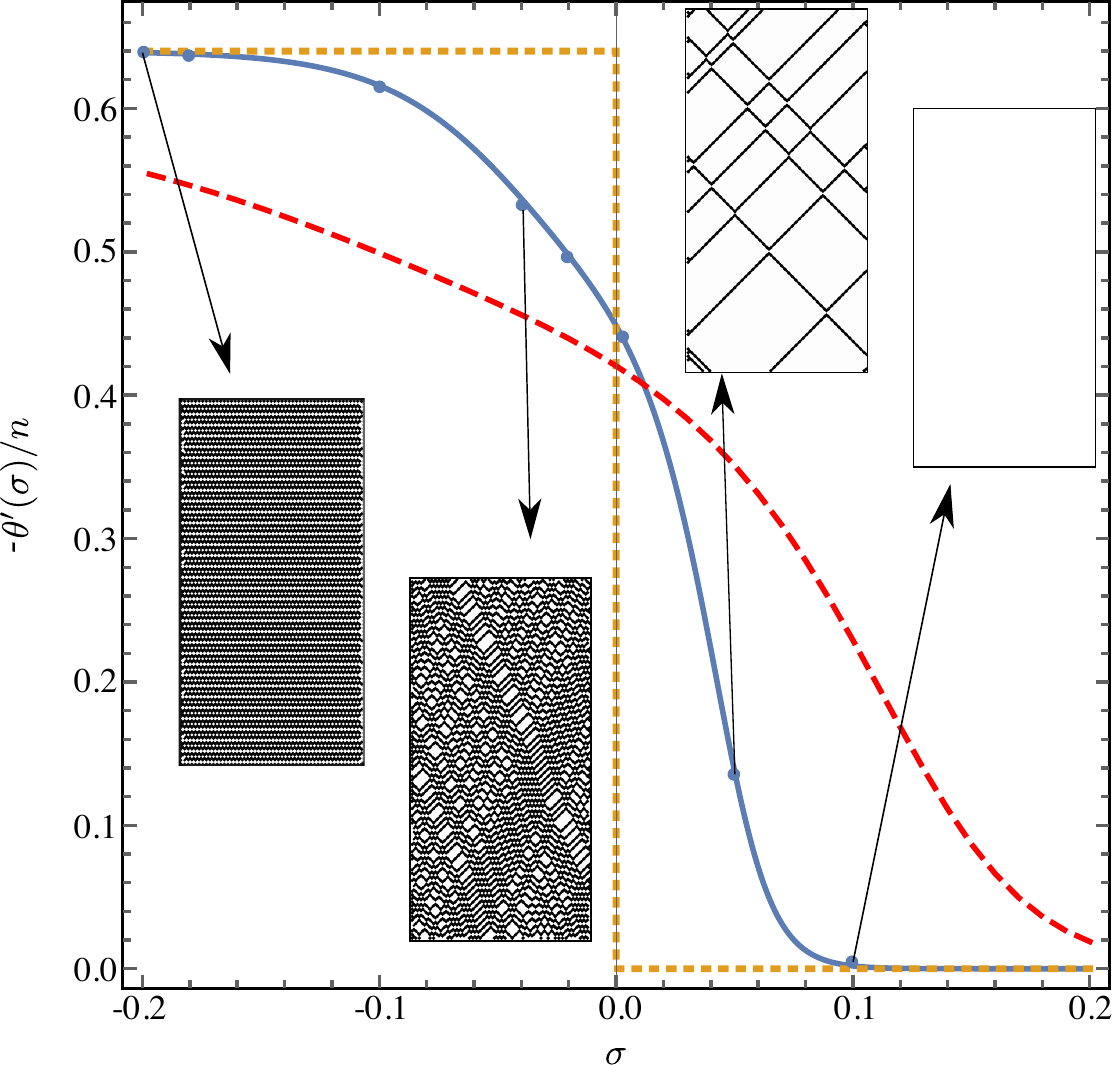}
\end{center}
\vspace{-6mm}
\caption{
Plot of the order parameter $-\theta'(\sigma)/n$ (blue curve) for $N=100$ showing typical active and inactive trajectories for various values of $\sigma$. These are obtained by weighting the probabilities of trajectories using a Doob transform~\cite{Jack2010,chetrite2015nonequilibrium,garrahan2018aspects}.
%\cite{Hedges2009,garrahan2018aspects,Jack2010}.
The red and orange curve correspond to the respective order parameter for $N=40$ and $N\to\infty$. For details see Ref.~\cite{buca2019exact}. The observables and parameters are the same as in Fig.~\ref{fig:cumulants}.
}
\label{fig:active_inactive}
\end{figure}

More specifically the scaling function for $b+c>0$ has the form,
\begin{eqnarray}
\cE(\varsigma) = 
 \begin{cases}
     -a\varsigma + \log(\alpha\gamma) + o(1), \quad &\varsigma \to \infty, \\
     -\frac{1}{3}(b+c+3a)\varsigma + \frac{1}{3}\log(\beta\delta) + o(1), \quad
     &\varsigma \to -\infty.
 \end{cases}
\end{eqnarray}
For $(b+c)<0$ the asymptotic behaviour is obtained by $\varsigma \to -\varsigma$. We deduce that the SCGF converges to,
\begin{eqnarray}
\lim_{n\to\infty}\frac{1}{n}\theta_n(\cf) = 
 \begin{cases}
     -a \cf , \quad &\cf>0, \\
     -\frac{1}{3}(b+c+3a)\cf, \quad &\cf<0,
 \end{cases}
 \label{theta}
\end{eqnarray}
when $b+c>0$. The singularity at $\cf=0$ corresponds to the first-order phase transition. The behaviour should be contrasted with facilitated models~\cite{garrahan2018aspects} as here there are no fluctuations within each dynamical phase. 

In order to demonstrate the phase transition we show in Fig.~\ref{fig:active_inactive}, alongside the order parameter as a function of $\sigma$, typical active and inactive trajectories (on each side of the transition point $\sigma=0$). 
 
\section{Hydrodynamics of Rule 54}
\label{sect:hydro}
Let us now turn away from the boundary driven setup and consider the dynamics
of the system far from the edges, or equivalently, dynamics of an infinite
system defined on $\mathbb Z$ lattice. In this section we start the discussion
of the bulk physics by providing the effective hydrodynamic equations that
govern the dynamics on the large scales. The starting point is the description
of \emph{macrostates}, from which one extracts the information sufficient to
characterize the dynamics on the Euler scale. As an example we provide the
parametrisation of the state after the bipartitioning quench and we finish the
section by discussing diffusive corrections. 

\subsection{Macroscopic description}
We start by introducing $n_{+}$ and $n_{-}$ to denote
densities of left and right movers, given as
\begin{eqnarray}\label{def:hydroDens}
    n_{\nu}=\frac{2 N_{\nu}}{n},
\end{eqnarray}
where $N_{\nu}$ for $\nu\in\{\pm\}$ are the numbers of particles of both types
on the lattice of length $n$ (cf.~\eqref{eq:numParticles}).
%Note that for convenience
%Note that the density is defined
%with an additional factor of $2$, since this provides a more natural scaling of the
%quantities introduced later.
To account for the
entropic contribution to the macrostate,
we introduce $n_{\nu}^{\rm tot}$ as the total density of ``slots'' that can be occupied by particles of
type $\nu\in\{\pm\}$. In the context of thermodynamic Bethe ansatz
(TBA)~\cite{yang1969thermodynamics,takahashi1999thermodynamics} (see~\cite{TBAIntro} for a pedagogical review), $n_{\nu}^{\rm tot}$
corresponds to the \emph{total density} (i.e.\ the sum of the densities of particles and
holes). Classically we interpret it as the density of the effective \emph{free space} in which
the particles with a finite width can move~\cite{doyon2017dynamics,doyon2018soliton}.
As a result of the interactions between particles, the total densities
$n_{\nu}^{\rm tot}$  exhibit a nontrivial dependence on densities $n_{\pm}$ described
by the following relation,
\begin{eqnarray}\label{eq:hydro1}
    n_{\nu}^{\rm tot}=1-n_{\nu}+n_{-\nu},\qquad \nu\in\{+,-\}.
\end{eqnarray}
This identity follows from the TBA description of the quantum generalisation
of the model (see~\cite{friedman2019integrable} for the details). Intuitively, we can motivate
it by realizing that two solitons of the same kind are either separated at least by two sites, or there is an oppositely moving soliton in between them, as is
demonstrated by the following two diagrams,
\begin{eqnarray}
    \begin{tikzpicture}[scale=0.3,baseline={([yshift=-0.5ex]current bounding box.center)}]
        \rectangle{0}{0}{1};
        \rectangle{1}{1}{1};
        \rectangle{2}{0}{0};
        \rectangle{3}{1}{0};
        \rectangle{4}{0}{1};
        \rectangle{5}{1}{1};
        \begin{scope}[shift={(8,0)}]
        \rectangle{0}{0}{1};
        \rectangle{1}{1}{1};
        \rectangle{2}{0}{1};
        \rectangle{3}{1}{1};
        \end{scope}
    \end{tikzpicture}.
\end{eqnarray}
Solitons of the same kind therefore behave as hard rods with a finite length (hence
the term $-n_{\nu}$),
but due to the attractive interaction with the other species (i.e.\ the shift after scattering is $-2$ sites) their total density increases and we obtain the term $+n_{-\nu}$.
This is consistent with the
effective description in terms of the \emph{flea-gas}~\cite{doyon2018soliton},
where the interaction between the particles of the same kind corresponds to the positive
jump, while the scattering of oppositely-moving solitons induces the negative jump.

We restrict the discussion to the simplest class of generalised Gibbs ensembles (GGE), 
that are characterized by two chemical potentials, $\mu_{+}$ and $\mu_{-}$,
corresponding to densities of right and left movers, so that the probability of
a configuration $\ul{s}$ is given by 
\begin{eqnarray}
\label{GGE}
    p^{\text{GGE}}_{\ul{s}} = \frac{1}{Z_{\text{GGE}}}
    \mathrm{e}^{-\mu_{+}N_{+}(\ul{s})-\mu_{-}N_{-}(\ul{s})}.
\end{eqnarray}
This GGE is equivalent to the state~\eqref{eq:GibbsStatePSA} given by the PSA
with periodic boundary conditions (see Section~\ref{subsec:conservedCharges}
for the details), and can be alternatively given in terms of the
MPS~\eqref{pness} if one replaces the left and right boundary vectors with the
trace (see the discussion in Section~\ref{sec:PBmps}).

In the large system size limit, the partition function is expressed in terms of
the following phase integral involving densities of particles,
\begin{eqnarray}
    \lim_{n\to\infty}Z_{\text{GGE}}=
    \lim_{n\to\infty}
    \int\!d n_{+}\, dn_{-}\,
    \mathrm{e}^{-\frac{n}{2}\left(\mu_{+}n_{+}+\mu_{-}n_{-}-s[n_{+},n_{-}]\right)},
\end{eqnarray}
where the configurational entropy density $s[n_{+},n_{-}]$ is given by,
\begin{eqnarray}\label{eq:YYent}
    s[n_{+},n_{-}]=-\sum_{\nu\in\{+,-\}}\left(
    n_{\nu}\log\left(\frac{n_{\nu}}{n_{\nu}^{\rm tot}}\right)
    +(n_{\nu}^{\rm tot}-n_{\nu})\log\left(1-\frac{n_{\nu}}{n_{\nu}^{\rm tot}}\right)
    \right).
\end{eqnarray}
In the TBA description of the quantum model, this form of entropy density is precisely
the Yang-Yang entropy~\cite{friedman2019integrable}. From the classical point of view, 
this form can at first sight be surprising as one could naively expect the second term to vanish~\cite{doyon2017dynamics}.
However, the particles in the model
do not behave exactly as hard rods, which makes the entropy density more complicated.
One can also verify the validity of~\eqref{eq:YYent} independently, without relying
on the TBA description of the quantum model, by studying the partition sum in finite
systems (see~\ref{appsubsec:partitionSum} for details).
In the thermodynamic limit the expectation values of densities are given by evaluating this integral using the saddle-point
method, which gives the following relation between $n_{\nu}$ and $\mu_{\nu}$
\begin{eqnarray}\label{eq:TBAdefEq1}
    \frac{n_{\nu}^{\rm tot}-n_{\nu}}{n_{\nu}}=
    \mathrm{e}^{\epsilon_{\nu}}
    ,\qquad
    \epsilon_{\nu}=\mu_{\nu}
    +\log\frac{1+\mathrm{e}^{-\epsilon_{\nu}}}{1+\mathrm{e}^{-\epsilon_{-\nu}}},
\end{eqnarray}
where $\epsilon_{\nu}$ can be interpreted as a single-particle (quasi-)energy~\cite{yang1969thermodynamics}.
Additionally, for later convenience, we introduce filling functions $\vartheta_{\nu}$ as the
ratios between the density and the total density,
\begin{eqnarray}\label{eq:TBAdefEq2}
    \vartheta_{\nu}=\frac{n_{\nu}}{n_{\nu}^{\rm tot}}.
\end{eqnarray}
Note that the GGE is completely determined either by the pair of chemical potentials
$\{\mu_{+},\mu_{-}\}$, or by the pair of filling functions $\{\vartheta_{+},\vartheta_{-}\}$.

Excitations on top of these macrostates do not simply move with velocities $\pm1$,
but rather they slow down due to the interactions with other particles, and
their velocities get renormalized into
\begin{eqnarray}\label{eq:hydroDressedVelocities}
    v_{\nu}=\nu\left(1-\frac{2 n_{-\nu}}{1+n_{\nu}+n_{-\nu}}\right)=
    \frac{\nu}{1+2\vartheta_{-\nu}}.
\end{eqnarray}
There are many equivalent ways of finding this result: one can derive it from
the definition of the dressed
velocity~\cite{bertini2016transport,castroalvaredo2016emergent} by using the
TBA description put forward in~\cite{friedman2019integrable}; one can obtain it
by using the appropriate result for hard
rods~\cite{doyon2017dynamics,boldrighini1983one}, or an appropriate soliton
gas~\cite{doyon2018soliton,el2003thermodynamic,el2011kinetic,el2021soliton};
or alternatively the dressed velocity can be identified in terms of expectation
values in the GGE~\cite{gopalakrishnan2018hydrodynamics}. Additionally, the
simplicity of the result allows us to obtain it through an elementary argument
introduced in~\cite{gopalakrishnan2018hydrodynamics}: let us imagine that we
are tracing a right-moving soliton that starts at time $t=0$ at position $x=0$
and at time $t$ it arrives to the position $x$. The soliton moves with the bare
velocity $1$ and gets displaced for two sites every time it scatters, therefore
$t-x=2 m_{\textrm{s}}$, where $m_{\textrm{s}}$ is the number of
\emph{left}-movers encountered by the tagged particle. The left-movers are
moving with the dressed velocity $v_{-}$, therefore the tagged soliton will
encounter all the left-movers that were at time $0$ between $0$ and $x-v_{-}t$,
which by identifying $v_{+}=x/t$ gives us,
\begin{eqnarray}
    1-v_{+}=(v_{+}-v_{-})n_{-},\qquad
    1+v_{-}=(v_{+}-v_{-})n_{+}.
\end{eqnarray}
The second relation follows from an analogous procedure involving a tagged
left-mover. Note that an additional factor of $\frac{1}{2}$ comes from the
definition of left/right densities~\eqref{def:hydroDens}. This set of equations
provides exactly the dressed velocities given by~\eqref{eq:hydroDressedVelocities}.

\begin{figure}
    \centering
    \includegraphics[width=0.125\textwidth]{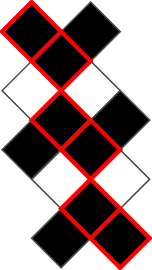}
    \caption{\label{fig:hydrofig1} A trajectory of the slowest right-mover. The red-bordered
    sites denote the right-mover that undergoes the largest possible number of scatterings,
    which slow it down to the velocity $\frac{1}{3}$.}
\end{figure}

For any choice of the parameters $\{\mu_{+},\mu_{-}\}$, the velocities
$v_{\nu}$ are restricted to
\begin{eqnarray}\label{eq:admVels}
    \abs{v_{\nu}}\in\left[\tfrac{1}{3},1\right].
\end{eqnarray}
%interval $\abs{v_{\pm}}\in\left[\tfrac{1}{3},1\right]$.
The velocity cannot be
larger than $1$ (or smaller than $-1$), since the interactions can only slow
the particles down. Alternatively, we can also explain the upper bound by the
fact that the dynamics is given in terms of local mutually-commuting operators
(cf.~\eqref{Ue} and~\eqref{Uo}), which restricts the correlation spreading to
the causal light-cone spreading with the speed $1$ (see the upcoming discussion
in Section~\ref{subsec:circuits}). The lower bound is at first sight less
clear, but can be easily understood when we recall that two consecutive
scatterings of a given soliton should be at least $3$ time-steps apart, and
therefore its effective speed cannot drop to $0$. An example of a configuration
corresponding to the soliton moving with the smallest velocity is shown in
Fig.~\ref{fig:hydrofig1}.

\subsection{Example: Inhomogeneous quench}
Now we have all the necessary ingredients to obtain predictions for the
dynamics on the Euler scale. We consider an example of the bipartitioning
protocol, where at time $t=0$, the left half of the system is prepared in the
state determined by $(\vartheta_{+}^{L},\vartheta_{-}^L)$, while the
state in the right half is $(\vartheta_{+}^{R},\vartheta_{-}^R)$. At long times,
the state is locally described by a GGE parametrised by $\vartheta_{\pm,\zeta}$ that slowly
changes with the ray $\zeta=x/t$,
\begin{eqnarray}
    \lim_{\substack{x,t\to \infty\\ x/t=\zeta}}\vartheta_{\nu}(x,t) =\vartheta_{\nu,\zeta}.
\end{eqnarray}
The limiting values far from the junction are given by the left and right GGEs,
\begin{eqnarray}\label{eq:hydroInhQ1}
    \lim_{\zeta\to-\infty} \vartheta_{\nu,\zeta}= \vartheta_{\nu}^{L},\qquad
    \lim_{\zeta\to\infty} \vartheta_{\nu,\zeta}= \vartheta_{\nu}^{R},
\end{eqnarray}
while for the intermediate values of $\zeta$, the filling functions satisfy
the following consistency condition~\cite{bertini2016transport,castroalvaredo2016emergent},
\begin{eqnarray}
    \vartheta_{\nu,\zeta}=\begin{cases}
        \vartheta_{\nu}^L,& v_{\nu}(\zeta)>\zeta,\\
        \vartheta_{\nu}^R,& v_{\nu}(\zeta)<\zeta.
    \end{cases}
\end{eqnarray}
Here $v_{\nu}(\zeta)=\frac{\nu}{1+2\vartheta_{-\nu,\zeta}}$ denotes the dressed
velocity in the state $(\vartheta_{+,\zeta},\vartheta_{-,\zeta})$, as given by
Equation~\eqref{eq:hydroDressedVelocities}. The consistency condition describes
the profile consisting of two steps, determined by the velocity of left-movers in the state
on the right and the velocity of right-movers on the left,
\begin{eqnarray}\label{eq:inhQpred}
    (\vartheta_{+,\zeta},\vartheta_{-,\zeta})=
    \begin{cases}
        (\vartheta_{+}^L,\vartheta_{-}^L),& \zeta<-\frac{1}{1+2\vartheta_{+}^L},\\
        (\vartheta_{+}^L,\vartheta_{-}^R),& -\frac{1}{1+2\vartheta_{+}^L}<\zeta<
        \frac{1}{1+2\vartheta_{-}^R},\\
        (\vartheta_{+}^R,\vartheta_{-}^R),& \zeta>\frac{1}{1+2\vartheta_{-}^R}.
    \end{cases}
\end{eqnarray}

\subsection{Diffusive corrections}
%Diffusion matrix
%\begin{eqnarray}
%    \mathcal{D}_{\nu_1 \nu_2} =
%    \frac{2\nu_1 \nu_2 \vartheta_{+}\vartheta_{-}(1-\vartheta_{+})(1-\vartheta_{-})}
%    {(1+2\vartheta_{+})(1+2\vartheta_{-})(1+\vartheta_{+}+\vartheta_{-})}
%\end{eqnarray}
Hydrodynamics up to the diffusive scale is described by the Navier-Stokes 
equation for filling functions $\vartheta_{+},\vartheta_{-}$,
\cite{denardis2018hydrodynamic,denardis2019diffusion}
\begin{eqnarray}\label{eq:NavierStokes}
    \partial_{t} \vartheta_{\nu} + v_{\nu}\partial_{x} \vartheta_{\nu} =
    \smashoperator{\sum_{\nu_1}}
    D_{\nu \nu_1}\partial_x^2\vartheta_{\nu_1}
    +\underbrace{\smashoperator{\sum_{\nu_1,\nu_2}}
    A_{\nu\nu_1}\left(\partial_x B_{\nu_1 \nu_2}\right)\partial_x\vartheta_{\nu_2}
    }_{\mathcal{O}\left((\partial_x\vartheta)^2\right)}.
\end{eqnarray}
The simplicity of the TBA for RCA54~\cite{friedman2019integrable} allows us
to find the following explicit form of the coefficients $D_{\nu_1,\nu_2}$,
\begin{eqnarray}\label{eq:defD}
    D_{\nu_1 \nu_2} =
    \frac{2\nu_1 \nu_2 \vartheta_{-\nu_2}(1-\vartheta_{-\nu_2})}
    {(1+2\vartheta_{-\nu_1})^2(1+2\vartheta_{-\nu_2})}.
\end{eqnarray}
Here we obtain an additional factor of $2$ w.r.t.\ the expressions introduced
in~\cite{denardis2019diffusion,gopalakrishnan2018hydrodynamics,friedman2019integrable},
due to the convention used in the definition of particle densities~\eqref{def:hydroDens}.
The diagonal coefficients $D_{\nu,\nu}$ can be understood as the variance of the
position of the particle of type $\nu$ due to the fluctuations in the particle
density in the quasi-stationary state, and agree with the prediction
of~\cite{gopalakrishnan2018hydrodynamics,friedman2019integrable}.

For completeness, we conclude the section by reporting the coefficients~$A_{\nu_1,\nu_2}$,
$B_{\nu_1,\nu_2}$, that characterize the nonlinear term, as given
by~\cite{denardis2019diffusion},
\begin{eqnarray}
    \eqalign{
        A_{\nu_1 \nu_2}=
        \frac{(1+\vartheta_{+}+\vartheta_{-})(\delta_{\nu_1,\nu_2}-\nu_1\nu_2\vartheta_{\nu_1})}
        {1+2\vartheta_{-\nu_1}},\\
        B_{\nu_1 \nu_2}=D_{\nu_1 \nu_2}
        \frac{(1+2\vartheta_{-\nu_1})(1-\vartheta_{-\nu_1}+3\vartheta_{\nu_1})}
        {(1+2\vartheta_{+})(1+2\vartheta_{-})(1-(\vartheta_{-}+\vartheta_{+})^2)}.
    }
\end{eqnarray}
Even thought the expressions are very explicit, we are not able to independently verify
the validity of the quadratic term in the Navier-Stokes equation~\eqref{eq:NavierStokes},
since it is hard to solve it to obtain predictions in this limit.

\section{Space-time duality: time-states and space dynamics}\label{sec:TS}
The starting point of this section are multi-point correlation functions of
local one-site observables at the same spatial coordinate but at different
times, while the system is assumed to be in a stationary state $\vec{p}$.
Such correlation functions can be interpreted
as expectation values of observables that are ultra-local in space, but extended 
in time, and, intuitively, one can imagine to introduce a corresponding probability
vector $\vec{q}$ (to be defined precisely below), that contains expectation values
of all such observables.
The construction of an explicit representation of $\vec{q}$ can be schematically
understood as a problem of keeping track of solitons that pass through the origin,
as shown in Fig.~\ref{fig:TSfig1}. Generically, there is no guarantee that this is feasible.
However, in the case of GGEs introduced in the previous section, the solitons that reach
the central site are uncorrelated, which implies an efficient MPA representation of $\vec{q}$
in terms of $3\times3$ matrices constructed in Subsection~\ref{sec:TimeStates}.

The probability distribution $\vec{q}$ has short-range correlations, therefore
a natural question that arises is whether it can be interpreted as a
fixed-point (steady state) of another dynamical system, which corresponds to exchanging the roles
of space and time in RCA54.  In Subsection~\ref{subsec:spaceEvol} we show that
the evolution in space can indeed be given in terms of local and deterministic
maps, however, their support is  bigger than that of the time-evolution.
Equivalently, space evolution can be formulated as a composition of
non-deterministic $3$-site maps and projectors to a subspace of configurations,
as we show in Subsection~\ref{subsec:circuits}, where we also introduce a
convenient tensor-network representation of dynamics that provides an algebraic
interpretation of the MPA encoding multi-time correlation functions. We finish
the section by a short discussion of the physics of one-site correlations.

\begin{figure}
  \centering
  \includegraphics[width=0.75\textwidth]{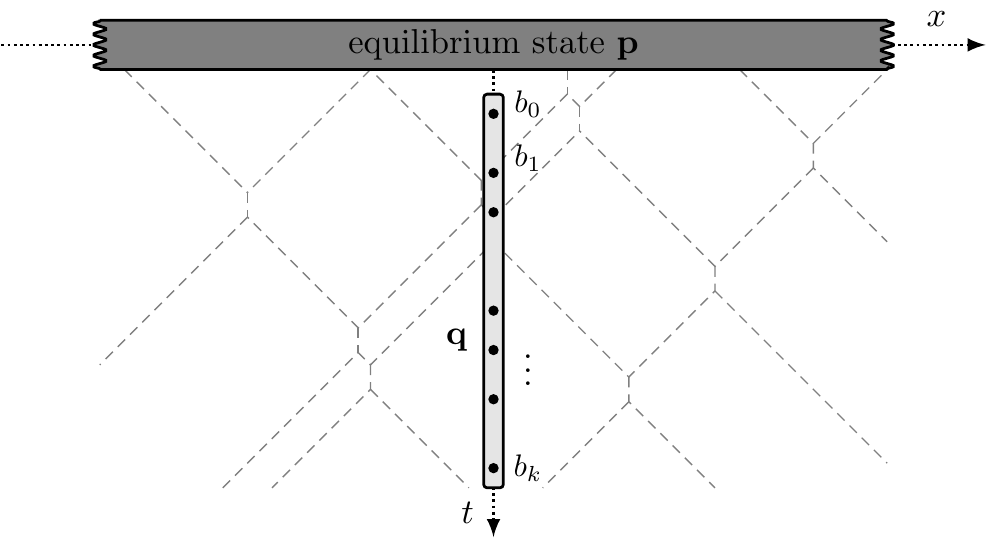}
    \caption{\label{fig:TSfig1} Schematic picture of the setup. We are interested in
    probability distribution $\vec{q}$ of configurations $(b_0,b_1,\ldots,b_k)$ at
    the center of the chain $x=0$ and different times $t$, while the system is initialized
    in the equilibrium state $\vec{p}$. Intuitively, this can be understood as keeping
    track of particles that pass through the origin. Since they are interacting with each
    other, this is generally a nontrivial task and the computational complexity of $\vec{q}$
    might still be exponentially growing with time. In our case, however, the statistical
    independence of solitons implies an efficient MPA representation of $\vec{q}$.
    }
\end{figure}

\subsection{Stationary states of the closed system}\label{sec:PBmps}
We start by recalling that the NESS introduced in Sections~\ref{sect:BD} and~\ref{sect:MPA}
in the absence of boundary driving (and by assuming periodic boundaries) reduces to
a GGE from the previous section -- see the discussion in Section~\ref{subsec:conservedCharges}.
We will therefore use the parametrisation of this class of stationary states in
terms of the matrices~$\vW\!(\xi,\omega)$, $\vV(\xi,\omega)$
introduced in~\eqref{eq:matWWp}. In particular, for a periodic system of \emph{even}
size $n$,   
\begin{eqnarray}
  \vec{p}^{(n)}=\frac{1}{Z_{n}} \tr\left(\vW_1\vV_2\vW_3\cdots \vV_{n}\right),
  \qquad
  Z_{n}=\tr T^{n/2},
\end{eqnarray}
where the partition function~$Z_{n}$ is expressed in terms of the transfer
matrix~$T$
\be
T=\left(\W_0+\W_1\right)\left(\V_0+\V_1\right).
\ee
We recall that the chemical potentials associated to the densities of left and
right movers are given by $\mu_{+}=-\log\xi$ and $\mu_{-}=-\log\omega$.

In this section we are interested in the large system-size limit, therefore we
introduce the following graphical notation for the asymptotic (i.e.\ $n\to\infty$)
probability of finite block configurations,
\begin{eqnarray}\label{eq:asymptPropEven}
    \eqalign{
        p\Big(\!\begin{tikzpicture}
            [scale=0.3,baseline={([yshift=-0.6ex]current bounding box.center)}]
            \textrectangle{0}{-0.5}{$s_1$};
            \textrectangle{1}{0.5}{$s_2$};
            \textrectangle{2}{-0.5}{$\cdots$};
            \textrectangle{3}{0.5}{};
            \textrectangle{4}{-0.5}{};
            \textrectangle{5}{0.5}{};
            \textrectangle{6}{-0.5}{$\cdots$};
            \textrectangle{7}{0.5}{$s_{2k}$};
        \end{tikzpicture}\!\Big)=
        \lim_{n\to\infty} \frac{\tr\left(\W_{s_1}\V_{s_2}\cdots \V_{s_{2k}}T^{n/2-k}\right)}
        {\tr\left(T^{n/2}\right)},\\
        p\Big(\!\begin{tikzpicture}[scale=0.3,baseline={([yshift=-0.6ex]current bounding box.center)}]
            \textrectangle{0}{0.5}{$s_1$};
            \textrectangle{1}{-0.5}{$s_2$};
            \textrectangle{2}{0.5}{$\cdots$};
            \textrectangle{3}{-0.5}{};
            \textrectangle{4}{0.5}{};
            \textrectangle{5}{-0.5}{};
            \textrectangle{6}{0.5}{$\cdots$};
            \textrectangle{7}{-0.5}{$s_{2k}$};
        \end{tikzpicture}\!\Big)=
        \lim_{n\to\infty} \frac{\tr\left(\V_{s_1}\W_{s_2}\cdots \W_{s_{2k}}
        T^{\prime\,n/2-k}\right)}
        {\tr\left(T^{n/2}\right)},}
\end{eqnarray}
where the transfer matrix $T^{\prime}=(\V_0+\V_1)(\W_0+\W_1)$ is obtained from~$T$ by
the exchange of parameters~$\xi$, $\omega$. Analogously, the probabilities of
configurations with odd block length can be obtained by summing over the value of the
last bit,
\begin{eqnarray}\label{eq:asymptPropOdd}
    \eqalign{
        p\Big(\!\begin{tikzpicture}[scale=0.3,baseline={([yshift=-0.6ex]current bounding box.center)}]
            \textrectangle{0}{-0.5}{$s_1$};
            \textrectangle{1}{0.5}{$s_2$};
            \textrectangle{2}{-0.5}{$\cdots$};
            \textrectangle{3}{0.5}{};
            \textrectangle{4}{-0.5}{};
            \textrectangle{5}{0.5}{$\cdots$};
            \textrectangle{6}{-0.5}{\scalebox{0.8}{$s_{2k-1}$}};
        \end{tikzpicture}\!\Big)=
        \lim_{n\to\infty} \frac{\tr\left(\W_{s_1}\V_{s_2}\cdots \W_{s_{2k-1}}
        (\V_{0}+\V_{1})T^{n/2-k}\right)}
        {\tr\left(T^{n/2}\right)},\\
        p\Big(\!\begin{tikzpicture}[scale=0.3,baseline={([yshift=-0.6ex]current bounding box.center)}]
            \textrectangle{0}{0.5}{$s_1$};
            \textrectangle{1}{-0.5}{$s_2$};
            \textrectangle{2}{0.5}{$\cdots$};
            \textrectangle{3}{-0.5}{};
            \textrectangle{4}{0.5}{};
            \textrectangle{5}{-0.5}{$\cdots$};
            \textrectangle{6}{0.5}{\scalebox{0.8}{$s_{2k-1}$}};
        \end{tikzpicture}\!\Big)=
        \lim_{n\to\infty} \frac{\tr\left(\V_{s_1}\W_{s_2}\cdots \V_{s_{2k-1}}
        (\W_0+\W_1)
        T^{\prime\,n/2-k}\right)}
        {\tr\left(T^{n/2}\right)}.}
\end{eqnarray}
These asymptotic probabilities can be formulated in a more explicit and
convenient form by diagonalizing $T$ and $T^{\prime}$. However, the precise
form is not essential for the upcoming discussion, and one can find it
in~\ref{appsubsec:thermodynamicEVs}.

These asymptotic probabilities exhibit a nontrivial factorization property: the
conditional probability of observing $k$-th bit to be in the state~$s_k$, given
the configuration on previous~$k-1$ sites $(s_1,s_2,\ldots,s_{k-1})$  only
depends on the last three bits. Similarly, the conditional probability of~$s_1$ given
the configuration $(s_2,s_3,\ldots, s_{k})$ only depends on $(s_1,s_2,s_3)$. Explicilty,
the following holds,
\begin{eqnarray}\fl
    \label{eq:condProb2}
    \frac{
        p\Big(\!\begin{tikzpicture}[scale=0.3,baseline={([yshift=-0.6ex]current bounding box.center)}]
            \textrectangle{0}{-0.5}{$s_1$};
            \textrectangle{1}{0.5}{$s_2$};
            \textrectangle{2}{-0.5}{\scalebox{0.9}{$\cdots$}};
            \textrectangle{3}{0.5}{};
            \textrectangle{4}{-0.5}{\scalebox{0.9}{$\cdots$}};
            \textrectangle{5}{0.5}{\scalebox{0.85}{$s_{k-2}$}};
            \textrectangle{6}{-0.5}{\scalebox{0.85}{$s_{k-1}$}};
            \textrectangle{7}{0.5}{{$s_{k}$}};
        \end{tikzpicture}\!\Big)}
    {
        p\Big(\!\begin{tikzpicture}[scale=0.3,baseline={([yshift=-0.6ex]current bounding box.center)}]
            \textrectangle{0}{-0.5}{$s_1$};
            \textrectangle{1}{0.5}{$s_2$};
            \textrectangle{2}{-0.5}{\scalebox{0.9}{$\cdots$}};
            \textrectangle{3}{0.5}{};
            \textrectangle{4}{-0.5}{\scalebox{0.9}{$\cdots$}};
            \textrectangle{5}{0.5}{\scalebox{0.85}{$s_{k-2}$}};
            \textrectangle{6}{-0.5}{\scalebox{0.85}{$s_{k-1}$}};
        \end{tikzpicture}\!\Big)
    }
    =
    \frac{
        p\Big(\!\begin{tikzpicture}[scale=0.3,baseline={([yshift=-0.6ex]current bounding box.center)}]
            \textrectangle{5}{0.5}{\scalebox{0.85}{$s_{k-2}$}};
            \textrectangle{6}{-0.5}{\scalebox{0.85}{$s_{k-1}$}};
            \textrectangle{7}{0.5}{{$s_{k}$}};
        \end{tikzpicture}\!\Big)}
    {
        p\Big(\!\begin{tikzpicture}[scale=0.3,baseline={([yshift=-0.6ex]current bounding box.center)}]
            \textrectangle{5}{0.5}{\scalebox{0.85}{$s_{k-2}$}};
            \textrectangle{6}{-0.5}{\scalebox{0.85}{$s_{k-1}$}};
        \end{tikzpicture}\!\Big)
    },\qquad
    \frac{
        p\Big(\!\begin{tikzpicture}[scale=0.3,baseline={([yshift=-0.6ex]current bounding box.center)}]
            \textrectangle{0}{-0.5}{$s_1$};
            \textrectangle{1}{0.5}{$s_2$};
            \textrectangle{2}{-0.5}{$s_3$};
            \textrectangle{3}{0.5}{\scalebox{0.9}{$\cdots$}};
            \textrectangle{4}{-0.5}{};
            \textrectangle{5}{0.5}{};
            \textrectangle{6}{-0.5}{\scalebox{0.9}{$\cdots$}};
            \textrectangle{7}{0.5}{{$s_{k}$}};
        \end{tikzpicture}\!\Big)}
    {
        p\Big(\!\begin{tikzpicture}[scale=0.3,baseline={([yshift=-0.6ex]current bounding box.center)}]
            \textrectangle{1}{0.5}{$s_2$};
            \textrectangle{2}{-0.5}{$s_3$};
            \textrectangle{3}{0.5}{\scalebox{0.9}{$\cdots$}};
            \textrectangle{4}{-0.5}{};
            \textrectangle{5}{0.5}{};
            \textrectangle{6}{-0.5}{\scalebox{0.9}{$\cdots$}};
            \textrectangle{7}{0.5}{{$s_{k}$}};
        \end{tikzpicture}\!\Big)
    }
    =
    \frac{
        p\Big(\!\begin{tikzpicture}[scale=0.3,baseline={([yshift=-0.6ex]current bounding box.center)}]
            \textrectangle{0}{-0.5}{$s_1$};
            \textrectangle{1}{0.5}{$s_2$};
            \textrectangle{2}{-0.5}{$s_3$};
        \end{tikzpicture}\!\Big)}
    {
        p\Big(\!\begin{tikzpicture}[scale=0.3,baseline={([yshift=-0.6ex]current bounding box.center)}]
            \textrectangle{1}{0.5}{$s_2$};
            \textrectangle{2}{-0.5}{$s_3$};
        \end{tikzpicture}\!\Big)
    }.
\end{eqnarray}
where we remark that this is satisfied independently of parity of the first
site and the length~$k$ (i.e.\ it also holds when all the configurations are
flipped upside-down). Note that this property, although straightforward to
prove (the proof is provided in~\cite{klobas2020matrix,klobas2020exactPhD}), is
nontrivial and for example does not hold for the stationary state of a related
cellular automaton~\cite{wilkinson2020exact}
(rule 201; cf.\ Section~\ref{sec:relModels}) with
otherwise similar properties.

A direct physical consequence of the factorization of asymptotic conditional
probabilities is \emph{statistical independence of solitons}: in the stationary
state, the probability of finding a left- (or right-) moving soliton is
independent of the positions of other solitons, as long as there is no
left-mover (or right-mover) on the neighbouring sites.  The conditional
probability $p_l$ of observing a left-mover, if we know that the neighbouring
left ray does not contain a left-moving soliton is therefore a well-defined
quantity and can be expressed in terms of asymptotic probabilities as follows,
\begin{eqnarray}
    p_l=\frac{p\big(\!
    \begin{tikzpicture}[baseline={([yshift=-0.6ex]current bounding box.center)},
        scale=0.15] 
        \rectangle{0}{-0.5}{0};
        \rectangle{1}{0.5}{0};
        \rectangle{2}{-0.5}{1};
    \end{tikzpicture}
    \!\big)}{p\big(\! 
    \begin{tikzpicture}[baseline={([yshift=-0.6ex]current bounding box.center)},
        scale=0.15] 
        \rectangle{0}{-0.5}{0};
        \rectangle{1}{0.5}{0};
    \end{tikzpicture}
    \!\big)}=
    \frac{p\big(\!
    \begin{tikzpicture}[baseline={([yshift=-0.6ex]current bounding box.center)},
        scale=0.15] 
        \rectangle{0}{-0.5}{0};
        \rectangle{1}{0.5}{1};
        \rectangle{2}{-0.5}{0};
    \end{tikzpicture}
    \!\big)}{p\big(\! 
    \begin{tikzpicture}[baseline={([yshift=-0.6ex]current bounding box.center)},
        scale=0.15] 
        \rectangle{0}{-0.5}{0};
        \rectangle{1}{0.5}{1};
    \end{tikzpicture}
    \!\big)}
    =\frac{\xi(\lambda+\omega(1-\xi))}{\lambda(1+\xi)+\xi(1-\xi\omega)},
\end{eqnarray}
where~$\lambda$ is the leading eigenvalue of~$T=(W_0+W_1)(W'_0+W'_1)$
(see~\cite{klobas2020matrix} for the details). Analogously, the conditional
probability of finding a right-mover at a given site, if the neighbouring ray
on the right is empty, is given by
\begin{eqnarray}
    p_r=\frac{p\big(\!
    \begin{tikzpicture}[baseline={([yshift=-0.6ex]current bounding box.center)},
        scale=0.15] 
        \rectangle{0}{-0.5}{1};
        \rectangle{1}{0.5}{0};
        \rectangle{2}{-0.5}{0};
    \end{tikzpicture}
    \!\big)}{p\big(\! 
    \begin{tikzpicture}[baseline={([yshift=-0.6ex]current bounding box.center)},
        scale=0.15] 
        \rectangle{0}{0.5}{0};
        \rectangle{1}{-0.5}{0};
    \end{tikzpicture}
    \!\big)}=
    \frac{p\big(\!
    \begin{tikzpicture}[baseline={([yshift=-0.6ex]current bounding box.center)},
        scale=0.15] 
        \rectangle{0}{-0.5}{0};
        \rectangle{1}{0.5}{1};
        \rectangle{2}{-0.5}{0};
    \end{tikzpicture}
    \!\big)}{p\big(\! 
    \begin{tikzpicture}[baseline={([yshift=-0.6ex]current bounding box.center)},
        scale=0.15] 
        \rectangle{0}{0.5}{1};
        \rectangle{1}{-0.5}{0};
    \end{tikzpicture}
    \!\big)}
    =\frac{\omega(\lambda+\xi(1-\omega))}{\lambda(1+\omega)+\omega(1-\xi\omega)}.
\end{eqnarray}
Note that the set of probabilities~$(p_l,p_r)$ provides an equivalent parametrisation
of the steady state, since the  mapping $(\xi,\omega)\to(p_l,p_r)$ can be inverted,
\begin{eqnarray}
\label{xiomega}
    \xi=\frac{p_l(1-p_r)}{(1-p_l)^2},\qquad \omega=\frac{p_r(1-p_l)}{(1-p_r)^2}.
\end{eqnarray}
Comparing these expressions with the TBA equations~\eqref{eq:TBAdefEq1} and~\eqref{eq:TBAdefEq2}
we realize that $p_l$ and $p_r$ coincide with the filling functions~$\vartheta_{+}$ and
$\vartheta_{-}$, respectively.

\subsection{Time-states}\label{sec:TimeStates}
We define \emph{time-states} as probability distributions of
\emph{time-configurations}, i.e.\ configurations observed in time (as
schematically shown in Fig.~\ref{fig:TSfig2}), under the assumption that the
system is in a stationary state~$\vec{p}$. Explicitly, the components of the
time-state $\vec{q}$,
\begin{eqnarray}
    q_{b_0\,b_1\,\ldots b_{m-1}}=\smashoperator{\sum_{\ul{s}\equiv(s_{-m},s_{-m+1},\ldots,s_{m-1})}}p_{\ul{s}}\,
    \delta_{b_0,s^0_{0}}
    \delta_{b_1,s^1_{1}}
    \delta_{b_2,s^2_{0}}
    \delta_{b_3,s^3_{1}}
    \cdots
    \delta_{b_{m-1},s^{m-1}_{{\mathrm{mod}}(m-1,2)}},
\end{eqnarray}
where $s^t_k$ is the value at site $k$ of the configuration
$\ul{s}=(s_{-m},s_{-m+1},\ldots s_{m-1})$ propagated for $t$ time steps
started from initial data $s^t_k = s_k$, and $p_{\ul{s}}$ is the asymptotic
stationary probability (in our case given by Equations~\eqref{eq:asymptPropEven}
and~\eqref{eq:asymptPropOdd}). 

\begin{figure}
    \centering
    \includegraphics[width=0.5\textwidth]{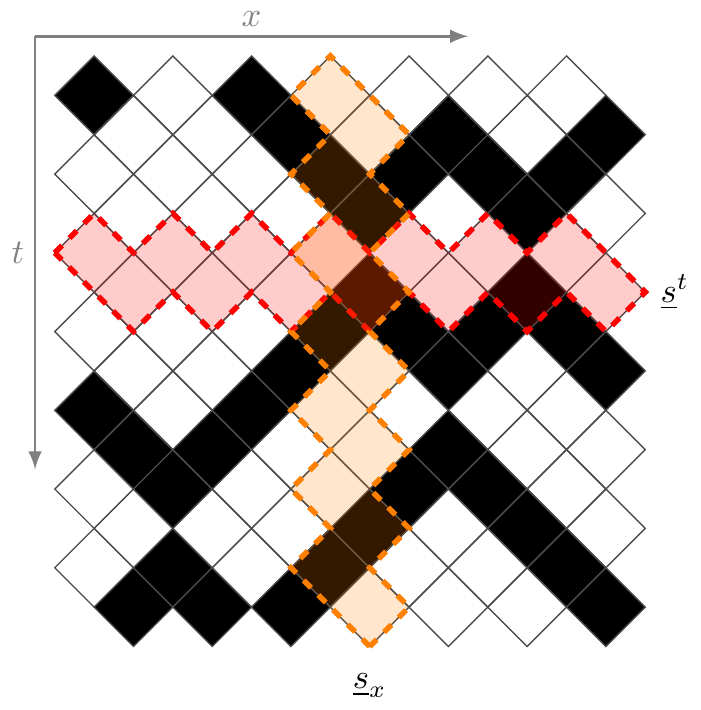}
    \caption{\label{fig:TSfig2} Example of time evolution. The red strip denotes a
    configuration of bits at a fixed time $\ul{s}^t$, while the orange strip
    represents an example of a \emph{time-configuration}, i.e.\ a set of bits observed
    at the same position in consecutive time-steps. We assume $\ul{s}^t$ to
    be distributed according to a stationary probability distribution~$\vec{p}$, while
    the probabilities of time-configurations $\ul{s}_x$ are given by the
    corresponding \emph{time-state} (probability distribution) $\vec{q}$.
    }
\end{figure}

To understand why for RCA54 the probability distribution~$\vec{q}$ can be
explicitly obtained, we imagine the following setup (schematically illustrated
in Fig.~\ref{fig:TSfig1}): at time~$t=0$ we pick a random configuration, distributed
according to the stationary probability distribution $\vec{p}$, and let it evolve in
time, while keeping track only of the bit at position $x=0$ or $x=1$ (depending on
the parity of the time-step, cf.~Fig.~\ref{fig:TSfig2}). Generically, the probability
of observing a soliton at time~$t$ depends on the full observed time-configuration.
In our case, however, we recall that the solitons in the underlying stationary 
state~$\vec{p}$ are statistically independent. This implies that at any time,
the probability of a left (or right) mover reaching the observed site, is constant
and given by~$p_l$ (or $p_r$), as long as there was no soliton in the previous time-step.

Explicitly, the conditional probability of observing a time-configuration
$(b_0,b_1,\ldots,b_{k-1},b_{k})$ given the previous configuration
$(b_0,\ldots,b_{k-1})$ therefore depends only on the last $4$ bits, which
implies a set of relations analogous to~\eqref{eq:condProb2},
\begin{eqnarray}\fl
    \frac{q_{b_0 b_1\ldots b_{2 k-1}}}{q_{b_0 b_1 \ldots b_{2 k-2}}}=
    f(b_{2k-4},b_{2k-3},b_{2k-2},b_{2k-1}),\quad
    \frac{q_{b_0 b_1\ldots b_{2 k}}}{q_{b_0 b_1 \ldots b_{2 k-1}}}=
    f^{\prime}(b_{2k-3},b_{2k-2},b_{2k-1},b_{2k}),
\end{eqnarray}
where $f,f^{\prime}$ are two yet unknown functions of the last four bits.
Conditional probabilities $f^{(\prime)}$ can be straightforwardly determined
by observing that short $3$-site time-configurations can be divided into
three groups.
\begin{enumerate}
    \item A configuration can be \emph{inaccessible}, i.e.\ does not appear in
        a time-configuration $\ul{s}_x$ corresponding to any initial
        configuration~$\ul{s}^t$. To identify them, we only have to note
        that the following $4$ configurations never appear in the local time-evolution
        rules~\eqref{eq:TEDrules},
        \begin{eqnarray}\label{eq:confForbidden}
            \begin{tikzpicture}[baseline={([yshift=-0.6ex]current bounding box.center)},scale=0.2]
                \rectangle{0.5}{1}{0};
                \rectangle{-0.5}{0}{1};
                \rectangle{0.5}{-1}{0};
            \end{tikzpicture}\qquad
            \begin{tikzpicture}[baseline={([yshift=-0.6ex]current bounding box.center)},scale=0.2]
                \rectangle{0.5}{1}{1};
                \rectangle{-0.5}{0}{1};
                \rectangle{0.5}{-1}{1};
            \end{tikzpicture}\qquad
            \begin{tikzpicture}[baseline={([yshift=-0.6ex]current bounding box.center)},scale=0.2]
                \rectangle{-0.5}{1}{0};
                \rectangle{0.5}{0}{1};
                \rectangle{-0.5}{-1}{0};
            \end{tikzpicture}\qquad
            \begin{tikzpicture}[baseline={([yshift=-0.6ex]current bounding box.center)},scale=0.2]
                \rectangle{-0.5}{1}{1};
                \rectangle{0.5}{0}{1};
                \rectangle{-0.5}{-1}{1};
            \end{tikzpicture}.
        \end{eqnarray}
        The probability of obtaining such configurations is $0$, which means that the conditional
        probabilities $f^{(\prime)}(0,1,0,s)$ and $f^{(\prime)}(1,1,1,s)$ are not well defined
        for any~$s$, while
        \begin{eqnarray}
            f^{(\prime)}(s,0,1,0)=f^{(\prime)}(s,1,1,1)=0.
        \end{eqnarray}
    \item A $3$-site configuration can have a unique extension into an \emph{accessible}
        $4$-site configuration,
        \begin{eqnarray}\fl
            \begin{tikzpicture}[baseline={([yshift=-0.6ex]current bounding box.center)},scale=0.2]
                \rectangle{0.5}{-1}{1};
                \rectangle{-0.5}{0}{0};
                \rectangle{0.5}{1}{0};
                \draw[thick,->] (1.5,0) -- (3.5,0);
                \rectangle{5}{-2}{1};
                \rectangle{6}{-1}{1};
                \rectangle{5}{0}{0};
                \rectangle{6}{1}{0};
            \end{tikzpicture},\quad
            \begin{tikzpicture}[baseline={([yshift=-0.6ex]current bounding box.center)},scale=0.2]
                \rectangle{0.5}{-1}{1};
                \rectangle{-0.5}{0}{1};
                \rectangle{0.5}{1}{0};
                \draw[thick,->] (1.5,0) -- (3.5,0);
                \rectangle{5}{-2}{0};
                \rectangle{6}{-1}{1};
                \rectangle{5}{0}{1};
                \rectangle{6}{1}{0};
            \end{tikzpicture},\quad
            \begin{tikzpicture}[baseline={([yshift=-0.6ex]current bounding box.center)},scale=0.2]
                \rectangle{0.5}{-1}{1};
                \rectangle{-0.5}{0}{0};
                \rectangle{0.5}{1}{1};
                \draw[thick,->] (1.5,0) -- (3.5,0);
                \rectangle{5}{-2}{1};
                \rectangle{6}{-1}{1};
                \rectangle{5}{0}{0};
                \rectangle{6}{1}{1};
            \end{tikzpicture},\quad
            \begin{tikzpicture}[baseline={([yshift=-0.6ex]current bounding box.center)},scale=0.2]
                \rectangle{-0.5}{-1}{1};
                \rectangle{0.5}{0}{0};
                \rectangle{-0.5}{1}{0};
                \draw[thick,->] (2,0) -- (4,0);
                \rectangle{6}{-2}{1};
                \rectangle{5}{-1}{1};
                \rectangle{6}{0}{0};
                \rectangle{5}{1}{0};
            \end{tikzpicture},\quad
            \begin{tikzpicture}[baseline={([yshift=-0.6ex]current bounding box.center)},scale=0.2]
                \rectangle{-0.5}{-1}{1};
                \rectangle{0.5}{0}{1};
                \rectangle{-0.5}{1}{0};
                \draw[thick,->] (2,0) -- (4,0);
                \rectangle{6}{-2}{0};
                \rectangle{5}{-1}{1};
                \rectangle{6}{0}{1};
                \rectangle{5}{1}{0};
            \end{tikzpicture},\quad
            \begin{tikzpicture}[baseline={([yshift=-0.6ex]current bounding box.center)},scale=0.2]
                \rectangle{-0.5}{-1}{1};
                \rectangle{0.5}{0}{0};
                \rectangle{-0.5}{1}{1};
                \draw[thick,->] (2,0) -- (4,0);
                \rectangle{6}{-2}{1};
                \rectangle{5}{-1}{1};
                \rectangle{6}{0}{0};
                \rectangle{5}{1}{1};
            \end{tikzpicture}.
        \end{eqnarray}
        These configurations represent examples, where the other choice to continue upwards
        would result in an inaccessible configuration. The corresponding conditional probabilities
        must therefore be $1$,
        \begin{eqnarray}
            f^{(\prime)}(s,0,1,1)=f^{(\prime)}(0,1,1,0)=1,\qquad s\in\{0,1\}.
        \end{eqnarray}
    \item A configuration $(s_1,s_2,s_3)$ is valid and both $(s_1,s_2,s_3,0)$ and
        $(s_1,s_2,s_3,1)$ are also valid, in which case the two conditional probabilities
        are between $0$ and $1$. Explicitly, the configurations falling in this class are
        the following,
        \begin{eqnarray}\fl
            \begin{tikzpicture}[baseline={([yshift=-0.6ex]current bounding box.center)},scale=0.2]
                \rectangle{0.5}{1}{0};
                \rectangle{-0.5}{0}{0};
                \rectangle{0.5}{-1}{0};
                \draw[thick,->] (2,0.5) -- (3.5,2);
                %node [midway,above,xshift=-1ex,yshift=1ex] {\scalebox{0.75}{$1-p_r$}};
                \draw[thick,->] (2,-0.5) -- (3.5,-2);
                %node [midway,below,yshift=-1ex] {\scalebox{0.75}{$p_r$}};
                \draw[thick,-latex,red] (3.75,-2.25) -- (4.75,-3.25);
                \rectangle{5.5}{1}{0};
                \rectangle{6.5}{2}{0};
                \rectangle{5.5}{3}{0};
                \rectangle{6.5}{4}{0};
                \rectangle{5.5}{-4}{1};
                \rectangle{6.5}{-3}{0};
                \rectangle{5.5}{-2}{0};
                \rectangle{6.5}{-1}{0};
            \end{tikzpicture},\quad\!
            \begin{tikzpicture}[baseline={([yshift=-0.6ex]current bounding box.center)},scale=0.2]
                \rectangle{0.5}{-1}{0};
                \rectangle{-0.5}{0}{0};
                \rectangle{0.5}{1}{1};
                \draw[thick,->] (2,0.5) -- (3.5,2);
                %node [midway,above,xshift=-1ex,yshift=1ex] {\scalebox{0.75}{$1-p_r$}};
                \draw[thick,->] (2,-0.5) -- (3.5,-2);
                %node [midway,below,yshift=-1ex] {\scalebox{0.75}{$p_r$}};
                \draw[thick,-latex,red] (3.75,-2.25) -- (4.75,-3.25);
                \rectangle{5.5}{1}{0};
                \rectangle{6.5}{2}{0};
                \rectangle{5.5}{3}{0};
                \rectangle{6.5}{4}{1};
                \rectangle{5.5}{-4}{1};
                \rectangle{6.5}{-3}{0};
                \rectangle{5.5}{-2}{0};
                \rectangle{6.5}{-1}{1};
            \end{tikzpicture},\quad\!
            \begin{tikzpicture}[baseline={([yshift=-0.6ex]current bounding box.center)},scale=0.2]
                \rectangle{0.5}{-1}{0};
                \rectangle{-0.5}{0}{1};
                \rectangle{0.5}{1}{1};
                \draw[thick,->] (2,0.5) -- (3.5,2);
                %node [midway,above,xshift=-1ex,yshift=1ex] {\scalebox{0.75}{$1-p_r$}};
                \draw[thick,->] (2,-0.5) -- (3.5,-2);
                %node [midway,below,yshift=-1ex] {\scalebox{0.75}{$p_r$}};
                \draw[thick,-latex,red] (3.75,-0.25) -- (4.75,-1.25);
                \rectangle{5.5}{1}{0};
                \rectangle{6.5}{2}{0};
                \rectangle{5.5}{3}{1};
                \rectangle{6.5}{4}{1};
                \rectangle{5.5}{-4}{1};
                \rectangle{6.5}{-3}{0};
                \rectangle{5.5}{-2}{1};
                \rectangle{6.5}{-1}{1};
            \end{tikzpicture},\quad\!
            \begin{tikzpicture}[baseline={([yshift=-0.6ex]current bounding box.center)},scale=0.2]
                \rectangle{-0.5}{-1}{0};
                \rectangle{0.5}{0}{0};
                \rectangle{-0.5}{1}{0};
                \draw[thick,->] (2,0.5) -- (3.5,2);
                %node [midway,above,xshift=-1ex,yshift=1ex] {\scalebox{0.75}{$1-p_r$}};
                \draw[thick,->] (2,-0.5) -- (3.5,-2);
                %node [midway,below,yshift=-1ex] {\scalebox{0.75}{$p_r$}};
                \draw[thick,-latex,red] (8.25,-2.25) -- (7.25,-3.25);
                \rectangle{6.5}{1}{0};
                \rectangle{5.5}{2}{0};
                \rectangle{6.5}{3}{0};
                \rectangle{5.5}{4}{0};
                \rectangle{6.5}{-4}{1};
                \rectangle{5.5}{-3}{0};
                \rectangle{6.5}{-2}{0};
                \rectangle{5.5}{-1}{0};
            \end{tikzpicture}\!,\quad\!
            \begin{tikzpicture}[baseline={([yshift=-0.6ex]current bounding box.center)},scale=0.2]
                \rectangle{-0.5}{-1}{0};
                \rectangle{0.5}{0}{0};
                \rectangle{-0.5}{1}{1};
                \draw[thick,->] (2,0.5) -- (3.5,2);
                %node [midway,above,xshift=-1ex,yshift=1ex] {\scalebox{0.75}{$1-p_r$}};
                \draw[thick,->] (2,-0.5) -- (3.5,-2);
                %node [midway,below,yshift=-1ex] {\scalebox{0.75}{$p_r$}};
                \draw[thick,-latex,red] (8.25,-2.25) -- (7.25,-3.25);
                \rectangle{6.5}{1}{0};
                \rectangle{5.5}{2}{0};
                \rectangle{6.5}{3}{0};
                \rectangle{5.5}{4}{1};
                \rectangle{6.5}{-4}{1};
                \rectangle{5.5}{-3}{0};
                \rectangle{6.5}{-2}{0};
                \rectangle{5.5}{-1}{1};
            \end{tikzpicture}\!,\quad\!
            \begin{tikzpicture}[baseline={([yshift=-0.6ex]current bounding box.center)},scale=0.2]
                \rectangle{-0.5}{-1}{0};
                \rectangle{0.5}{0}{1};
                \rectangle{-0.5}{1}{1};
                \draw[thick,->] (2,0.5) -- (3.5,2);
                %node [midway,above,xshift=-1ex,yshift=1ex] {\scalebox{0.75}{$1-p_r$}};
                \draw[thick,->] (2,-0.5) -- (3.5,-2);
                %node [midway,below,yshift=-1ex] {\scalebox{0.75}{$p_r$}};
                \draw[thick,-latex,red] (8.25,-0.25) -- (7.25,-1.25);
                \rectangle{6.5}{1}{0};
                \rectangle{5.5}{2}{0};
                \rectangle{6.5}{3}{1};
                \rectangle{5.5}{4}{1};
                \rectangle{6.5}{-4}{1};
                \rectangle{5.5}{-3}{0};
                \rectangle{6.5}{-2}{1};
                \rectangle{5.5}{-1}{1};
            \end{tikzpicture}.
        \end{eqnarray} 
        Here the red arrows show the position in which the previously untracked
        (left or right moving) particle should appear for the new bit in the
        configuration to be equal to $1$.  As a consequence of statistical
        independence of solitons, the configurations on the bottom all
        arise with the conditional probability $p_r$ or $p_l$, while the
        probabilities associated with the top configurations are $1-p_r$ and
        $1-p_l$. Explicitly,
        \begin{eqnarray}\fl
            \eqalign{
                f(s,0,0,0)=f(1,1,0,0)=1-p_r,\qquad
                &f(s,0,0,1)=f(1,1,0,1)=p_r,\\
                f^{\prime}(s,0,0,0)=f^{\prime}(1,1,0,0)=1-p_l,
                &f^{\prime}(s,0,0,1)=f^{\prime}(1,1,0,1)=p_l.
            }
        \end{eqnarray}
\end{enumerate}
This exhausts all the possible conditional probabilities, which allows us to express 
the full probability distribution $\vec{q}$ as
\begin{eqnarray}\fl
    q_{b_0 b_1\ldots b_{m-1}} = q_{b_0 b_1 b_2} f(b_0,b_1,b_2,b_3) f^{\prime}(b_1,b_2,b_3,b_4)
    \cdots f^{(\prime)}(b_{m-4},b_{m-3},b_{m-2},b_{m-,1}),
\end{eqnarray}
where the last term in the product is $f$ or $f^{\prime}$ for even or odd $m$, respectively.
To completely determine~$\vec{q}$, we have to take into account the stationarity of the underlying
state~$\vec{p}$, which implies the following consistency condition for any length $m$,
\begin{eqnarray}
    q_{b_0 b_1\ldots b_{m-1}} = \sum_{b_{-2},b_{-1}\in\{0,1\}}
    q_{b_{-2}b_{-1}b_0 b_1\ldots b_{m-1}}.
\end{eqnarray}
This finally provides (up to normalization) the explicit values of~$q_{b_0 b_1 b_2}$ and we 
obtain an explicit representation of the time-state in a form similar to
the \emph{patch-state ansatz} introduced in Section~\ref{sec:PSA}.

The time-state $\vec{q}$ can be equivalently represented in an MPA form, by encoding
the appropriate factors $f^{(\prime)}$ in the $3\times 3$ matrices $A^{(\prime)}_{s}$,
\begin{eqnarray}\fl
    \A_0=\begin{bmatrix}
        1-p_r&0&0\\
        0&0&0\\
        1&0&0
    \end{bmatrix},\qquad
    \A_1=\begin{bmatrix}
        0&p_r&0\\
        0&0&1\\
        0&0&0
    \end{bmatrix},\qquad
    \B_s(p_r,p_l)=\A_s(p_l,p_r),
\end{eqnarray}
and defining the appropriate boundary vectors $\bra*{L}$ and $\ket{R}$,
\begin{eqnarray}
    \bra{L} = \frac{1}{1+p_r+p_l}
    \begin{bmatrix}
        1&p_l & p_r
    \end{bmatrix},\qquad
    \ket{R}=\begin{bmatrix}
        1 \\ 1 \\ 1
    \end{bmatrix}.
\end{eqnarray}
Using these definitions, the full time-state can be succinctly summarized as,
\begin{eqnarray}\label{eq:timeStateFinal}
    \vec{q}=\bra{L}\vA_0\vB_1\vA_2\cdots \mathbf{A}_{m-1}^{(\prime)}\ket{R}.
\end{eqnarray}
Note that the right boundary vector~$\ket{R}$ is the right eigenvector corresponding
to the eigenvalue $1$ of both $\A_0+\A_1$ and $\B_0+\B_1$,
which immediately gives us access to the normalization of the
state,
\begin{eqnarray}\fl
    \smashoperator[r]{\sum_{b_0,b_1,\ldots,b_{m-1}}} q_{b_0 b_1 \ldots b_{m-1}}=
    \bra{L}
    \underbrace{(\A_0+\A_1)(\B_0+\B_1)\cdots (A^{(\prime)}_0+A^{(\prime)}_1)}_{m}
    \ket{R} =\braket{L}{R}=1.
\end{eqnarray}

\subsection{Space evolution}\label{subsec:spaceEvol}
The discussion in the previous subsection brings us to a related question. Let
us imagine we know that for a given trajectory the time-configuration at position
$x$ is equal to $\ul{s}_x$. Is it possible to deterministically map it
into the neighbouring configuration $\ul{s}_{x+1}$ (schematically 
represented in Fig.~\ref{fig:TSfig3})? In other words: we wish to understand
whether we can define a local and deterministic model in the space-direction, with
respect to which the class of time-states introduced above is stationary.
Generally, there is no guarantee that the space evolution can be expressed as a
composition of local deterministic maps. In our case, however, we expect this
to be possible due to the solitonic description of the model: the dynamics in
space can be again understood as solitons moving in one of the two directions
with a fixed velocity $1$, and undergoing pairwise scattering (see e.g.\
Fig.~\ref{fig:TSfig2}). The main difference with respect to the usual dynamics
in the time direction, is the nature of the interaction, which in this case
temporarily speeds the particles up, rather than slowing them down.

\begin{figure}
    \centering
    \includegraphics[width=0.75\textwidth]{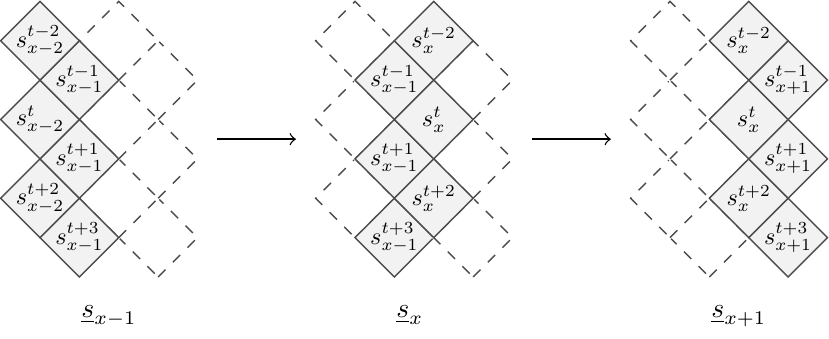}
    \caption{\label{fig:TSfig3} Schematic representation of the mapping between
    time-configurations. Evolution in space is analogous to time-evolution (see
    e.g.\ Fig.~\ref{fig:TEGeofig}): at each point $x$, the bits are updated in
    rows with the same parity as $x$, while the others are unchanged.
    }
\end{figure}

We therefore expect that it is possible to find a local deterministic map
\begin{eqnarray}
    \phi:\underbrace{\mathbb{Z}_2\times \mathbb{Z}_2 \times \cdots \times \mathbb{Z}_2}_{2r+1}
    \to
    \mathbb{Z}_2,\quad r\in\mathbb{N},
\end{eqnarray}
that gives the new value for a bit in the middle of the configuration of $2r+1$ sites,
\begin{eqnarray} \label{eq:defSEmap}
    s_{x+1}^t=\phi\left(s_{x/x-1}^{t-r},\ldots,s_{x-1}^{t-2},
    s_{x}^{t-1},s_{x-1}^t,s_{x}^{t+1},\ldots,s_{x/x-1}^{t+r}
    \right).
\end{eqnarray}
Note that $\chi$ given by~\eqref{eq:TErules} is of this form with $r=1$. At the moment
it is not clear what the support for the local space-evolution maps should be, but
by examining the time-evolution rules~\eqref{eq:TEDrules}, we quickly realize that the
support should be larger than $3$ (i.e.\ $r\ge 2$), since e.g.\ the last two diagrams
suggest two different updated values corresponding to the time-configuration $(0,1,1)$.
Nonetheless, three-site time-configurations provide a starting point for
construction of space-evolution maps with larger support. Again we realize that
three-site configurations can be divided into three types.
\begin{enumerate}
    \item As we already noted earlier, time-configurations
        cannot contain substrings $(0,1,0)$ and $(1,1,1)$, as these are
        forbidden by the dynamic rules. The space map $\phi$ is therefore
        defined on the reduced space consisting only of allowed configurations.
    \item The first four diagrams in Eq.~\eqref{eq:TEDrules} already provide the
        deterministic space updates for $3$-site configurations $(0,0,0)$, $(1,0,0)$,
        $(1,0,1)$ and $(0,1,1)$,
        \begin{eqnarray}\label{eq:SEdia1}
            \begin{tikzpicture}
                [scale=0.3,baseline={([yshift=-0.6ex]current bounding box.center)}]
                \rectangle{-1}{0}{0};
                \rectangle{0}{-1}{0};
                \rectangle{0}{1}{0};
                \redrectangle{1}{0}{0};
            \end{tikzpicture}\quad
            \begin{tikzpicture}
                [scale=0.3,baseline={([yshift=-0.6ex]current bounding box.center)}]
                \rectangle{-1}{0}{0};
                \rectangle{0}{-1}{0};
                \rectangle{0}{1}{1};
                \redrectangle{1}{0}{1};
            \end{tikzpicture}\quad
            \begin{tikzpicture}
                [scale=0.3,baseline={([yshift=-0.6ex]current bounding box.center)}]
                \rectangle{-1}{0}{0};
                \rectangle{0}{-1}{1};
                \rectangle{0}{1}{1};
                \redrectangle{1}{0}{0};
            \end{tikzpicture}\quad
            \begin{tikzpicture}
                [scale=0.3,baseline={([yshift=-0.6ex]current bounding box.center)}]
                \rectangle{-1}{0}{0};
                \rectangle{0}{-1}{1};
                \rectangle{0}{1}{0};
                \redrectangle{1}{0}{1};
            \end{tikzpicture}.
        \end{eqnarray}
    \item In the case of $(0,1,1)$ and $(1,1,0)$ one needs more information to be able
        to find the updated value of the central bit, since in both cases the
        transitions to $0$ and to $1$ are possible. However, it suffices to extend the
        configurations for two sites to top and bottom to find a deterministic update.
        To demonstrate it, let us first focus on $(0,1,1)$. By avoiding the forbidden
        configurations, there are only two ways to extend the time-configuration for
        two-sites to the bottom, $(0,1,1,0,0)$ and $(0,1,1,0,1)$,
        \begin{eqnarray}
            \begin{tikzpicture}
                [scale=0.3,baseline={([yshift=-0.6ex]current bounding box.center)}]
                \rectangle{0}{1}{0};
                \rectangle{-1}{0}{1};
                \rectangle{0}{-1}{1};
                
                \draw [thick, ->] (-0.5,2.25) -- (-0.5,3) -- (9.5,3) -- (9.5,2.25);
                \draw [thick, ->] (5.5,3) -- (5.5,2.25);
                
                \rectangle{6}{1}{0};
                \rectangle{5}{0}{1};
                \rectangle{6}{-1}{1};
                \rectangle{5}{-2}{0};
                \rectangle{6}{-3}{0};

                \rectangle{10}{1}{0};
                \rectangle{9}{0}{1};
                \rectangle{10}{-1}{1};
                \rectangle{9}{-2}{0};
                \rectangle{10}{-3}{1};
            \end{tikzpicture}.
        \end{eqnarray}
        Now we are able to find a unique way to update these $5$-site
        configurations, using the rules we already have. For either one of these
        configurations the update rules can be applied to the bottom three sites,
        after which we try to set the new value at the top to either $0$ and $1$ and in
        both cases, only one of them obeys the restriction to allowed configurations,
        \begin{eqnarray}\fl
            \begin{tikzpicture}
                [scale=0.3,baseline={([yshift=-0.6ex]current bounding box.center)}]
                \rectangle{0}{3}{0};
                \rectangle{-1}{2}{1};
                \rectangle{0}{1}{1};
                \rectangle{-1}{0}{0};
                \rectangle{0}{-1}{0};
                \draw [rounded corners,very thick,red] (-2.5,0) -- (0,-2.5) -- (1.5,-1) 
                -- (0.5,0) -- (1.5,1) -- (0,2.5) -- cycle;
                \draw [thick,-latex] (1.25,0) -- (4.75,0) 
                node[pos=0.5,above] {\scalebox{0.7}{update}};
                \rectangle{7}{3}{0};
                \rectangle{6}{2}{1};
                \rectangle{7}{1}{1};
                \rectangle{6}{0}{0};
                \rectangle{7}{-1}{0};
                \rectangle{8}{0}{1};
                \draw [rounded corners,very thick,red] (4.5,2) -- (7,-0.5) -- (8.5,1) 
                -- (7.5,2) -- (8.5,3) -- (7,4.5) -- cycle;
                \draw [thick] (8.25,2) -- (11.25,2) 
                node[pos=0.575,above] {\scalebox{0.7}{update}};
                \draw [thick,-latex] (11.25,2) -- (12.25,3);
                \draw [thick,-latex] (11.25,2) -- (12.25,1);
                \rectangle{14.5}{7.25}{0};
                \rectangle{13.5}{6.25}{1};
                \rectangle{14.5}{5.25}{1};
                \rectangle{13.5}{4.25}{0};
                \rectangle{14.5}{3.25}{0};
                \rectangle{15.5}{4.25}{1};
                \rectangle{15.5}{6.25}{0};

                \rectangle{14.5}{0.75}{0};
                \rectangle{13.5}{-0.25}{1};
                \rectangle{14.5}{-1.25}{1};
                \rectangle{13.5}{-2.25}{0};
                \rectangle{14.5}{-3.25}{0};
                \rectangle{15.5}{-2.25}{1};
                \rectangle{15.5}{-0.25}{1};

                \draw [very thick,red] (16.5,-4.25) -- (12.5,1.75);
                \draw [very thick,red] (12.5,-4.25) -- (16.5,1.75);
            \end{tikzpicture},\qquad
            \begin{tikzpicture}
                [scale=0.3,baseline={([yshift=-0.6ex]current bounding box.center)}]
                \rectangle{0}{3}{0};
                \rectangle{-1}{2}{1};
                \rectangle{0}{1}{1};
                \rectangle{-1}{0}{0};
                \rectangle{0}{-1}{1};
                \draw [rounded corners,very thick,red] (-2.5,0) -- (0,-2.5) -- (1.5,-1) 
                -- (0.5,0) -- (1.5,1) -- (0,2.5) -- cycle;
                \draw [thick,-latex] (1.25,0) -- (4.75,0) 
                node[pos=0.5,above] {\scalebox{0.7}{update}};
                \rectangle{7}{3}{0};
                \rectangle{6}{2}{1};
                \rectangle{7}{1}{1};
                \rectangle{6}{0}{0};
                \rectangle{7}{-1}{1};
                \rectangle{8}{0}{0};
                \draw [rounded corners,very thick,red] (4.5,2) -- (7,-0.5) -- (8.5,1) 
                -- (7.5,2) -- (8.5,3) -- (7,4.5) -- cycle;
                \draw [thick] (8.25,2) -- (11.25,2) 
                node[pos=0.575,above] {\scalebox{0.7}{update}};
                \draw [thick,-latex] (11.25,2) -- (12.25,3);
                \draw [thick,-latex] (11.25,2) -- (12.25,1);
                \rectangle{14.5}{7.25}{0};
                \rectangle{13.5}{6.25}{1};
                \rectangle{14.5}{5.25}{1};
                \rectangle{13.5}{4.25}{0};
                \rectangle{14.5}{3.25}{1};
                \rectangle{15.5}{4.25}{0};
                \rectangle{15.5}{6.25}{0};

                \rectangle{14.5}{0.75}{0};
                \rectangle{13.5}{-0.25}{1};
                \rectangle{14.5}{-1.25}{1};
                \rectangle{13.5}{-2.25}{0};
                \rectangle{14.5}{-3.25}{1};
                \rectangle{15.5}{-2.25}{0};
                \rectangle{15.5}{-0.25}{1};

                \draw [very thick,red] (16.5,8.25) -- (12.5,2.25);
                \draw [very thick,red] (12.5,8.25) -- (16.5,2.25);
            \end{tikzpicture}.
        \end{eqnarray}
        Thus we see that to find the updated value of the middle bit in the
        subconfiguration $(0,1,1)$, it is sufficient to extend the
        subconfiguration two sites backwards in time.  Similarly, for $(1,1,0)$
        we have to check the bits to sites forward in time. The rules for all
        such subconfigurations can be summarized by the following diagrams,
        \begin{eqnarray}\label{eq:SEdia2}
            \begin{tikzpicture}
                [scale=0.3,baseline={([yshift=-0.6ex]current bounding box.center)}]
                \rectangle{0}{5}{2};
                \rectangle{-1}{4}{2};
                \rectangle{0}{3}{0};
                \rectangle{-1}{2}{1};
                \rectangle{0}{1}{1};
                \rectangle{-1}{0}{0};
                \rectangle{0}{-1}{0};
                \redrectangle{1}{2}{0};
            \end{tikzpicture},\qquad
            \begin{tikzpicture}
                [scale=0.3,baseline={([yshift=-0.6ex]current bounding box.center)}]
                \rectangle{0}{5}{2};
                \rectangle{-1}{4}{2};
                \rectangle{0}{3}{0};
                \rectangle{-1}{2}{1};
                \rectangle{0}{1}{1};
                \rectangle{-1}{0}{0};
                \rectangle{0}{-1}{1};
                \redrectangle{1}{2}{1};
            \end{tikzpicture},\qquad
            \begin{tikzpicture}
                [scale=0.3,baseline={([yshift=-0.6ex]current bounding box.center)}]
                \rectangle{0}{5}{0};
                \rectangle{-1}{4}{0};
                \rectangle{0}{3}{1};
                \rectangle{-1}{2}{1};
                \rectangle{0}{1}{0};
                \rectangle{-1}{0}{2};
                \rectangle{0}{-1}{2};
                \redrectangle{1}{2}{0};
            \end{tikzpicture},\qquad
            \begin{tikzpicture}
                [scale=0.3,baseline={([yshift=-0.6ex]current bounding box.center)}]
                \rectangle{0}{5}{1};
                \rectangle{-1}{4}{0};
                \rectangle{0}{3}{1};
                \rectangle{-1}{2}{1};
                \rectangle{0}{1}{0};
                \rectangle{-1}{0}{2};
                \rectangle{0}{-1}{2};
                \redrectangle{1}{2}{1};
            \end{tikzpicture},
        \end{eqnarray}
        where grey squares correspond to the sites that do not influence the updated
        value (denoted by red-bordered squares).
\end{enumerate}
Together with the restriction to allowed configurations, diagrams from
Equations~\eqref{eq:SEdia1} and~\eqref{eq:SEdia2} completely determine the map $\phi$,
which has in our case support $7$ (i.e.\ $r=3$) and can be summarized as follows,
\begin{eqnarray}
    \phi(s_{-3},s_{-2},s_{-1},s_0,s_1,s_2,s_3)=
    \begin{cases}
        s_1& s_0=s_{-1}=0,\\
        s_{-1}& s_0=s_1=0,\\
        0&s_{-1}=s_1=1,\\
        s_{-3}&s_{-1}=s_0=1,\ s_{1}=0,\\
        s_{3}&s_{1}=s_0=1,\ s_{-1}=1.
    \end{cases}
\end{eqnarray}
Note that $\phi(s_{-3},s_{-2},s_{-1},s_0,s_1,s_2,s_3)$ \emph{does not depend} explicitly 
on the values $s_{-2}$, $s_2$, which means that all the maps applied in the same step
(see Eq.~\eqref{eq:defSEmap} and Fig.~\ref{fig:TSfig3}) commute. The dynamics in space can be
therefore understood as a composition of \emph{strictly local} maps, that are in each
step applied to all $7$-site subconfigurations centred around the sites labeled by temporal indices of the same
parity, and the order in which they are applied is not important. From this point of view,
space-dynamics is, apart from the larger support of local maps, defined in perfect analogy
to the usual evolution in the time-direction.

\subsection{Circuit representation of dynamics}\label{subsec:circuits}
The staggered structure of time evolution, where each time-step is given in terms of
mutually commuting local updates, implies a natural representation in terms of a (classical)
tensor-network. The motivation for this comes from recent advances in constructing 
(typically nonintegrable) solvable models of many-body nonequilibrium dynamics in terms
of quantum circuits, with notable examples being random~\cite{nahum2017quantum,nahum2018operator,chan2018solution,vonKeyserlingk2018operator} and
dual-unitary circuits~\cite{bertini2019exact,gopalakrishnan2019unitary,piroli2020exact,bertini2020operatorI,bertini2020operatorII,bertini2020random}.
These models constitute a family of analytically tractable quantum many-body systems, for which
many previously not-accessible out-of-equilibrium quantities can be obtained exactly. As we
show below, RCA54 admits a convenient representation in terms of local gates, which allows us
to find an equivalent diagrammatic procedure to recover the results from the previous
subsections. 

We start by defining two tensors represented by big and small circles, where
each leg can be either in a state $s=0$  or $s=1$,
\begin{eqnarray}
    \begin{tikzpicture}[scale=0.45,baseline={([yshift=-0.6ex]current bounding box.center)}]
        \gridLine{-1}{0}{1}{0}
        \gridLine{0}{-1}{0}{1}
        \bCircle{0}{0}{colU}
        \node at (-1.5,0) {\scalebox{0.85}{$s_1$}};
        \node at (0,1.5) {\scalebox{0.85}{$s_2$}};
        \node at (1.5,0) {\scalebox{0.85}{$s_3$}};
        \node at (0,-1.5) {\scalebox{0.85}{$s_4$}};
    \end{tikzpicture}= \delta_{s_4,\chi(s_1,s_2,s_3)},\qquad
    \begin{tikzpicture}[scale=0.45,baseline={([yshift=-0.6ex]current bounding box.center)}]
        \def\sqrtThree{1.73205}
        \gridLine{-1}{0}{0}{0}
        \gridLine{-0.5}{(\sqrtThree/2.)}{0}{0}
        \gridLine{0.5}{(\sqrtThree/2.)}{0}{0}
        \gridLine{0}{-1}{0}{0}
        \sCircle{0}{0}{colU};
        \node at (-1.5,0) {\scalebox{0.85}{$s_1$}};
        \node at (-0.75,{0.75*\sqrtThree}) {\scalebox{0.85}{$s_2$}};
        \node at (0.75,{0.75*\sqrtThree}) {\scalebox{0.85}{$s_3$}};
        \node at (0,-1.5) {\scalebox{0.85}{$s_k$}};
        \node[label={[rotate=-105]{{$\cdots$}}},inner sep=0]
        at ({0.05},{-0.025*(1-\sqrtThree/2)}) {};
    \end{tikzpicture} = \prod_{j=1}^{k-1} \delta_{s_{j},s_{j+1}},\qquad k\ge 2.
\end{eqnarray}
The big circle encodes the deterministic update~\eqref{eq:TErules}, while
the small circle forces each one of the  legs to be in the same state.
Using this graphical convention, the local time-evolution operator~\eqref{eq:defU}
can be represented as
\begin{eqnarray}
    U%_{s_1^{\phantom{\prime}} s_2^{\phantom{\prime}} s_3^{\phantom{\prime}}}
    %^{s_1^{\prime} s_2^{\prime}s_3^{\prime}}
    =
    \begin{tikzpicture}[scale=0.45,baseline={([yshift=-0.6ex]current bounding box.center)}]
        \gridLine{-1}{-1}{-1}{1};
        \gridLine{0}{-1}{0}{1};
        \gridLine{1}{-1}{1}{1};
        \prop{-1}{1}{0}{colU};
        %\node at (-1,1.5)  {\scalebox{0.85}{$s_1^{\prime}$}};
        %\node at (0,1.5)   {\scalebox{0.85}{$s_2^{\prime}$}};
        %\node at (1,1.5)   {\scalebox{0.85}{$s_3^{\prime}$}};
        %\node at (-1,-1.5) {\scalebox{0.85}{$s_1$}};
        %\node at (0,-1.5)  {\scalebox{0.85}{$s_2$}};
        %\node at (1,-1.5)  {\scalebox{0.85}{$s_3$}};
    \end{tikzpicture}.%=
    %\delta_{s_1^{\prime},s_1}
    %\delta_{s_2^{\prime},\chi(s_1,s_2,s_3)}
    %\delta_{s_3^{\prime},s_3}
\end{eqnarray}
The definition of the smaller tensors immediately implies that two circles sharing
a leg can be combined into one with more legs. In particular, we note the following identity,
\begin{eqnarray}
    \begin{tikzpicture}[scale=0.45,baseline={([yshift=-0.6ex]current bounding box.center)}]
        \gridLine{0}{-1.5}{0}{1.5};
        \gridLine{-1}{-0.5}{0}{-0.5};
        \gridLine{1}{0.5}{0}{0.5};
        \sCircle{0}{0.5}{colU};
        \sCircle{0}{-0.5}{colU};
    \end{tikzpicture}=
    \begin{tikzpicture}[scale=0.45,baseline={([yshift=-0.6ex]current bounding box.center)}]
        \gridLine{0}{-1.5}{0}{1.5};
        \gridLine{1}{-0.5}{0}{-0.5};
        \gridLine{-1}{0.5}{0}{0.5};
        \sCircle{0}{0.5}{colU};
        \sCircle{0}{-0.5}{colU};
    \end{tikzpicture}=
    \begin{tikzpicture}[scale=0.45,baseline={([yshift=-0.6ex]current bounding box.center)}]
        \gridLine{0}{-1}{0}{1};
        \gridLine{1}{0}{-1}{0};
        \sCircle{0}{0}{colU};
    \end{tikzpicture}.
\end{eqnarray}
This allows us to combine all the local operators applied at the same time-step into 
one horizontal line of small and big balls positioned at sites with the opposite parity,
\begin{eqnarray}
    \Ue=\!
    \begin{tikzpicture}[scale=0.45,baseline={([yshift=-0.6ex]current bounding box.center)}]
        \foreach \x in {1,...,9}{
            \gridLine{\x}{-1}{\x}{1}
        }
        \gridLine{0.5}{-0.25}{1}{-0.25}
        \gridLine{9}{0.25}{9.5}{0.25}
        \node[inner sep=0] at (-0.25,0) {$\cdots$};
        \node[inner sep=0] at (10.25,0) {$\cdots$};
        \sCircle{1}{-0.25}{colU}
        \sCircle{9}{0.25}{colU}

        \foreach \x in {1,5}{
            \prop{\x}{\x+2}{0.25}{colU}
            \prop{\x+2}{\x+4}{-0.25}{colU}
        }
    \end{tikzpicture}\!=\!
    \begin{tikzpicture}[scale=0.45,baseline={([yshift=-0.6ex]current bounding box.center)}]
        \foreach \x in {1,...,9}{
            \gridLine{\x}{-1}{\x}{1}
        }
        \gridLine{0.5}{0}{9.5}{0}
        \node[inner sep=0] at (-0.25,0) {$\cdots$};
        \node[inner sep=0] at (10.25,0) {$\cdots$};
        \foreach \x in {1,3,...,9}{
            \sCircle{\x}{0}{colU}
        }
        \foreach \x in {2,4,...,8}{
            \bCircle{\x}{0}{colU}
        }
    \end{tikzpicture},
\end{eqnarray}
and $\Uo$ takes a similar form with exchanged roles of the two tensors. Interchangeably
applying $\Ue$ and $\Uo$ we obtain a network consisting of small and big tensors, positioned
at sites with the sum of the two coordinates $x+t\equiv 0\pmod{2}$  and $x+t\equiv 1 \pmod{2}$, 
respectively, as is diagrammatically shown in Fig.~\ref{fig:TSfig4}.

\begin{figure}
  \centering
  \includegraphics[width=0.4\textwidth]{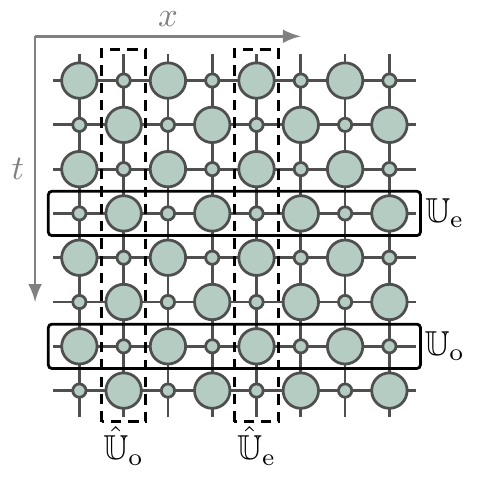}
    \caption{\label{fig:TSfig4} Circuit representation of RCA54 dynamics. Time-evolution can
    be expressed as the tensor-network of small and big tensors positioned at sites where
    the sum of space and time coordinate is even and odd respectively. The solid rectangles
    denote two instances of $\Ue$ and $\Uo$, while the dashed rectangles show analogously
    defined operators in the space direction, $\tUe$ amd $\tUo$.
    }
\end{figure}

\subsubsection{Space dynamics in terms of non-deterministic dual gates}
The symmetry of the tensor-network suggests a natural definition of space evolution
in terms of the \emph{dual} tensor $\hat{U}$, defined as,
\begin{eqnarray}
    \hat{U}%_{s_1^{\phantom{\prime}} s_2^{\phantom{\prime}} s_3^{\phantom{\prime}}}
    %^{s_1^{\prime} s_2^{\prime} s_3^{\prime}}
    =
    \begin{tikzpicture}[scale=0.45,baseline={([yshift=-0.6ex]current bounding box.center)}]
        \gridLine{-1}{1}{1}{1}
        \gridLine{-1}{0}{1}{0}
        \gridLine{-1}{-1}{1}{-1}
        \tprop{0}{1}{-1}{colU}
        %\node at (-1.5,1)  {\scalebox{0.8}{$s_3$}};
        %\node at (-1.5,0)  {\scalebox{0.8}{$s_2$}};
        %\node at (-1.5,-1) {\scalebox{0.8}{$s_1$}};
        %\node at (1.5,1)  {\scalebox{0.8}{$s_3^{\prime}$}};
        %\node at (1.5,0)  {\scalebox{0.8}{$s_2^{\prime}$}};
        %\node at (1.5,-1) {\scalebox{0.8}{$s_1^{\prime}$}};
    \end{tikzpicture}=
    \begin{bmatrix}
        1 & & & & & & & \\
        &0& &1& & & & \\
        & &0& & & & & \\
        &1& &1& & & & \\
        & & & &0& &1& \\
        & & & & &1& & \\
        & & & &1& &1& \\
        & & & & & & &0
    \end{bmatrix}.
\end{eqnarray}
Operators $\tUe$ and $\tUo$, describing the full one-step evolution in space,
are in analogy to $\Ue$ and $\Uo$ given in terms of products of $\hat{U}$
centred around the sites on the dual lattice with even and odd parity
respectively, as is diagrammatically shown in Fig.~\ref{fig:TSfig4}. One
quickly realizes that $\hat{U}$ is \emph{not} deterministic, which is expected,
since we know that the deterministic map has the support $7$. To understand how
this happens, we define the following $3$-site projector to the space of allowed
configurations,
\begin{eqnarray}
P=\begin{tikzpicture}[scale=0.45,baseline={([yshift=-0.6ex]current bounding box.center)}]
        \gridLine{-1}{1}{1}{1}
        \gridLine{-1}{0}{1}{0}
        \gridLine{-1}{-1}{1}{-1}
        \gridLine{0}{-1}{0}{1}
        \sCircle{0}{1}{colP}
        \sCircle{0}{0}{colP}
        \sCircle{0}{-1}{colP}
        %\node at (-1.5,1)  {\scalebox{0.8}{$s_3$}};
        %\node at (-1.5,0)  {\scalebox{0.8}{$s_2$}};
        %\node at (-1.5,-1) {\scalebox{0.8}{$s_1$}};
        %\node at (1.5,1)  {\scalebox{0.8}{$s_3^{\prime}$}};
        %\node at (1.5,0)  {\scalebox{0.8}{$s_2^{\prime}$}};
        %\node at (1.5,-1) {\scalebox{0.8}{$s_1^{\prime}$}};
    \end{tikzpicture},\qquad
    P_{s_1^{\phantom{\prime}} s_2^{\phantom{\prime}} s_3^{\phantom{\prime}}}
    ^{s_1^{\prime} s_2^{\prime} s_3^{\prime}}
    =
    \delta_{s_1^{\phantom{\prime}},s_1^{\prime}}
    \delta_{s_2^{\phantom{\prime}},s_2^{\prime}}
    \delta_{s_3^{\phantom{\prime}},s_3^{\prime}}
    \Big(1-\delta_{s_1,s_3}\Big).
\end{eqnarray}
We note that the local three-site dual operator projects to the same subspace as $P$, 
\begin{eqnarray}\label{eq:projRel}
    \hat{U}=P\hat{U}=\hat{U} P,\qquad
    \begin{tikzpicture}[scale=0.45,baseline={([yshift=-0.6ex]current bounding box.center)}]
        \foreach \t in {-1,...,1}{\gridLine{-1}{\t}{1}{\t}}
        \gridLine{0}{-1}{0}{1}
        \sCircle{0}{1}{colU}
        \bCircle{0}{0}{colU}
        \sCircle{0}{-1}{colU}
    \end{tikzpicture}=
    \begin{tikzpicture}[scale=0.45,baseline={([yshift=-0.6ex]current bounding box.center)}]
        \foreach \t in {-1,...,1}{\gridLine{-1}{\t}{1.5}{\t}}
        \gridLine{0}{-1}{0}{1}
        \gridLine{1}{-1}{1}{1}
        \sCircle{0}{1}{colU}
        \bCircle{0}{0}{colU}
        \sCircle{0}{-1}{colU}
        \sCircle{1}{1}{colP}
        \sCircle{1}{0}{colP}
        \sCircle{1}{-1}{colP}
    \end{tikzpicture}=
    \begin{tikzpicture}[scale=0.45,baseline={([yshift=-0.6ex]current bounding box.center)}]
        \foreach \t in {-1,...,1}{\gridLine{-1.5}{\t}{1}{\t}}
        \gridLine{0}{-1}{0}{1}
        \gridLine{-1}{-1}{-1}{1}
        \sCircle{0}{1}{colU}
        \bCircle{0}{0}{colU}
        \sCircle{0}{-1}{colU}
        \sCircle{-1}{1}{colP}
        \sCircle{-1}{0}{colP}
        \sCircle{-1}{-1}{colP}
    \end{tikzpicture},
\end{eqnarray}
and additionally, $P$ and $\hat{U}$ commute if they overlap for at most one site,
while they do not commute if the overlap is $2$,
\begin{eqnarray}\fl
    \begin{tikzpicture}[scale=0.45,baseline={([yshift=-0.6ex]current bounding box.center)}]
        \foreach \t in {-1,...,3}{\gridLine{-1}{\t}{1}{\t}}
        \gridLine{0}{-1}{0}{1}
        \gridLine{-0.5}{1}{-0.5}{3}
        \sCircle{0}{1}{colU}
        \bCircle{0}{0}{colU}
        \sCircle{0}{-1}{colU}
        \sCircle{-0.5}{1}{colP}
        \sCircle{-0.5}{2}{colP}
        \sCircle{-0.5}{3}{colP}
    \end{tikzpicture}=
    \begin{tikzpicture}[scale=0.45,baseline={([yshift=-0.6ex]current bounding box.center)}]
        \foreach \t in {-1,...,3}{\gridLine{-1}{\t}{1}{\t}}
        \gridLine{0}{-1}{0}{1}
        \gridLine{0.5}{1}{0.5}{3}
        \sCircle{0}{1}{colU}
        \bCircle{0}{0}{colU}
        \sCircle{0}{-1}{colU}
        \sCircle{0.5}{1}{colP}
        \sCircle{0.5}{2}{colP}
        \sCircle{0.5}{3}{colP}
    \end{tikzpicture},\qquad
    \begin{tikzpicture}[scale=0.45,baseline={([yshift=-0.6ex]current bounding box.center)}]
        \foreach \t in {-3,...,1}{\gridLine{-1}{\t}{1}{\t}}
        \gridLine{0}{-1}{0}{1}
        \gridLine{-0.5}{-1}{-0.5}{-3}
        \sCircle{0}{1}{colU}
        \bCircle{0}{0}{colU}
        \sCircle{0}{-1}{colU}
        \sCircle{-0.5}{-1}{colP}
        \sCircle{-0.5}{-2}{colP}
        \sCircle{-0.5}{-3}{colP}
    \end{tikzpicture}=
    \begin{tikzpicture}[scale=0.45,baseline={([yshift=-0.6ex]current bounding box.center)}]
        \foreach \t in {-3,...,1}{\gridLine{-1}{\t}{1}{\t}}
        \gridLine{0}{-1}{0}{1}
        \gridLine{0.5}{-1}{0.5}{-3}
        \sCircle{0}{1}{colU}
        \bCircle{0}{0}{colU}
        \sCircle{0}{-1}{colU}
        \sCircle{0.5}{-1}{colP}
        \sCircle{0.5}{-2}{colP}
        \sCircle{0.5}{-3}{colP}
    \end{tikzpicture},\qquad
    \begin{tikzpicture}[scale=0.45,baseline={([yshift=-0.6ex]current bounding box.center)}]
        \foreach \t in {-1,...,2}{\gridLine{-1.5}{\t}{1}{\t}}
        \gridLine{0}{-1}{0}{1}
        \gridLine{-1}{0}{-1}{2}
        \sCircle{0}{1}{colU}
        \bCircle{0}{0}{colU}
        \sCircle{0}{-1}{colU}
        \sCircle{-1}{0}{colP}
        \sCircle{-1}{1}{colP}
        \sCircle{-1}{2}{colP}
    \end{tikzpicture}\neq
    \begin{tikzpicture}[scale=0.45,baseline={([yshift=-0.6ex]current bounding box.center)}]
        \foreach \t in {-1,...,2}{\gridLine{-1}{\t}{1.5}{\t}}
        \gridLine{0}{-1}{0}{1}
        \gridLine{1}{0}{1}{2}
        \sCircle{0}{1}{colU}
        \bCircle{0}{0}{colU}
        \sCircle{0}{-1}{colU}
        \sCircle{1}{0}{colP}
        \sCircle{1}{1}{colP}
        \sCircle{1}{2}{colP}
    \end{tikzpicture},\qquad
    \begin{tikzpicture}[scale=0.45,baseline={([yshift=-0.6ex]current bounding box.center)}]
        \foreach \t in {-2,...,1}{\gridLine{-1.5}{\t}{1}{\t}}
        \gridLine{0}{-1}{0}{1}
        \gridLine{-1}{0}{-1}{-2}
        \sCircle{0}{1}{colU}
        \bCircle{0}{0}{colU}
        \sCircle{0}{-1}{colU}
        \sCircle{-1}{0}{colP}
        \sCircle{-1}{-1}{colP}
        \sCircle{-1}{-2}{colP}
    \end{tikzpicture}\neq
    \begin{tikzpicture}[scale=0.45,baseline={([yshift=-0.6ex]current bounding box.center)}]
        \foreach \t in {-2,...,1}{\gridLine{-1}{\t}{1.5}{\t}}
        \gridLine{0}{-1}{0}{1}
        \gridLine{1}{0}{1}{-2}
        \sCircle{0}{1}{colU}
        \bCircle{0}{0}{colU}
        \sCircle{0}{-1}{colU}
        \sCircle{1}{0}{colP}
        \sCircle{1}{-1}{colP}
        \sCircle{1}{-2}{colP}
    \end{tikzpicture}.
\end{eqnarray}
This allows us to show that the evolution in space can be equivalently given in terms
of the operators reduced to the space of allowed configurations,
\begin{eqnarray}\fl
    \begin{tikzpicture}[scale=0.45,baseline={([yshift=-0.6ex]current bounding box.center)}]
        \foreach \t in {1,...,4}{\gridLine{-0.6}{\t}{3.6}{\t}}
        \foreach \x in {0,...,3}{\gridLine{\x}{0.4}{\x}{4.6}}
        \foreach \t in {1,3}{
            \foreach \x in {0,2}{\bCircle{\x}{\t}{colU}}
            \foreach \x in {1,3}{\sCircle{\x}{\t}{colU}}
        }
        \foreach \t in {2,4}{
            \foreach \x in {0,2}{\sCircle{\x}{\t}{colU}}
            \foreach \x in {1,3}{\bCircle{\x}{\t}{colU}}
        }
        \foreach \t in {1,...,4}{\gridLine{6.4}{\t}{16.6}{\t}}
        \foreach \x in {7,10,13,16}{\gridLine{\x}{0.4}{\x}{4.6}}
        \foreach \t in {1,3}{
            \foreach \x in {7,13}{\bCircle{\x}{\t}{colU}}
            \foreach \x in {10,16}{\sCircle{\x}{\t}{colU}}
        }
        \foreach \t in {2,4}{
            \foreach \x in {7,13}{\sCircle{\x}{\t}{colU}}
            \foreach \x in {10,16}{\bCircle{\x}{\t}{colU}}
        }
        \foreach \x in {9.125,10.875,15.125}
        {
            \gridLine{\x-0.175}{0.4}{\x-0.175}{1}
            \gridLine{\x+0.175}{1}{\x+0.175}{3}
            \gridLine{\x-0.175}{3}{\x-0.175}{4.6}
            \sCircle{\x-0.175}{1}{colP}
            \sCircle{\x+0.175}{1}{colP}
            \sCircle{\x+0.175}{2}{colP}
            \sCircle{\x+0.175}{3}{colP}
            \sCircle{\x-0.175}{3}{colP}
            \sCircle{\x-0.175}{4}{colP}
        }
        \foreach \x in {7.875,12.125,13.875}
        {
            \gridLine{\x-0.175}{0.4}{\x-0.175}{2}
            \gridLine{\x+0.175}{2}{\x+0.175}{4}
            \gridLine{\x-0.175}{4}{\x-0.175}{4.6}
            \sCircle{\x-0.175}{1}{colP}
            \sCircle{\x-0.175}{2}{colP}
            \sCircle{\x+0.175}{2}{colP}
            \sCircle{\x+0.175}{3}{colP}
            \sCircle{\x+0.175}{4}{colP}
            \sCircle{\x-0.175}{4}{colP}
        }
        \foreach \t in {1,...,4}{\gridLine{19.4}{\t}{29.6}{\t}}
        \foreach \x in {20,23,26,29}{\gridLine{\x}{0.4}{\x}{4.6}}
        \foreach \t in {1,3}{
            \foreach \x in {20,26}{\bCircle{\x}{\t}{colU}}
            \foreach \x in {23,29}{\sCircle{\x}{\t}{colU}}
        }
        \foreach \t in {2,4}{
            \foreach \x in {20,26}{\sCircle{\x}{\t}{colU}}
            \foreach \x in {23,29}{\bCircle{\x}{\t}{colU}}
        }
        \foreach \x in {22.125,23.875,28.125}
        {
            \gridLine{\x-0.175}{0.4}{\x-0.175}{2}
            \gridLine{\x+0.175}{2}{\x+0.175}{4}
            \gridLine{\x-0.175}{4}{\x-0.175}{4.6}
            \sCircle{\x-0.175}{1}{colP}
            \sCircle{\x-0.175}{2}{colP}
            \sCircle{\x+0.175}{2}{colP}
            \sCircle{\x+0.175}{3}{colP}
            \sCircle{\x+0.175}{4}{colP}
            \sCircle{\x-0.175}{4}{colP}

        }
        \foreach \x in {20.875,25.125,26.875}
        {
            \gridLine{\x-0.175}{0.4}{\x-0.175}{1}
            \gridLine{\x+0.175}{1}{\x+0.175}{3}
            \gridLine{\x-0.175}{3}{\x-0.175}{4.6}
            \sCircle{\x-0.175}{1}{colP}
            \sCircle{\x+0.175}{1}{colP}
            \sCircle{\x+0.175}{2}{colP}
            \sCircle{\x+0.175}{3}{colP}
            \sCircle{\x-0.175}{3}{colP}
            \sCircle{\x-0.175}{4}{colP}
        }
        \node at (1,5) {\scalebox{0.8}{$\tUe$}};
        \node at (2,5) {\scalebox{0.8}{$\tUo$}};

        \node at (9.125,5) {\scalebox{0.8}{$P_{\mathrm{e}}$}};
        \node at (10,5) {\scalebox{0.8}{$\tUe$}};
        \node at (10.875,5) {\scalebox{0.8}{$P_{\mathrm{e}}$}};
        
        \node at (12.125,5) {\scalebox{0.8}{$P_{\mathrm{o}}$}};
        \node at (13,5) {\scalebox{0.8}{$\tUo$}};
        \node at (13.875,5) {\scalebox{0.8}{$P_{\mathrm{o}}$}};

        \node at (22.125,5) {\scalebox{0.8}{$P_{\mathrm{o}}$}};
        \node at (23,5) {\scalebox{0.8}{$\tUe$}};
        \node at (23.875,5) {\scalebox{0.8}{$P_{\mathrm{o}}$}};
        
        \node at (25.125,5) {\scalebox{0.8}{$P_{\mathrm{e}}$}};
        \node at (26,5) {\scalebox{0.8}{$\tUo$}};
        \node at (26.875,5) {\scalebox{0.8}{$P_{\mathrm{e}}$}};
        \draw [decorate,decoration={brace,amplitude=5pt},xshift=0pt,yshift=0pt]
        (11.125,0.375) -- (8.875,0.375) node [midway,yshift=-12]
        {\scalebox{0.8}{$\tUe$}};
        \draw [decorate,decoration={brace,amplitude=5pt},xshift=0pt,yshift=0pt]
        (14.125,0.375) -- (11.875,0.375) node [midway,yshift=-12]
        {\scalebox{0.8}{$\tUo$}};
        \draw [decorate,decoration={brace,amplitude=5pt},xshift=0pt,yshift=0pt]
        (24.125,0.375) -- (21.875,0.375) node [midway,yshift=-12]
        {\scalebox{0.8}{$\tdUe$}};
        \draw [decorate,decoration={brace,amplitude=5pt},xshift=0pt,yshift=0pt]
        (27.125,0.375) -- (24.875,0.375) node [midway,yshift=-12]
        {\scalebox{0.8}{$\tdUo$}};
        \draw[thick,-latex] (4.25,2.5) -- (5.75,2.5);
        \draw[thick,-latex] (17.25,2.5) -- (18.75,2.5);
    \end{tikzpicture},
\end{eqnarray}
where $P_{\mathrm{e/o}}$ are defined as products of $P$ on even/odd sites (in
analogy to $\tUeo$). To get from the first to the second diagram, we use the
relation~\eqref{eq:projRel}, and to obtain the third diagram we note that all
projectors $P$ commute amongst themselves. The operators $\tdUe$, $\tdUo$
are precisely the space-evolution maps on the reduced subspace, and from the
previous subsection we know that they can be expressed as a composition of
$7$-site deterministic maps. One can show this independently by expressing
the $7$-site deterministic gates in terms of the non-deterministic gate~$\hat{U}$
squeezed between the relevant projectors, and then prove that the two descriptions
are equivalent by utilizing appropriate local algebraic relations fulfilled
by the gates and projectors. The full construction with the proof is reported in
Ref.~\cite{klobas2020space}.

\subsubsection{Tensor-network representation of multi-time correlation functions}
To demonstrate the usefulness of the circuit formulation of RCA54, we return to
the multi-time correlation functions at the same point and show an independent
derivation of time-states. For simplicity we will only consider the maximum-entropy
state, but a similar analysis can be done for a stationary state of the form introduced
in Subsection~\ref{sec:PBmps}.

Before starting, we recall that by definition the expectation value of an observable $a$ 
in a probability distribution on $n$ sites $\vec{p}$ is a sum over all the configurations
of the products of the probability and the value of the observable in the configurations,
\begin{eqnarray}\label{eq:defMultiCorrsOmega}
    \expval{a}_{\vec{p}}=\sum_{\ul{s}} a(\ul{s}) p_{\ul{s}}
    = \left(\vec{\omega}^T\right)^{\otimes n}\cdot A\cdot \vec{p},\qquad 
    \vec{\omega}=\begin{bmatrix}1 \\1\end{bmatrix},
\end{eqnarray}
where we introduced the unnormalized one-site maximum-entropy state
$\vec{\omega}$ and $A$ is a $2^{n}\times 2^{n}$ diagonal matrix
with the values $a(\ul{s})$. Note that the maximum entropy state on $n$
sites corresponds to $\vec{p}=2^{-n} \vec{\omega}^{\otimes n}$.  This allows
us to introduce the multi-time correlation function of one-site observables
$a_1,a_2,\ldots,a_{k}$ at times $t=0,1,\ldots k-1$ as,
\begin{eqnarray}\label{eq:multiCorrDiagram1}
    C_{a_1,\ldots,a_{k}}=
    \frac{1}{2^{n}}\ 
    \begin{tikzpicture}[scale=0.55,baseline={([yshift=0.4ex]current bounding box.center)}]
        \foreach \x in {1,...,10}{
            \gridLine{\x}{0}{\x}{9}
            \ME{\x}{0}
            \ME{\x}{9}
        }
        \foreach \t in {1,...,8}{
            \gridLine{0.5}{\t}{10.5}{\t}
            \leftHook{0.5}{\t}
            \rightHook{10.5}{\t}
        }
        \foreach \t in {1,3,...,8}
        {
            \foreach \x in {1,3,...,10}{\sCircle{\x}{\t}{colU}}
            \foreach \x in {2,4,...,10}{\bCircle{\x}{\t}{colU}}
        }
        \foreach \t in {2,4,...,8}
        {
            \foreach \x in {1,3,...,10}{\bCircle{\x}{\t}{colU}}
            \foreach \x in {2,4,...,10}{\sCircle{\x}{\t}{colU}}
        }
        \foreach \t in {0,2,...,7}{\obs{6}{(\t+0.375)}}
        \obs{6}{(8+0.5)}
        \foreach \t in {1,3,...,7}{\obs{5}{(\t+0.375)}}
        \node[inner sep=0] at (6.6,8.6) {\scalebox{0.7}{$a_1$}};
        \node[inner sep=0] at (5.5,7.5) {\scalebox{0.7}{$a_2$}};
        \node[inner sep=0] at (6.5,6.5) {\scalebox{0.7}{$a_3$}};
        \node[inner sep=0] at (6.5,0.4) {\scalebox{0.7}{$a_k$}};
        \draw [decorate,decoration={brace,amplitude=5pt},xshift=0pt,yshift=0pt]
        (10.75,8.125) -- (10.75,0.625) node [midway,xshift=20,inner sep=0]{\scalebox{1}{$k-1$}};
        \draw [decorate,decoration={brace,amplitude=5pt},xshift=0pt,yshift=0pt]
        (10.25,-0.125) -- (0.75,-0.125) node [midway,yshift=-10,inner sep=0]{\scalebox{1}{$n$}};
    \end{tikzpicture}\,,
\end{eqnarray}
where $\begin{tikzpicture}[scale=0.45,baseline={([yshift=-0.6ex]current bounding box.center)}]
\gridLine{0}{0.5}{0}{-0.5} \obs{0}{0} \end{tikzpicture}$ represents a one-site
observable and
$\begin{tikzpicture} [scale=0.45,baseline={([yshift=-0.6ex]current bounding box.center)}]
\gridLine{0}{0.75}{0}{0} \ME{0}{0} \end{tikzpicture}=\vec{\omega}$. For simplicity we will
assume that $k\equiv 1\pmod{2}$, but the same can be repeated for even $k$.
Note that these correlation functions can be directly related to time-states $\vec{q}$ as
\begin{eqnarray}\label{eq:defRelTSs}
    C_{a_1,a_2,\ldots,a_{2m}} = \smashoperator{\sum_{s_1,s_2,\ldots,s_{2m}}}
    q_{s_1,\ldots s_{2m}} \prod_{j=1}^{2m} a_j(s_j).
\end{eqnarray}

To reduce the diagrammatic representation of $C_{a_1,\ldots,a_k}$, we first note that
the local time-evolution operator $U$ is deterministic and as a consequence
the maximum-entropy state is invariant under it (as well as under its transpose),
\begin{eqnarray}\label{eq:detUrels}
    \begin{tikzpicture}[scale=0.45,baseline={([yshift=-0.6ex]current bounding box.center)}]
        \gridLine{-1}{-1}{-1}{1}
        \gridLine{0}{-1}{0}{1}
        \gridLine{1}{-1}{1}{1}
        \gridLine{-1}{0}{1}{0}
        \ME{-1}{1}
        \ME{0}{1}
        \ME{1}{1}
        \sCircle{-1}{0}{colU}
        \bCircle{0}{0}{colU}
        \sCircle{1}{0}{colU}
    \end{tikzpicture}=
    \begin{tikzpicture}[scale=0.45,baseline={([yshift=-0.6ex]current bounding box.center)}]
        \gridLine{-1}{1}{-1}{0}
        \gridLine{0}{1}{0}{0}
        \gridLine{1}{1}{1}{0}
        \ME{-1}{1}
        \ME{0}{1}
        \ME{1}{1}
    \end{tikzpicture},\qquad
    \begin{tikzpicture}[scale=0.45,baseline={([yshift=-0.6ex]current bounding box.center)}]
        \gridLine{-1}{-1}{-1}{1}
        \gridLine{0}{-1}{0}{1}
        \gridLine{1}{-1}{1}{1}
        \gridLine{-1}{0}{1}{0}
        \ME{-1}{-1}
        \ME{0}{-1}
        \ME{1}{-1}
        \sCircle{-1}{0}{colU}
        \bCircle{0}{0}{colU}
        \sCircle{1}{0}{colU}
    \end{tikzpicture}=
    \begin{tikzpicture}[scale=0.45,baseline={([yshift=-0.6ex]current bounding box.center)}]
        \gridLine{-1}{-1}{-1}{0}
        \gridLine{0}{-1}{0}{0}
        \gridLine{1}{-1}{1}{0}
        \ME{-1}{-1}
        \ME{0}{-1}
        \ME{1}{-1}
    \end{tikzpicture}.
\end{eqnarray}
Therefore the circuit~\eqref{eq:multiCorrDiagram1} can be simplified by removing gates
connected to $\vec{\omega}$, which reduces it to a tilted square shape. Additionally,
since the observables are diagonal, they commute with the small tensor,
\begin{eqnarray}
    \begin{tikzpicture}[scale=0.45,baseline={([yshift=-0.6ex]current bounding box.center)}]
        \gridLine{0}{1}{0}{-1}
        \gridLine{1}{0}{-1}{0}
        \sCircle{0}{0}{colU}
        \obs{0}{0.5}
    \end{tikzpicture}=
    \begin{tikzpicture}[scale=0.45,baseline={([yshift=-0.6ex]current bounding box.center)}]
        \gridLine{0}{1}{0}{-1}
        \gridLine{1}{0}{-1}{0}
        \sCircle{0}{0}{colU}
        \obs{0.5}{0}
    \end{tikzpicture}=
    \begin{tikzpicture}[scale=0.45,baseline={([yshift=-0.6ex]current bounding box.center)}]
        \gridLine{0}{1}{0}{-1}
        \gridLine{1}{0}{-1}{0}
        \sCircle{0}{0}{colU}
        \obs{0}{-0.5}
    \end{tikzpicture}=
    \begin{tikzpicture}[scale=0.45,baseline={([yshift=-0.6ex]current bounding box.center)}]
        \gridLine{0}{1}{0}{-1}
        \gridLine{1}{0}{-1}{0}
        \sCircle{0}{0}{colU}
        \obs{-0.5}{0}
    \end{tikzpicture},
\end{eqnarray}
and we can recast the circuit in terms of dual gates~$\hat{U}$ as
\begin{eqnarray}\label{eq:multiCorrDiagram2}
    C_{a_1,\ldots,a_{k}}=
    \frac{1}{2^{k}}\ 
    \begin{tikzpicture}[scale=0.55,baseline={([yshift=-0.6ex]current bounding box.center)}]
        \gridLine{3}{3}{3}{5}
        \gridLine{4}{2}{4}{6}
        \gridLine{5}{1}{5}{7}
        \gridLine{6}{0}{6}{8}
        \gridLine{7}{1}{7}{7}
        \gridLine{8}{2}{8}{6}
        \gridLine{9}{3}{9}{5}
        \ME{2}{3}
        \ME{2}{4}
        \ME{2}{5}
        \ME{3}{2}
        \ME{3}{6}
        \ME{4}{1}
        \ME{4}{7}
        \ME{5}{0}
        \ME{5}{8}
        \ME{7}{0}
        \ME{7}{8}
        \ME{8}{1}
        \ME{8}{7}
        \ME{9}{2}
        \ME{9}{6}
        \ME{10}{3}
        \ME{10}{4}
        \ME{10}{5}

        \gridLine{5}{8}{7}{8}
        \gridLine{4}{7}{8}{7}
        \gridLine{3}{6}{9}{6}
        \gridLine{2}{5}{10}{5}
        \gridLine{2}{4}{10}{4}
        \gridLine{2}{3}{10}{3}
        \gridLine{3}{2}{9}{2}
        \gridLine{4}{1}{8}{1}
        \gridLine{5}{0}{7}{0}
        \sCircle{6}{8}{colU}
        \sCircle{5}{7}{colU}
        \bCircle{6}{7}{colU}
        \sCircle{7}{7}{colU}
        \sCircle{4}{6}{colU}
        \bCircle{5}{6}{colU}
        \sCircle{6}{6}{colU}
        \bCircle{7}{6}{colU}
        \sCircle{8}{6}{colU}

        \sCircle{3}{5}{colU}
        \bCircle{4}{5}{colU}
        \sCircle{5}{5}{colU}
        \bCircle{6}{5}{colU}
        \sCircle{7}{5}{colU}
        \bCircle{8}{5}{colU}
        \sCircle{9}{5}{colU}

        \bCircle{3}{4}{colU}
        \sCircle{4}{4}{colU}
        \bCircle{5}{4}{colU}
        \sCircle{6}{4}{colU}
        \bCircle{7}{4}{colU}
        \sCircle{8}{4}{colU}
        \bCircle{9}{4}{colU}

        \sCircle{6}{0}{colU}
        \sCircle{5}{1}{colU}
        \bCircle{6}{1}{colU}
        \sCircle{7}{1}{colU}
        \sCircle{4}{2}{colU}
        \bCircle{5}{2}{colU}
        \sCircle{6}{2}{colU}
        \bCircle{7}{2}{colU}
        \sCircle{8}{2}{colU}

        \sCircle{3}{3}{colU}
        \bCircle{4}{3}{colU}
        \sCircle{5}{3}{colU}
        \bCircle{6}{3}{colU}
        \sCircle{7}{3}{colU}
        \bCircle{8}{3}{colU}
        \sCircle{9}{3}{colU}
        \foreach \t in {1,3,...,7}{\obs{5.375}{\t}}
        \foreach \t in {2,4,...,7}{\obs{5.625}{\t}}
        \obs{5.5}{0}
        \obs{5.5}{8}
    \end{tikzpicture}\,.
\end{eqnarray}
So far we have not taken into account any special property of RCA54 and the
computational complexity of such an object would generally still grow
exponentially. However, we know that in our case the complexity is constant,
therefore there must be some simplification occurring.

If the gates $\hat{U}$ were deterministic, a relation analogous to
\eqref{eq:detUrels} would be satisfied in the space direction and the
correlation functions would trivialize. This does not hold precisely in our
case, but the fact that the dual evolution is deterministic on the reduced
subspace suggests that the gates $\hat{U}$ have some additional structure.
In particular, we find that the following two few-site algebraic relations are
fulfilled,
\begin{eqnarray}\label{eq:dualFewSiteRelations}
    \begin{tikzpicture}[scale=0.45,baseline={([yshift=-0.6ex]current bounding box.center)}]
        \foreach \t in {-2,...,2}{
            \gridLine{0}{\t}{3.75}{\t}
            \ME{0}{\t}
        }
        \tprop{1}{1}{-1}{colU}
        \tprop{2}{2}{0}{colU}
        \tprop{2}{-2}{0}{colU}
        \tproj{3}{-1}{1}{colP}
    \end{tikzpicture}=
    \begin{tikzpicture}[scale=0.45,baseline={([yshift=-0.6ex]current bounding box.center)}]
        \foreach \t in {-2,...,2}{
            \gridLine{1}{\t}{3.75}{\t}
            \ME{1}{\t}
        }
        \tprop{2}{2}{0}{colU}
        \tprop{2}{-2}{0}{colU}
        \tproj{3}{-1}{1}{colP}
    \end{tikzpicture},\qquad
    \begin{tikzpicture}[scale=0.45,baseline={([yshift=-0.6ex]current bounding box.center)}]
        \foreach \t in {-3,...,2}{
            \gridLine{0}{\t}{3.75}{\t}
            \ME{0}{\t}
        }
        \tprop{1}{2}{0}{colU}
        \tprop{1}{0}{-2}{colU}
        \tprop{2}{-3}{-1}{colU}
        \tprop{2}{-1}{1}{colU}
        \tproj{3}{-2}{0}{colP}
    \end{tikzpicture}=
    \begin{tikzpicture}[scale=0.45,baseline={([yshift=-0.6ex]current bounding box.center)}]
        \foreach \t in {-3,...,2}{
            \gridLine{0}{\t}{3.75}{\t}
            \ME{0}{\t}
        }
        \tprop{1}{2}{0}{colU}
        \tprop{2}{-3}{-1}{colU}
        \tprop{2}{-1}{1}{colU}
        \tproj{3}{-2}{0}{colP}
    \end{tikzpicture}, 
\end{eqnarray}
and since $\hat{U}^T=\hat{U}$, an analogous set of left-right flipped identities holds. These relations
are used, together with the projector identity~\eqref{eq:projRel}, to remove dual gates from
the diagram~\eqref{eq:multiCorrDiagram2} layer by layer, starting at the edges and working our
way inwards. We are left with only two layers of gates, one on each side of the column of
observables,
\begin{eqnarray}\label{eq:multiCorrDiagram3}
    C_{a_1,a_2,\ldots,a_k}=
    \frac{1}{2^{k}}\ 
    \begin{tikzpicture}[scale=0.45,baseline={([yshift=-0.6ex]current bounding box.center)}]
        \foreach \t in {1,...,4,7,8,9}{
            \gridLine{0}{\t}{3.5}{\t}
            \ME{0}{\t}
            \ME{3.5}{\t}
            \obs{1.75}{\t}
        }
        \gridLine{1}{2}{1}{4.6}
        \gridLine{2.5}{1}{2.5}{4.6}
        \gridLine{1}{6.4}{1}{8}
        \gridLine{2.5}{6.4}{2.5}{9}
        \foreach \t in {1,3,7,9}{\sCircle{2.5}{\t}{colU}}
        \foreach \t in {2,4,8}{\bCircle{2.5}{\t}{colU}}
        \foreach \t in {2,4,8}{\sCircle{1}{\t}{colU}}
        \foreach \t in {3,7}{\bCircle{1}{\t}{colU}}
        \node[inner sep=0] at (1.75,5.75) {\scalebox{1.25}{$\mathbf{\vdots}$}};
        \draw[decorate,decoration={brace,amplitude=5pt},xshift=0pt,yshift=0pt]
        (3.75,9.125) -- (3.75,0.875) node [midway,xshift=10] {\scalebox{0.8}{$k$}};
    \end{tikzpicture},
\end{eqnarray}
at which point the algebraic relations~\eqref{eq:dualFewSiteRelations} can no
longer be applied.  Note that here we implicitly used the fact that observables
are diagonal and therefore commute with projectors $P$.

By definition (cf.\ \eqref{eq:defRelTSs}) the fully simplified tensor
network~\eqref{eq:multiCorrDiagram3} gives the following diagrammatic representation
of components of the time-state~$\vec{q}$ corresponding to the maximum entropy state,
\begin{eqnarray}\label{eq:defTSs2}
    q_{s_1,s_2,s_3,\ldots,s_{k}}=
    \frac{1}{2^{k}}\ 
    \begin{tikzpicture}[scale=0.45,baseline={([yshift=-0.6ex]current bounding box.center)}]
        \foreach \t in {1,2,5,6,7,8,9}{
            \gridLine{0}{\t}{2}{\t}
            \gridLine{5}{\t}{7}{\t}
            \ME{0}{\t}
            \ME{7}{\t}
        }
        \node[inner sep=0] at (2.5,9) {\scalebox{0.85}{$s_1$}};
        \node[inner sep=0] at (4.5,9) {\scalebox{0.85}{$s_1$}};
        \node[inner sep=0] at (2.5,8) {\scalebox{0.85}{$s_2$}};
        \node[inner sep=0] at (4.5,8) {\scalebox{0.85}{$s_2$}};
        \node[inner sep=0] at (2.5,7) {\scalebox{0.85}{$s_3$}};
        \node[inner sep=0] at (4.5,7) {\scalebox{0.85}{$s_3$}};
        \node[inner sep=0] at (2.5,6) {\scalebox{0.85}{$s_4$}};
        \node[inner sep=0] at (4.5,6) {\scalebox{0.85}{$s_4$}};
        \node[inner sep=0] at (2.5,5) {\scalebox{0.85}{$s_5$}};
        \node[inner sep=0] at (4.5,5) {\scalebox{0.85}{$s_5$}};
        \node[inner sep=0] at (2.75,2) {\scalebox{0.8}{$s_{k-1}$}};
        \node[inner sep=0] at (4.25,2) {\scalebox{0.8}{$s_{k-1}$}};
        \node[inner sep=0] at (2.5,1) {\scalebox{0.85}{$s_{k}$}};
        \node[inner sep=0] at (4.5,1) {\scalebox{0.85}{$s_{k}$}};
        \gridLine{1}{2}{1}{2.6}
        \gridLine{6}{1}{6}{2.6}
        \gridLine{1}{4.4}{1}{8}
        \gridLine{6}{4.4}{6}{9}
        \foreach \t in {1,5,7,9}{\sCircle{6}{\t}{colU}}
        \foreach \t in {2,6,8}{\bCircle{6}{\t}{colU}}
        \foreach \t in {2,6,8}{\sCircle{1}{\t}{colU}}
        \foreach \t in {5,7}{\bCircle{1}{\t}{colU}}
        \node[inner sep=0] at (3.5,3.75) {\scalebox{1.25}{$\mathbf{\vdots}$}};
    \end{tikzpicture}.
\end{eqnarray} 
The diagram immediately implies that the Schmidt rank does not grow with time
and can be bounded from above by $4$. The equivalence between this object and
the corresponding time-state~\eqref{eq:timeStateFinal} can be straightforwardly
shown by explicitly extracting $4\times 4$ matrices and boundary vectors
from~\eqref{eq:defTSs2}. The details of the proof are reported in
Ref.~\cite{klobas2020space}, together with the generalisation of the circuit
representation of the multi-time correlations to the full class of Gibbs-like
stationary states $\vec{p}$.

Furthermore, analogous treatment can be straightforwardly recast for the
quantum version of the
model~\cite{klobas2021exact,klobas2021exactII,klobas2021entanglement}. The main
conceptual difference between the quantum and classical case is the size of the
vector space, which is in the quantum case doubled (squared in dimensionality),
as we are considering density matrices rather than probability distributions.
Despite that, a similar set of few-site algebraic relations can be formulated,
which gives access to objects analogous to~\eqref{eq:defTSs2}.

\subsection{Correlation functions at one site}
The MPA representation of~$\vec{q}$ gives us information about
all multi-point correlation functions. In particular, one can
obtain the full probability distribution of gaps between the consecutive
observed particles, which in related stochastic models shows nontrivial
phenomenology~\cite{garrahan2018aspects,jung2005dynamical,hedges2007decoupling}.
In our case, however, this behaviour is not reproduced, since the solitons passing
through the origin are uncorrelated (see~\cite{klobas2020matrix} for details).

Another quantity of interest is the one-site dynamic density-density correlation
function
\begin{eqnarray}
    C(0,t)=\expval{\rho(t)\rho}_{\vec{p}}-\expval{\rho(t)}_{\vec{p}} \expval{\rho}_{\vec{p}},
\end{eqnarray}
where $\rho$ denotes the local density of full sites at position $x=0$. The correlation
function can be easily expressed in terms of the MPS~\eqref{eq:timeStateFinal} as,
\begin{eqnarray}\fl
    C(0,t)=\mel{L}{\A_1\Big((\B_0+\B_1)(\A_0+\A_1)\Big)^{t/2-1} (\B_0+\B_1)\A_1}{R}
    -\mel{L}{A_1}{R}^2,
\end{eqnarray}
where we for simplicity assume even $t$. Since the matrices $\AB_s$ are finite-dimensional,
the above matrix product can be easily evaluated and shown to be equal to
\begin{eqnarray}\fl
    C(0,t)=\frac{(\Lambda_1^{\frac{t}{2}}-\Lambda_2^{\frac{t}{2}})
    ((p_l+p_r)^2-p_l p_r(1+p_l+p_r))+
    (\Lambda_1^{\frac{t-2}{2}}-\Lambda_2^{\frac{t-2}{2}})p_l p_r(p_l+p_r)}
    {(1+p_l+p_r)^2(\Lambda_1-\Lambda_2)},
\end{eqnarray}
where $\Lambda_{1,2}$ are subleading eigenvalues of the transfer matrix $(\B_0+\B_1)(\A_0+\A_1)$,
\begin{eqnarray}
    \Lambda_{1,2}=-\frac{1}{2}\left(
    p_l+p_r-p_l p_r\pm\sqrt{(p_l+p_r-p_l p_r)^2-4 p_l p_r}
    \right).
\end{eqnarray}
The correlation function decays exponentially for all values of parameters, and
furthermore, one can show~\cite{klobas2021exact,klobas2021exactII} that the
correlations exponentially decay also when we move away from $x=0$ as long as
$|x/t|<\frac{1}{3}$. This is compatible with the hydrodynamic picture,
according to which the correlations move with velocity larger than
$\frac{1}{3}$ (cf.~\eqref{eq:admVels}), and therefore we cannot observe power-law
decay on rays with $|x/t|<\frac{1}{3}$.

\section{Time-dependent matrix product ansatz}
\label{sect:timeMPA}
In the previous section we were discussing evolution of an observable
restricted to one spatial coordinate. Here we go one step further and discuss
the \emph{full} time-evolution of local one-site observables, which is the
classical analogue of the Heisenberg picture time-evolution. In generic
systems, this is a hard problem and the computational complexity
typically increases exponentially with time. However, in the case of Rule 54
the problem reduces significantly and the complexity of time-evolution grows
quadratically with time. This is the consequence of the fact that the
deterministic time-evolution reduces to the problem of back-tracking positions
of solitons, which can be due to the simplicity of the interactions done efficiently.

Most of the section is devoted to the construction of the time-dependent
matrix-product state that encodes this procedure, which allows us to find exact
density profile after an inhomogeneous quench and the density-density
spatio-temporal correlation function in the maximum entropy state. We conclude
by discussing the quantum generalisation to non-diagonal
observables, which provides a strict bound on operator-spreading in the quantum
version of the model.

\subsection{Dynamics of classical observables}
We assume the finite chain with periodic boundaries of \emph{even}
length $n$, and the sites are labelled by integers in the interval $[-n/2+1,n/2]$,
so that the site $0$ is at the middle of the chain. The length of half of the
chain $n/2$ is assumed to be strictly larger than any finite time $t$ we are
considering.
Here we use $\U(t)$ to denote time-evolution for the first $t$ time-steps, i.e.\
\begin{eqnarray}\label{eq:tEvoltMPA}
    \U(t)=\begin{cases}
        (\Uo\Ue)^{\frac{t}{2}},& t\equiv 0\pmod{2},\\
        \Ue(\Uo\Ue)^{\frac{t-1}{2}},& t\equiv 1\pmod{2},
    \end{cases}
\end{eqnarray}
and in the first time-step, \emph{even} sites get updated.

As we already noted in the previous section, classical observables can
be understood as diagonal operators acting on the $2^{n}$-dimensional
space of (macroscopic) states, so that the classical analogue of the
Heisenberg picture time-evolution is given by
%(see \ref{app:determEvol} for the details)
\begin{eqnarray}
    a(t) = \U(t)^{-1} a \U(t),
\end{eqnarray}
which in our case reduces to
\begin{eqnarray}\label{eq:obstEvoltMPA}
    a(t+1) =
    \begin{cases}
        \Ue a(t) \Ue,& t\equiv 0 \pmod{2},\\
        \Uo a(t) \Uo,& t\equiv 1 \pmod{2}.
    \end{cases}
\end{eqnarray}
In the following we will consider the simplest local observable: local density
in the centre of the chain $\rho=\rho_0$, where $\rho_x$ denotes the
density at the position $x$,
\begin{eqnarray}
    \rho_x=\one^{\otimes n/2+x-1} \otimes
    \begin{bmatrix}
        0 & 0 \\ 0 & 1 
    \end{bmatrix} \otimes \one^{n/2-x} \equiv \ketbra{1}_x.
\end{eqnarray}
Here we introduced the shorthand symbol $\ketbra{s}_x$, $s=0,1$, for the projector
to one of the local physical states at site $x$. Additionally, we assume the following
notation for the projector to the configuration $(s_{-m},s_{-m+1},\ldots, s_{m})$
on the section $[-m,m]$ of the lattice,
\begin{eqnarray}\fl
    \ketbra{s_{-m}\ldots s_m} = 
    \ketbra{s_{-m}}_{-m} \cdot
    \ketbra{s_{-m+1}}_{-m+1}\cdots
    \ketbra{s_{m}}_{m}.
\end{eqnarray}
Note that $\rho_x(t)$ can be obtained from $\rho(t)$ by a lattice spatial shift operator $\eta$,
\begin{eqnarray}\label{eq:tMPAShiftStaggering1}\fl
    \rho_{x}(t) =
    \begin{cases}
        \eta^{x} \rho(t) \eta^{-x}, & x\equiv 0\pmod{2},\\
        \eta^{x} \rho(t-1) \eta^{-x}, & x\equiv 1\pmod{2},
    \end{cases}\qquad
    \eta^{x} \ketbra{s}_y \eta^{-x} = \ketbra{s}_{y+x}.
\end{eqnarray}
Furthermore, $\rho_x(t)$ also gives immediate access to the time-evolved local
density of empty sites $\bar{\rho}_x=\ketbra{0}_x$, through the relation
$\bar{\rho}_x(t)=\one-\rho_x(t)$. Therefore, assuming that we know
$\rho(t)$, we are in principle able to express time evolution of any local diagonal
operator, since time evolution of observables is a homomorphism, $(A
B)(t)=A(t)B(t)$.

Time evolution is deterministic and local (see e.g.\ \eqref{eq:detUrels}), which
implies that at time $t$ the time-evolved observable acts nontrivially on the section of
the chain between $-t$ and $t$,
\begin{eqnarray}
    \rho(t) = \sum_{\ul{s}\in\mathbb{Z}_2^{2t+1}}
    c_{\ul{s}}(t)
    \ketbra{\ul{s}}
\end{eqnarray}
where $c_{\ul{s}}(t)$ are the appropriate coefficients in the basis expansion.
By the definition of time evolution~\eqref{eq:defU} each one of these
observables will be mapped into observables with a (two sites) larger support,
\begin{eqnarray}\fl
    \eqalign{
        &\Ueo  \ketbra{s_{-t} s_{-t+1}\ldots s_{t}} \Ueo\\
    &=
    \quad \smashoperator{\sum_{s_{-t-1},s_{t+1}\in\{0,1\}}}
    \ketbra{s_{-t-1} s_{-t}^{\prime} s_{-t+1} \ldots s_{t}^{\prime}s_{t+1}},
\quad
    s_j^{\prime}=\chi(s_{j-1},s_j,s_{j+1}),}
\end{eqnarray}
where the choice between $\Ue$ and $\Uo$ depends on the parity of the time step
(cf.\ \eqref{eq:obstEvoltMPA}). This immediately implies that there are exactly
$4^t$ different configurations $\ul{s}$, referred to as \emph{accessible} configurations,
of length $2t+1$, for which $c_{\ul{s}}(t)=1$, while the other $4^t$ coefficients are
zero. Furthermore, accessible configurations are precisely the ones for which the site
at the origin was full at time $t=0$, which means that the configuration $\ul{s}$
contains a soliton that was at time $t=0$ at the site $x=0$. The problem of time-evolution
of local density can be therefore mapped onto the equivalent problem of identifying
solitons in the configuration and backtracking their position in time.

\begin{figure}
  \centering
  \includegraphics[width=0.7\textwidth]{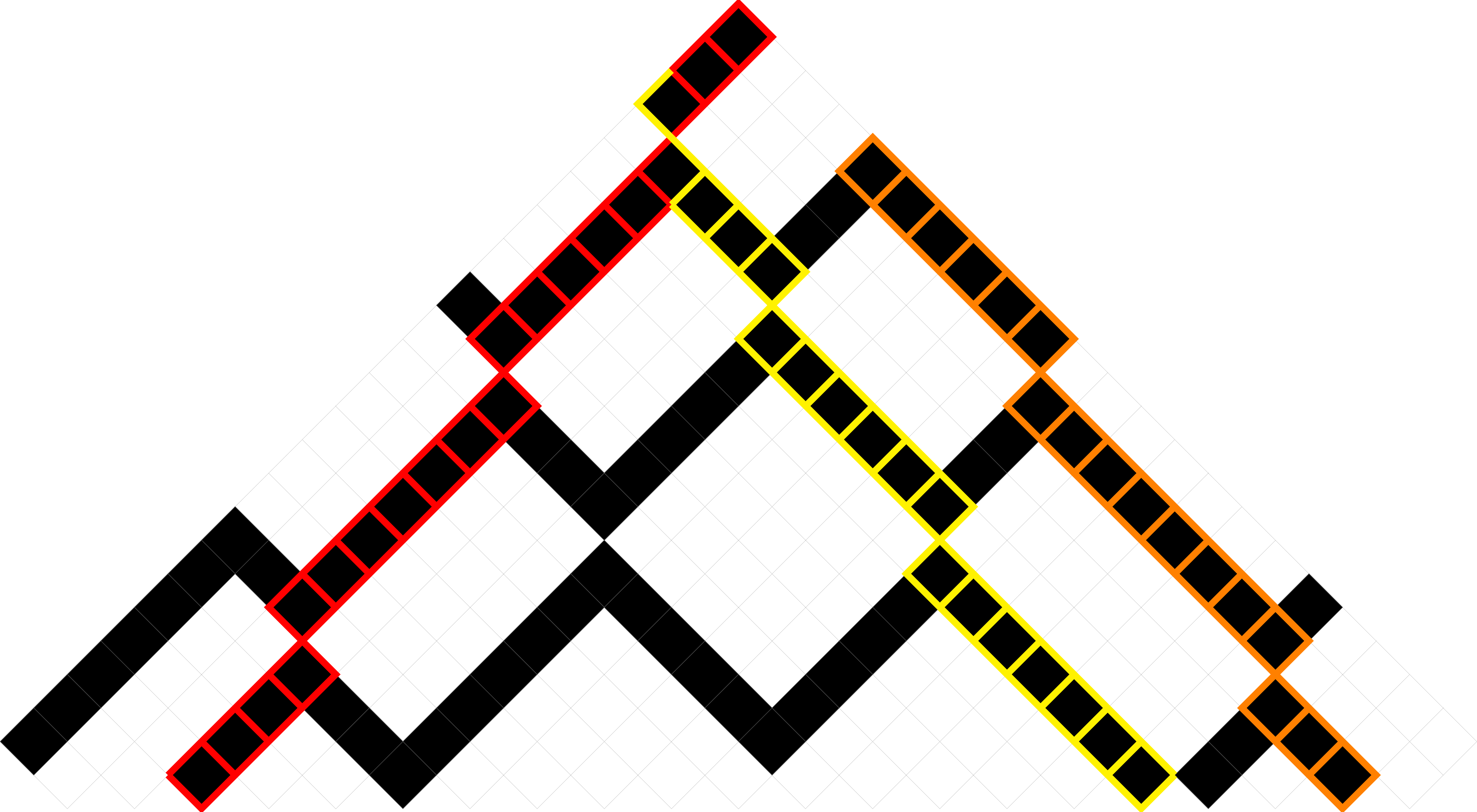}
    \caption{\label{fig:tMPAfig2} An example of an \emph{accessible} configuration
    (bottom-most saw), with the full history. The red-bordered sites denote the left-moving
    soliton that was at $t=0$ in the origin. Its position in the configuration is determined
    by the number of right-movers it encountered: it scattered $3$ times and each time
    it got displaced for one \emph{ray} (i.e.\ two sites) towards the middle of the light-cone.
    Solitons denoted by yellow and orange edges are examples of right-movers. The yellow
    soliton scattered with the red soliton, while the orange soliton is too far
    from the red one to have scattered with it.
    }
\end{figure}

By itself, this realization is not very deep and in principle the complexity of
time-evolution could still grow exponentially with time $t$.  However, in our
case, the dynamics of solitons is very simple, which significantly reduces the
complexity of the problem. In particular, for any soliton found in the
configuration $\ul{s}=(s_{-t},s_{-t+1},\ldots,s_{t})$ it is possible to
determine whether or not at time $t=0$ it was at the origin by counting the
number of scatterings it was involved in. An example of an accessible
configuration is shown in Fig.~\ref{fig:tMPAfig2}. It contains several
solitons, amongst which there is a left-mover (denoted by the red colour), that
was at time $t=0$ at the position $x=0$. To show that the red soliton passed
through the origin, one has to identify three right-moving solitons it
scattered with, which are all contained in the part of the configuration to the
right of it. However, one should be careful not to count too many right movers,
as e.g.\ the orange soliton never scattered with the red one, which is achieved
by keeping track of the \emph{left}-movers positioned to the \emph{right} of
the red soliton.  

To make this simple idea more precise, we start by splitting the coefficient $c_{\ul{s}}(t)$
into the sum of two terms $c_{\ul{s}}^{\mathrm{L}}(t)$ and
$c_{\ul{s}}^{\mathrm{R}}(t)$, which give $1$ when the soliton from the origin is moving in
the left and the right direction respectively. As we will argue later, both contributions can be
efficiently expressed in terms of products of matrices
\begin{eqnarray}\label{eq:tMPAdef}
    \fl
    c_{\ul{s}}(t) =
    \underbrace{\!\!
    \mel{\lL(t)}{\lV_{s_{-t}}\lW_{s_{-t+1}}\lV_{s_{-t+2}}\cdots \lV_{s_{t}}}{\lR}\!}
    _{\begin{matrix}c_{\ul{s}}^{\mathrm{L}}(t)\end{matrix}}
    +
    \underbrace{\!\!
    \mel{\rL}{\rV_{s_{-t}}\rW_{s_{-t+1}}\rV_{s_{-t+2}}\cdots \rV_{s_{t}}}{\rR(t)}\!}
    _{\begin{matrix} c_{\ul{s}}^{\mathrm{R}}(t)\end{matrix}}\, ,
\end{eqnarray}
where $\lrV_s,\lrW_s\in\End(\mathcal{V})$, $s\in\{0,1\}$ are linear operators over
the infinite dimensional auxiliary Hilbert space $\mathcal{V}$,
\begin{eqnarray}
    \mathcal{V}=\lspan\big\{
        \ket{c,w,n,a} \mid c,w\in\mathbb{N}_0,\ n\in\{0,1,2\},\ a\in\{0,1\}
    \big\},
\end{eqnarray}
and $\ket{\lL(t)}$, $\ket{\lR}$, $\ket{\rR(t)}$, $\ket{\rR}$ boundary vectors 
from the auxiliary space. Since the boundary vectors change with time $t$, we refer
to the expression~\eqref{eq:tMPAdef} as the \emph{time-dependent matrix-product ansatz}
(tMPA). To compactly express the matrices, we define the ladder operators $\vec{c}^{+/-}$,
$\vec{w}^{+/-}$ that change the value of $c$ and $w$ by one,
\begin{eqnarray}\label{eq:defLadder}
    \eqalign{
        \vec{c}^{+}=\sum_{c,w,n,a}\ketbra{c+1,w,n,a}{c,w,n,a},\qquad
        &\vec{c}^{-}=\vec{c}^{+\, T},\\
        \vec{w}^{+}=\sum_{c,w,n,a}\ketbra{c,w+1,n,a}{c,w,n,a},\qquad
        &\vec{w}^{-}=\vec{w}^{+\, T},
    }
\end{eqnarray}
and the projectors
\begin{eqnarray}\label{eq:defProj}\fl
    \eqalign{
        \vec{c}_{c_1 c_2}=\sum_{w,n,a}\ketbra{c_1,w,n,a}{c_2,w,n,a},\qquad
        &\vec{w}_{w_1 w_2}=\sum_{c,n,a}\ketbra{c,w_1,n,a}{c,w_2,n,a},\\
        \vec{n}_{n_1 n_2}=\sum_{c,w,a}\ketbra{c,w,n_1,a}{c,w,n_2,a},\qquad
        &\vec{a}_{a_1 a_2}=\sum_{c,w,n}\ketbra{c,w,n,a_1}{c,w,n,a_2}.
    }
\end{eqnarray}
We will elaborate on the physical interpretation of the auxiliary space in the
next subsection. The operators $\lV_s$, $\lW_s$ can be expressed as $3\times
3$ matrices acting on the space spanned by $\{\ket{n}\}_{n=0}^2$ as
\begin{eqnarray}\label{eq:tMPAleftmats}
    \eqalign{
    \lV_0\!=\!\!
    \scalemath{0.9}{\begin{bmatrix}
        1 & 0 & 0 \\ 
        \vec{a}_{00}+\vec{a}_{01} \vec{c}^+ 
        + \vec{a}_{11}\vec{c}^{+}\vec{w}^{+}
        & 0 & 0 \\ 1 & 0 & 0
    \end{bmatrix}
    +\vec{a}_{11}\vec{w}_{00}
    \begin{bmatrix}
        0&0&0\\
        1&0&0\\
        0&0&0
    \end{bmatrix}}
    ,\qquad
    \\
    \lV_1\!=\!\!
    \scalemath{0.9}{\begin{bmatrix}
        0 & 1 & 0 \\
        0 & 0 & \vec{a}_{00} + \vec{a}_{01} + \vec{a}_{11} \vec{w}^{+}\\
        0 & 0 & \vec{a}_{00} + \vec{a}_{01} + \vec{a}_{11} \vec{w}^{+}
    \end{bmatrix}+
    \vec{a}_{11}\vec{w}_{00}
    \begin{bmatrix}
        0 & 0 & 0 \\
        0 & 0 & 1 \\
        0 & 0 & 1
    \end{bmatrix}},\\
    \lW_0\!=\!\!
    \scalemath{0.9}{
    \begin{bmatrix}
        \vec{a}_{00} \vec{c}^{-}\vec{w}^{+} + \vec{a}_{11}\vec{w}^{+} & 0 & 0\\
        \vec{a}_{00} \vec{c}^{-}\vec{w}^{+} + \vec{a}_{01} \vec{w}^{+}
        +\vec{a}_{11} \vec{c}^{+}\left.\vec{w}^{+}\right.^2 & 0 & 0\\
        \vec{a}_{00} \vec{c}^{-}\vec{w}^{+} + \vec{a}_{11}\vec{w}^{+} & 0 & 0
    \end{bmatrix}+
    \vec{a}_{11}
    \begin{bmatrix}
        \vec{w}_{00} & 0 & 0\\
        \vec{c}^{+}\vec{w}_{10}+\vec{w}_{00}& 0 & 0\\
        \vec{w}_{00} & 0 & 0
    \end{bmatrix}},\\
    \lW_1\!=\!\!
    \scalemath{0.9}{
        \begin{bmatrix}
            0&\vec{a}_{00}\vec{c}^{-}\vec{w}^{+} + \vec{a}_{11}\vec{w}^{+} & 0 \\
            0&0&\vec{a}_{00}\vec{c}^{-}\vec{w}^{+}+\vec{a}_{11}\vec{c}^{+}\vec{w}^{+}\\
            0&0&\vec{a}_{00}\vec{c}^{-}\vec{w}^{+}+\vec{a}_{11}\vec{c}^{+}\vec{w}^{+}
        \end{bmatrix}+
        \vec{a}_{11}\vec{w}_{00}
        \begin{bmatrix}
            0&1 & 0 \\
            0&0&1\\
            0&0&1
        \end{bmatrix}},
    }
\end{eqnarray}
while the time-dependent boundary vectors are given by
\begin{eqnarray}\label{eq:tMPAleftvecs}
    \bra{\lL(t)}=\bra{0,t,0,0},\qquad \ket{\lR}=\sum_{n=0}^2\ket{0,0,n,1}.
\end{eqnarray}
The contribution $c_{\ul{s}}^{\mathrm{R}}(t)$ can be obtained from $c_{\ul{s}}^{\mathrm{L}}(t)$
by taking a transpose, and adding additional boundary terms (see the discussion at the end
of this section for details),
\begin{eqnarray}\label{eq:tMPArightmats}\fl
    \eqalign{
        \left.\rV_0\right.^T=
            \lV_0-\vec{c}_{10}
            \big(\vec{a}_{01}\vec{w}_{11}
            +\vec{a}_{11}(\vec{w}_{10}+\vec{w}_{21}) \big)
        \scalemath{0.9}{
            \begin{bmatrix}
                0 & 0 & 0 \\
                1 & 0 & 0 \\
                0 & 0 & 0
            \end{bmatrix}
        },\quad
        \left.\rV_1\right.^T=\lV_1,\\
        \left.\rW_0\right.^T=\lW_0-\vec{a}_{11}\vec{c}_{10}(\vec{w}_{10}+\vec{w}_{20})
        \scalemath{0.9}{
            \begin{bmatrix}
                0&0&0\\
                1&0&0\\
                0&0&0
            \end{bmatrix}
        },\quad
        \bra{\rL}=\bra{\lR}\!+\!\bra{0,1,0,1}\!+\!\bra{0,1,2,1},\\
        \left.\rW_1\right.^T=\lW_1-\vec{a}_{11}\vec{c}_{10}(\vec{w}_{10}+\vec{w}_{21})
        \scalemath{0.9}{
            \begin{bmatrix}
                0&0&0\\
                0&0&1\\
                0&0&1
            \end{bmatrix}
        },\quad
        \ket{\rR(t)}=\ket{\lL(t+1)}.
    }
\end{eqnarray}
The structure of these $3\times 3$ matrices is very similar to the
representation of the cubic algebra introduced in Section~\ref{sec:NESS_mpa}.
Indeed, operators $\lV_s$, $\lW_s$, $\left.\rV_s\right.^T$,
$\left.\rW_s\right.^T$ can be understood as generalisation of $W^{(\prime)}_s$
(cf.~\eqref{eq:matWWp}), where the entries are no longer scalars, but rather
more general operators acting on an infinite-dimensional vector space.

Before discussing the construction of tMPAs for $c_{\ul{s}}^{\mathrm{L}}(t)$
and $c_{\ul{s}}^{\mathrm{R}}(t)$, let us remark that even though the operators
$\lrV_s$, $\lrW_s$ are acting on an infinite-dimensional Hilbert space, we can
for every finite $t$ replace them by finite matrices, with the dimension that
scales as $\mathcal{O}(t^2)$. In particular, the possible values of $c$ and $w$
in definitions of ladder operators~\eqref{eq:defLadder} and
projectors~\eqref{eq:defProj} can be restricted to $0\le c,w\le t+1$. Note
that this is an exceptional property, since typically one expects the
complexity to grow exponentially with time.

\subsubsection{The contribution of left-movers}
We start the construction of the matrix-product
representation~\eqref{eq:tMPAdef} by discussing the first term
$c_{\ul{s}}^{\mathrm{L}}(t)$. Afterwards we will show how to adapt the result
to obtain also the contribution from the right-movers.

Let us consider a configuration
$(s_{-t},s_{-t+1},\ldots,s_{t})=\ul{s}\in\mathbb{Z}_2^{2t+1}$.  To figure out
whether or not $c_{\ul{s}}^{\mathrm{L}}(t)=1$, we imagine starting at the
left-most site of the configuration and moving towards the right, reading the
configuration $\ul{s}$ site by site. Whenever we encounter a left-mover, we
designate it the \emph{test} soliton and, aiming to determine whether or not it
originated from the centre, we start counting the solitons on its right. To
encode the soliton counting procedure, we introduce four auxiliary degrees of
freedom, $\ket{c,w,n,a}$.
\begin{enumerate}[label=(\roman*)]
    \item The \emph{activation bit}, $a\in\{0,1\}$, tells us whether we are on the
        left ($a=0$) or the right ($a=1$) side of the test soliton. If the activation
        bit is turned off, the state splits into two parts whenever we encounter a left mover.
        The first part corresponds to $a=0$, describing the situation in which the left mover
        is not the test soliton, while the second part represents the opposite case, with
        $a=1$. In any other situation the activation bit remains unchanged.
    \item The \emph{collision counter}, $c\ge0$, represents the number of scatterings
        the test soliton had to undergo if it passed through the origin. As long as
        $a=0$, the collision counter increases by $1$ every two sites. If $a=1$, the collision
        counter decreases by $1$ whenever a right-mover that scattered with the test soliton
        is encountered. When we reach the right edge of the configuration, the value $c=0$
        tells us that the test soliton passed through the origin, and $c\neq 0$
        implies the opposite.
    \item The \emph{scattering width}, $w\ge 0$, keeps track of the number of scatterings
        of the right-movers encountered after the test soliton. At the left edge the width
        $w$ is equal to the time-step $t$, and every two sites it decreases by $1$. Additionally,
        the width changes as $w\to w-1$ whenever a \emph{left} mover on the right side of
        the test soliton is encountered. All the right-moving solitons that we meet after
        $w$ drops to $0$ could not have scattered with the test soliton.
    \item The \emph{occupation counter}, $n\in\{0,1,2\}$, provides additional information
        about the particle content needed to appropriately change $w$ and $c$. Explicitly,
        $n=0$ if the current site is empty, $n=1$ if the site is full and the left neighbour
        is empty, and $n=2$ if the site and the left neighbour are both occupied.
\end{enumerate}
At the left edge, the collision counter should be $0$, and the scattering width is set
to be equal to the time-step $t$. Furthermore, we start with the activation bit equal to $0$
(since we have not yet met any test soliton), and the particle content outside of the light-cone
does not influence the soliton counting, therefore we can without loss of generality assume 
$n=0$. This implies the following form of the left boundary vector, 
\begin{eqnarray}
    \bra{\lL(t)}=\bra{0,t,0,0}.
\end{eqnarray}
The right boundary vector should have nonzero overlap with vectors that correspond to a test
soliton that passed through the origin, which is given by $c=w=0$ and $a=1$, while $n$ can
be arbitrary,
\begin{eqnarray}
    \ket{\lR}=\sum_{n=0}^2\ket{0,0,n,1}.
\end{eqnarray}
To construct the matrices that correspond to the procedure summarized above, we consider
$3$ regimes.

\paragraph{Left side of the test soliton.}
Before the test soliton is encountered, the scattering width decreases by $1$ every
two sites and at the same time the collision counter should increase, while $n$ should
be appropriately adjusted to keep track of the consecutive full sites. The left action
of the matrices on the sector $a=0$ is therefore,
\begin{eqnarray}\label{eq:tMPAconst1}
    \eqalign{
    \bra{c,w,n,0}\lV_s \vec{a}_{00}=\bra{c,w, s\cdot \min\{n+1,2\},0},\\
    \bra{c,w,n,0}\lW_s \vec{a}_{00}=\bra{c+1,w-1, s\cdot \min\{n+1,2\},0}.
}
\end{eqnarray}

\paragraph{Encountering the test soliton.} Whenever a left-mover is encountered while
$a=0$, an additional vector with $a=1$ is created. There are $4$ configurations corresponding
to this situation. The first two are simpler and represent a situation where a left-mover
is detected while moving uninterrupted,
\begin{eqnarray}
    \label{eq:simpleLeftConf}
    \begin{tikzpicture}
        [scale=0.3,baseline={([yshift=-0.6ex]current bounding box.center)}]
        \rectangle{-1}{1}{0};
        \rectangle{0}{0}{1};
        \rectangle{1}{1}{1};
        \begin{scope}[shift={(6,0)}]
            \rectangle{-1}{1}{1};
            \rectangle{0}{0}{1};
            \rectangle{1}{1}{1};
        \end{scope}
    \end{tikzpicture},
\end{eqnarray}
which can be summarized with the following matrix elements
\begin{eqnarray}\label{eq:tMPAconst2}
    \bra{c,w,1,0}\lV_1 \vec{a}_{11} =
    \bra{c,w,2,0}\lV_1 \vec{a}_{11} = \bra{c,w,2,1}.
\end{eqnarray}
The other two configurations describe the soliton encountered while it is scattering,
\begin{eqnarray}
    \label{eq:scatteringSolsConf}
    \begin{tikzpicture}
        [scale=0.3,baseline={([yshift=-0.6ex]current bounding box.center)}]
        \halfdashedrectangle{0}{2};
        \halfdashedrectangle{1}{3};
        \rectangle{-1}{1}{0};
        \rectangle{0}{0}{1};
        \rectangle{1}{1}{0};
        \begin{scope}[shift={(6,0)}]
            \halfdashedrectangle{1}{2};
            \rectangle{-1}{0}{0};
            \rectangle{0}{1}{1};
            \rectangle{1}{0}{0};
        \end{scope}
    \end{tikzpicture},
\end{eqnarray}
where the grey squares denote the path of the soliton in the previous time-steps.
The corresponding matrix elements are
\begin{eqnarray}\label{eq:tMPAconst3}
\fl 
    \bra{c,w,1,0}\lV_0 \vec{a}_{11} = \bra{c-1,w,0,1},\qquad
    \bra{c,w,1,0}\lW_0 \vec{a}_{11} = \bra{c,w-1,0,1},
\end{eqnarray}
where in the first case we note that one scattering already occurred (hence $c-1$),
while in the second case the decrease of $w$ is just due to moving to the right.
This exhausts all the subconfigurations in which the bit $a$ can flip (from the left)
from $0$ to $1$.

\paragraph{Right side of the test soliton.} Let us first assume $w>0$.
After the test soliton has been chosen, the changes of $c$ and $w$ 
are summarized as:
\begin{enumerate}[label=(\roman*)]
    \item $c\to c-1$ if there is a right-mover,
    \item $w\to w-1$ if there is a left-mover
    \item $w\to w-1$ every two sites.
\end{enumerate}
Explicitly, there are two possible configurations of an uninterrupted
right-mover appearing,
\begin{eqnarray}
    \label{eq:simpleRightConf}
    \begin{tikzpicture}
        [scale=0.3,baseline={([yshift=-0.6ex]current bounding box.center)}]
        \rectangle{-1}{0}{0};
        \rectangle{0}{1}{1};
        \rectangle{1}{0}{1};
        \begin{scope}[shift={(6,0)}]
            \rectangle{-1}{0}{1};
            \rectangle{0}{1}{1};
            \rectangle{1}{0}{1};
        \end{scope}
    \end{tikzpicture},
\end{eqnarray}
which are described by,
\begin{eqnarray}\label{eq:tMPAconst4}
    \bra{c,w,1,1}\lW_1=\bra{c,w,2,1}\lW_1=\bra{c-1,w-1,2,1},
\end{eqnarray}
while the configurations in~\eqref{eq:simpleLeftConf} imply the following,
\begin{eqnarray}\label{eq:tMPAconst5}
    \bra{c,w,1,1}\lV_{1}=\bra{c,w,2,1}\lV_{1}=\bra{c,w-1,2,1}.
\end{eqnarray}
The configurations in~\eqref{eq:scatteringSolsConf} describe the two scattering
solitons, therefore the corresponding matrix elements contain the decrease of 
both $c$ and $w$,
\begin{eqnarray}\label{eq:tMPAconst6}
    \eqalign{
    \bra{c,w,1,1}\lV_0=\bra{c-1,w-1,0,1},\\
    \bra{c,w,1,1}\lW_0=
    (1-\delta_{w,0})\bra{c-1,w-2,0,1}+
    \delta_{w,0}\bra{c-1,0,0,1}.}
\end{eqnarray}
All the remaining configurations do not contain any newly detected solitons,
\begin{eqnarray}
    \begin{tikzpicture}
        [scale=0.3,baseline={([yshift=-0.6ex]current bounding box.center)}]
        \begin{scope}[shift={(16,0)}]
            \rectangle{0}{1}{0};
            \rectangle{1}{0}{0};
        \end{scope}
        \begin{scope}[shift={(21,0)}]
            \rectangle{0}{1}{0};
            \rectangle{1}{0}{1};
        \end{scope}
        \begin{scope}[shift={(27,0)}]
            \rectangle{-1}{0}{1};
            \rectangle{0}{1}{1};
            \rectangle{1}{0}{0};
        \end{scope}
        \begin{scope}[shift={(0,0)}]
            \rectangle{0}{0}{0};
            \rectangle{1}{1}{0};
        \end{scope}
        \begin{scope}[shift={(5,0)}]
            \rectangle{0}{0}{0};
            \rectangle{1}{1}{1};
        \end{scope}
        \begin{scope}[shift={(11,0)}]
            \rectangle{-1}{1}{1};
            \rectangle{0}{0}{1};
            \rectangle{1}{1}{0};
        \end{scope}
    \end{tikzpicture},
\end{eqnarray}
therefore $c$ remains constant and $w$ decreases only due to moving two sites to
the right,
\begin{eqnarray}\label{eq:tMPAconst7}
    \eqalign{
    \bra{c,w,0,1}\lV_s = \bra{c,w,s,1},\qquad
    & \bra{c,w,2,1}\lV_0 = \bra{c,w,0,1},\\
    \bra{c,w,0,1}\lW_s = \bra{c,w-1,s,1},\qquad
    & \bra{c,w,2,1}\lW_0 = \bra{c,w-1,0,1}.
}
\end{eqnarray}
The right-moving solitons encountered after the width drops to $0$ could not scatter
with the test soliton, because they were too far. Therefore $c$ should stop decreasing,
\begin{eqnarray}\label{eq:tMPAconst8}
    \bra{c,0,n,1}\lV_s=
    \bra{c,0,n,1}\lW_s=\bra{c,0,s\cdot\max\{2,n+1\},1}.
\end{eqnarray}

Combining the matrix elements from
Equations~(\ref{eq:tMPAconst1}--\ref{eq:tMPAconst8}), we finally recover the
matrices given in~\eqref{eq:tMPAleftmats}. This completes the construction of
the tMPA corresponding to $c_{\ul{s}}^{\mathrm{L}}(t)$.

\subsubsection{The contribution of right-movers}
The tMPA for $c_{\ul{s}}^{\mathrm{R}}(t)$ can be derived in a similar fashion,
by simply reversing the directions of all solitons, which is achieved by
exchanging the roles of the boundary vectors and by transposing the tMPA
matrices. Additionally, we have to exclude all the configurations with a
central right-mover that were already captured by $c_{\ul{s}}^{\mathrm{L}}(t)$.
These are exactly the configurations  where at time $t=0$ the soliton in the
origin is in the process of scattering, i.e.\ their trajectory starts with one
of the following two diagrams,
\begin{eqnarray}
    \begin{tikzpicture}[scale=0.3,baseline={([yshift=-0.6ex]current bounding box.center)}]
        \begin{scope}[shift={(0,0)}]
            \rectangle{-1}{0}{0};
            \rectangle{0}{1}{1};
            \rectangle{1}{0}{0};
            \rectangle{0}{-1}{1};
        \end{scope}
        \begin{scope}[shift={(6,0)}]
            \rectangle{-1}{0}{1};
            \rectangle{0}{1}{1};
            \rectangle{1}{0}{1};
            \rectangle{0}{-1}{0};
        \end{scope}
    \end{tikzpicture}.
\end{eqnarray}
In the previous case, these would correspond to the configurations with another
scattering occurring exactly when $w=0$. Therefore to exclude them, we have to
increase $w$ in the right boundary vector by $1$, while at the same time modify the
left boundary vector to allow for $w$ not dropping to $0$,
\begin{eqnarray}
    \ket{\rR(t)}=\ket{\lL(t+1)},\qquad
    \bra{\rL}=\bra{\lR}+\bra{0,1,0,1}+\bra{0,1,2,1}.
\end{eqnarray}
Additionally, it does not suffice to map $\left.\lV_s\right.^T\to\rV_s$,
$\left.\lW_s\right.^T\to\rW_s$, but we have to further modify them so that the following
matrix elements are $0$,
\begin{eqnarray}
    \eqalign{
    \vec{a}_{11}\rV_0\ket{1,1,1,0}=\rV_0\ket{1,1,1,1}=\rV_0\ket{1,2,1,1}=0,\\
    \rW_0\ket{1,1,1,1}=\rW_0\ket{1,2,1,1}=0,\\
    \rW_1\ket{1,1,1,1}=\rW_1\ket{1,2,1,1}=\rW_1\ket{1,1,2,1}=\rW_1\ket{1,2,2,1}=0,
}
\end{eqnarray}
which gives exactly the form summarized in~\eqref{eq:tMPArightmats}.

\subsection{Physical applications}
The tMPA provides means to study transport in the model. In this subsection, we provide 
two examples of physically relevant quantities that can be exactly obtained with our solution:
density profile after a bipartite quench and a dynamical density-density correlation
function. We finish the subsection by suggesting a formal extension of the tMPA to richer
stationary states.

\subsubsection{Inhomogeneous quench}
Let us consider a particular bipartitioning protocol, where at time $t=0$, a half-infinite
chain prepared in the maximum-entropy state is joined together with a half-infinite empty
chain, so that the initial state is given by,
\begin{eqnarray}
    \vec{p}=\frac{1}{2^{n/2}}
    \begin{bmatrix}
        1 \\ 1 
    \end{bmatrix}^{\otimes n/2}\!
    \otimes
    \begin{bmatrix}
        1 \\ 0
    \end{bmatrix}^{\otimes n/2}.
\end{eqnarray}
We wish to obtain the time-dependent profile of the particle density
$\tilde{\rho}(x,t)$, which is given by the expectation value of $\rho_x$ in
the time-evolved inhomogeneous state,
\begin{eqnarray}\label{eq:tMPAShiftStaggering2}
\tilde{\rho}(x,t)=\ev{\rho_x}_{\vec{p}(t)}=\ev{\rho_x(-t)}_{\vec{p}}
    =\begin{cases}
        \ev{\rho(t)}_{\eta^{-x}\vec{p}},& x+t\equiv 1,\\
        \ev{\rho(t-1)}_{\eta^{-x}\vec{p}},& x+t\equiv 0.
    \end{cases}
\end{eqnarray}
Note that the negative sign in $\rho_x(-t)$ together with the staggering forces
a different convention with respect to~\eqref{eq:tMPAShiftStaggering1}.

The locality of time-evolution immediately implies that outside of the light-cone the
density profile matches the appropriate boundary values,
i.e.\ $\tilde{\rho}(x<-t,t)=\frac{1}{2}$ and $\tilde{\rho}(x>t,t)=0$.
For intermediate values of $x$, the expectation value is by definition
(see Eq.~\eqref{eq:defMultiCorrsOmega} and the discussion around it)
\begin{eqnarray}\fl
    \eqalign{
        r(\tfrac{t-x+1}{2},t)\!= \!\ev{\rho(t)}_{\eta^{-x}\vec{p}} \!\!&=  \!
        \sum_{\ul{s}} c_{\ul{s}}(t)\,
        2^{-(t-x+1)}
        \Bigg[
            \vec{\omega}^{T\,\otimes 2t+1} \cdot \ketbra{\ul{s}} \cdot
            \begin{bmatrix}
                1 \\ 1
            \end{bmatrix}^{\otimes t-x+1}
            \mkern-18mu
            \otimes
            \begin{bmatrix}
                1 \\ 0
            \end{bmatrix}^{\otimes t+x}\Bigg]\\
        &=\smashoperator{\sum_{s_{-t}s_{-t+1}\ldots s_{-x}}}
        2^{-(t-x+1)} c_{s_{-t}s_{-t+1}\ldots s_{-x} 0 \ldots 0}(t),
    }
\end{eqnarray}
where $\vec{\omega}=[1 \  1]^T$
denotes a (unnormalised) one-site maximum-entropy state.
The expectation value is (assuming odd $x+t$, cf.~\eqref{eq:tMPAShiftStaggering2})
conveniently expressed in terms of the tMPA as
\begin{eqnarray}\fl
    \eqalign{
        r(m,t)=2^{-2m} &\Big(
        \mel{\lL(t)}{\big((\lV_0+\lV_1)(\lW_0+\lW_1)\big)^m\lV_0\big(\lW_0\lV_0\big)^{t-m}}{\lR}\\
        &+
        \mel{\rL}{\big((\rV_0+\rV_1)(\rW_0+\rW_1)\big)^m \rV_0 \big(\rW_0\rV_0\big)^{t-m}}{\rR(t)}
        \Big).
    }
\end{eqnarray}

The matrices $\lrV_s$, $\lrW_s$ are infinitely dimensional, therefore it is not
immediately clear whether it is possible to evaluate this expression. However, their
sparse structure enables us to find exact form of certain simple matrix
elements (see~\cite{klobas2019timedependent} for the details).  In particular,
one can prove that $r(m,t)$ takes the following form,
\begin{eqnarray}\fl\label{eq:inhQprofile}
    \eqalign{
        r(m,t) &= 
        \frac{3}{8}\delta_{m,t} + \frac{1}{8}(\delta_{m,1}\delta_{t,1}+\delta_{m,2}\delta_{t,2})
        +\frac{1}{4}\delta_{m,3}\delta_{t,4}
        +\theta_{2m-t-3}\,2^{t-2m-1}\binom{2m-t-3}{t-m-1}\\
        &+\frac{1}{8}\sum_{y=t-m-2}^{2m-t-3}
        2^{-(2m-t-3)}\binom{2m-t-3}{y}
        +\frac{3}{16}\sum_{y=0}^{2m-t-3}2^{-y}\binom{y}{t-m-1}\\
        &+\frac{1}{2}\sum_{y=0}^{t-m-1}2^{-(m-1-y)}\binom{m-1-y}{y},
        }
\end{eqnarray}
where $\theta_x$ denotes the discrete Heaviside function (i.e.\ $\theta_{x\ge0}=1$ and
$\theta_{x<0}=0$).

\begin{figure}
    \centering
    \includegraphics[width=\textwidth]{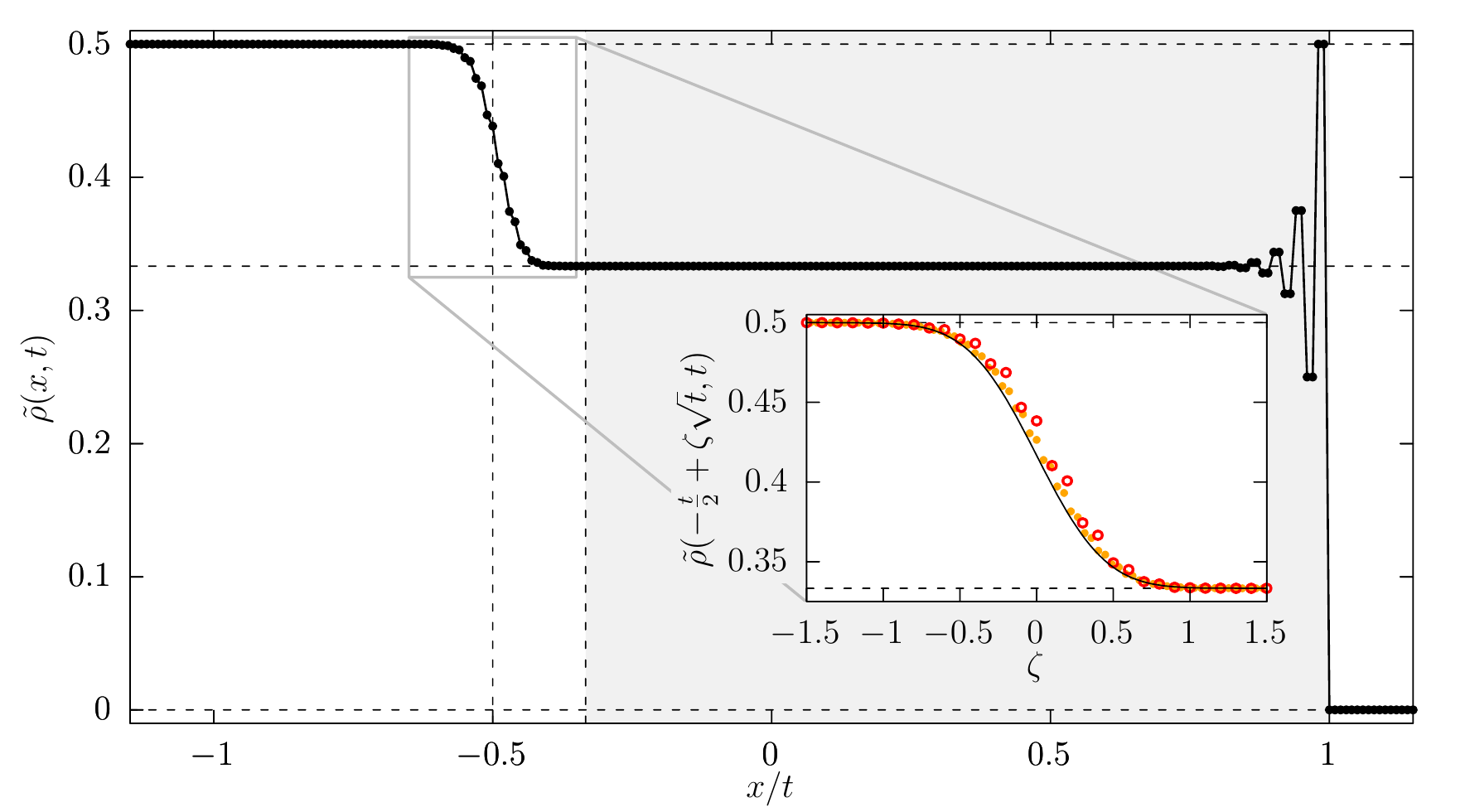}
    \caption{\label{fig:tMPAfig4} Profile after the inhomogeneous quench for $t=100$. 
    The grey background highlights the section of the lightcone $x/t\ge -1/3$,
    in which the profile consists only of exponentially suppressed oscillations around
    $\frac{1}{3}$ (cf.~\eqref{eq:tMPAinhQdensity1}). The inset shows the diffusive scaling of
    the profile centred around $x/t=-1/2$, with red circles and orange dots
    corresponding to $t=100$ and $t=500$ respectively. The solid line represents the asymptotic
    profile~\eqref{eq:tMPAinhQdensity2}.
    }
\end{figure}

If $m<(2t+1)/3$, only the last term remains and $r(m,t)$ greatly simplifies
to give the following density profile for $x>-(t+1)/3$,
\begin{eqnarray}\label{eq:tMPAinhQdensity1}
    \tilde{\rho}(x>-\tfrac{t+2}{3},t)=
    \frac{1}{3}
    \left(1-\left(-\frac{1}{2}\right)^{\lfloor\tfrac{t+x+1}{2}\rfloor}\right).
\end{eqnarray}
This profile consists of a front moving ballistically to the right with
velocity $1$ and exponentially suppressed oscillations behind it, as can also
be seen in Fig.~\ref{fig:tMPAfig4}. Additionally, it exhibits an intuitive
physical picture: in this section of the light-cone, all the solitons are
moving to the \emph{right} with the dressed velocity $1$, since the right half
of the lattice was empty at the beginning and there are no left-movers to slow
them down. To determine the left-edge of this section, we note that the
\emph{right-most} left-moving solitons are moving with the velocity $-1/3$
(cf.\ Fig.~\ref{fig:hydrofig1} and the discussion in Section~\ref{sect:hydro}),
therefore the full section of the lattice between $x=-t/3$ and $x=t$ is
populated only by right-movers~\cite{klobas2019timedependent}.  

As we increase $m$ (or equivalently, as we move towards the left edge of the
lightcone), we are no longer able to make further simplifications to the
expression~\eqref{eq:inhQprofile}. However, one can easily find an asymptotic
expression that describes $\tilde{\rho}(x,t)$ on large scale.  Using the
Stirling formula it is possible to show that the density profile is well
approximated by a shifted error function centred around $x/t=-1/2$, with the
width that scales as $\sqrt{t}$,
\begin{eqnarray}\label{eq:tMPAinhQdensity2}
    \lim_{t\to\infty}\tilde{\rho}\left(-\tfrac{t}{2}+\zeta\sqrt{t},t\right)=
    \frac{1}{12}(5-\erf(2\zeta)).
\end{eqnarray}

On the rays with the constant $\zeta=x/t$, the density profile matches the Euler scale
prediction~\eqref{eq:inhQpred}: the profile consists of two steps, with their
velocities equal to $-\tfrac{1}{2}$ and $1$. In the central part, the hydrodynamic
prediction is given by the expectation value of density in the state determined by
$(\vartheta_{+},\vartheta_{-})=(\frac{1}{2},0)$, and can be shown to be equal to
$\frac{1}{3}$ using the MPA representation of GGEs (see~\ref{app:MPSGibbs}). Furthermore,
the diffusive broadening around the left step is compatible with $D_{--}=\frac{1}{16}$,
which is consistent with the expression~\eqref{eq:defD} in the case $\vartheta_{+}=\frac{1}{2}$.
The right-step does not exhibit any diffusive scaling, since the term $D_{++}$ vanishes
for $\vartheta_{-}=0$. The oscillations around $\frac{1}{3}$ decay exponentially with
the distance from the step and are therefore not detectable on the hydrodynamic scale.

\subsubsection{Dynamical structure factor}
We proceed to the second example: the dynamic density-density correlation function,
$C(x,t)$, evaluated in the maximum-entropy state,
\begin{eqnarray}
    C(x,t)=
    \underbrace{\expval{\rho_0(t)\rho_x}_{\vec{p}_{\infty}}}_{
        \displaystyle 2^{2t+1} \tilde{C}(x,t)}
    -\underbrace{\expval{\rho}^2_{\vec{p}_{\infty}}}_{\displaystyle 2^{-2}},\qquad
    \vec{p}_{\infty}=2^{-n}
    \begin{bmatrix}
        1 \\ 1
    \end{bmatrix}^{\otimes n},
\end{eqnarray}
where we introduced $\tilde{C}(x,t)$ to denote the rescaled expectation value of 
the product of densities. Analogously to $r(m,t)$ defined before, $\tilde{C}(x,t)$
can be expressed as a simple sum of matrix products corresponding to the left and
right tMPAs,
\begin{eqnarray}\fl
    \eqalign{
        \tilde{C}(x,t)&=
        \mel{\lL(t)}{\left((\lV_0+\lV_1)(\lW_0+\lW_1)\right)^{\frac{t+x}{2}}
        \lV_1 \left((\lW_0+\lW_1)(\lV_0+\lV_1)\right)^{\frac{t-x}{2}}}{\lR}\\
        &+
\mel{\rL}{\left((\rV_0+\rV_1)(\rW_0+\rW_1)\right)^{\frac{t+x}{2}}
        \rV_1 \left((\rW_0+\rW_1)(\rV_0+\rV_1)\right)^{\frac{t-x}{2}}}{\rR(t)},
}
\end{eqnarray}
where we assumed $x+t\equiv 0\pmod{2}$. In the opposite case, we take the
advantage of the identity $\left.C(x,t)\right|_{x+t\equiv 1 \pmod{2}}=C(x,t-1)$,
and thus reduce it to the previous case.  It is again possible to evaluate
these sums, and we obtain the following expression~\cite{klobas2019timedependent},
\begin{eqnarray}
    C(x,t)=2^{-t-1}\sum_{m=0}^{\frac{t-\abs{x}-2}{2}}
    4^m \left(
    2\binom{t-2m-3}{m}-\binom{t-2m-2}{m}
    \right),
\end{eqnarray}
where for simplicity of notation we assume $\binom{n<0}{k}=0$ for any $k,n\in\mathbb{Z}$.

\begin{figure}
    \centering
    \includegraphics[width=\textwidth]{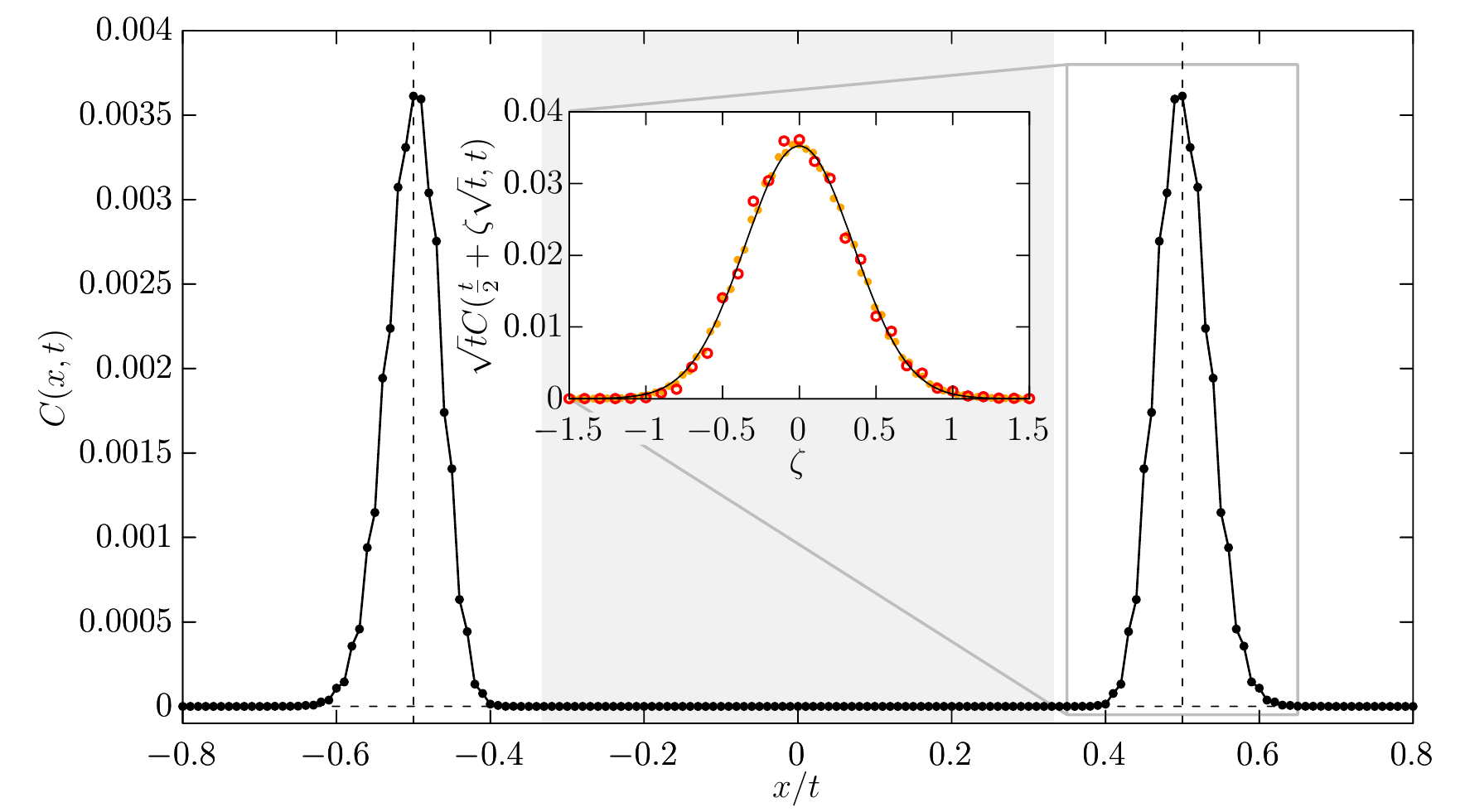}
    \caption{\label{fig:tMPAfig5} Dynamical correlation function $C(x,t)$ for $t=100$.
    The grey background highlights the section of the lightcone $\abs{x}/t \le 1/3$,
    in which the correlation function decays exponentially and loses the spatial dependence.
    The inset shows the diffusive scaling of the right peak, with red circles and orange dots
    corresponding to $t=100$ and $t=500$ respectively. The solid line represents the asymptotic
    profile~\eqref{eq:CxtAsymptoticProfile}.
    }
\end{figure}

The correlation profile can be again split into two qualitatively different parts.
If $\abs{x}<(t+1)/3$, the spatial dependence disappears (apart from the staggering) and
the correlations reduce to
\begin{eqnarray}\fl
    \left.C(x,t)\right|_{\abs{x}<\frac{t+1}{3}}=
    2^{-2t-3}\left(
    \left(1+\frac{\mathrm{i}}{\sqrt{7}}\right)\left(-1-\mathrm{i}\sqrt{7}\right)^t
    +
    \left(1-\frac{\mathrm{i}}{\sqrt{7}}\right)\left(-1+\mathrm{i}\sqrt{7}\right)^t
    \right).
\end{eqnarray}
In this regime, correlations decay exponentially with time, $\abs{C(0,t)}\sim
2^{-t/2}$, and the above expression exactly matches the infinite temperature
result obtained from multi-time correlation functions in Section~\ref{sec:TS}.
Furthermore, in this regime $C(x,t)$ is homogeneous in space and exponentially
suppressed also when one considers the underlying stationary states to belong
to the richer class of two-species GGEs considered in the previous sections,
as a consequence of the \emph{minimal} velocity of excitations being equal to $1/3$
(cf.\ \eqref{eq:admVels}). This can be proven by generalising the circuit
approach outlined in Section~\ref{sec:TS}~\cite{klobas2021exact,klobas2021exactII}.

If $\abs{x}/t>1/3$, we are left with a binomial sum that cannot be further simplified,
therefore it again makes sense to find the asymptotic shape of the correlation profile.
As is shown in Fig.~\ref{fig:tMPAfig5}, the profile consists of two ballistically
moving peaks  with velocities $\pm\frac{1}{2}$ that spread diffusively. Indeed, one can
straightforwardly show
\begin{eqnarray}\label{eq:CxtAsymptoticProfile}
    \lim_{t\to\infty}
    C\left( \pm\left(\tfrac{t}{2}+\zeta\sqrt{t}\right),t \right)
    =\frac{1}{16\sqrt{t \pi}}\mathrm{e}^{-4\zeta^2}.
\end{eqnarray}
The result nicely matches the hydrodynamics: the peaks move with dressed
velocities $v_{\pm}=\pm\frac{1}{2}$ and their diffusive broadening is governed
by $D_{\nu\nu}=\frac{1}{16}$, $\nu\in\{+,-\}$. Both quantities are compatible
with their predictions given by~\eqref{eq:hydroDressedVelocities}
and~\eqref{eq:defD} in the maximum-entropy state.

\subsubsection{Generalisation to generalised Gibbs states}
So far we used the tMPA to express objects either evaluated in the vacuum state with with only empty configuration
or in the simple maximum entropy state, where all configurations are equally likely. However,
the fact that the matrices $\lrV_s$, $\lrW_s$ have in the space spanned by
$\{\ket{n}\}_{n=0,1,2}$ similar block structure to $\WV_s$, allows us to simply expand the
tMPA to include the right probabilities by a simple element-wise multiplication of both
sets of matrices, and thus generalise the approach to the class of GGEs given by~\eqref{GGE}.
To make this point more precise, let us consider the dynamical correlation
function in a state $\vec{p}$ given by the pair of parameters $(\xi,\omega)$
(corresponding to $\mu_{+}=-\log\xi$, $\mu_{-}=-\log\omega$ in~\eqref{GGE}),
which is by definition expressed as
\begin{eqnarray}%\fl
    \expval{\rho_0(t)\rho_x}_{\vec{p}}=
    \smashoperator{\sum_{s_{-t},\ldots,s_t}}
    \delta_{s_x,1}
    \underbrace{c_{\ul{s}}(t)
    p\Big(\!\begin{tikzpicture}
        [scale=0.3,baseline={([yshift=-0.6ex]current bounding box.center)}]
        \textrectangle{0}{0.5}{$s_{-t}$};
        \textrectangle{1}{-0.5}{\scalebox{0.85}{$s_{\!-\!t\!+\!1}$}};
        \textrectangle{2}{0.5}{$\cdots$};
        \textrectangle{3}{-0.5}{};
        \textrectangle{4}{0.5}{};
        \textrectangle{5}{-0.5}{$\cdots$};
        \textrectangle{6}{0.5}{$s_{t}$};
    \end{tikzpicture}\!\Big)}_{c^{\ast}_{\ul{s}}(t)}
    =
    \smashoperator{\sum_{s_{-t},\ldots,s_t}}
    \delta_{s_x,1}
    c^{\ast}_{\ul{s}}(t),
\end{eqnarray}
where the state-dependent coefficient $c^{\ast}_{\ul{s}}(t)$ is defined as the product
of the tMPA coefficient $c_{\ul{s}}(t)$ and the probability of finite
configuration~\eqref{eq:asymptPropOdd}. Using the MPA formulation of these probabilities
outlined in~\ref{appsubsec:thermodynamicEVs}, the state-dependent coefficient can
be rewritten as
\begin{eqnarray}
    c^{\ast}_{s_{-t}s_{-t+1}\ldots s_t}(t)=
    c_{s_{-t}s_{-t+1}\ldots s_t}(t)
        \frac{\mel{l^{\prime}}{\V_{s_{-t}}\W_{s_{-t+1}}\cdots \V_{s_{t}}}{r}}
        {\lambda^{t}\mel{l^{\prime}}{\V_0+\V_1}{r}}.
\end{eqnarray}
Since both the matrices describing stationary probabilities and the matrices 
constituting the tMPA have very similar $3\times 3$ block structure
(cf.~\eqref{eq:tMPAleftmats}, \eqref{eq:tMPArightmats},
and~\eqref{eq:matWWp}), we can in analogy to~\eqref{eq:tMPAdef}
express $c^{\ast}_{\ul{s}}(t)$ as
a sum of the left and right time-dependent matrix products, by defining the new
operators $\left.\lrV_s\right.^{\ast}$, $\left.\lrW_s\right.^{\ast}$ as
\begin{eqnarray}\label{eq:extMPA1}
    \eqalign{
    \left.\lV_s\right.^{\ast}=\lV_s \odot \V_s,\qquad
    &\left.\lW_s\right.^{\ast}=\frac{1}{\lambda}\lW_s \odot \W_s,\\
    \left.\rV_s\right.^{\ast}= \rV_s \odot \left.\W_s\right.^{\!\!T}\!\!,
    &\left.\rW_s\right.^{\ast}=\frac{1}{\lambda}\rW_s \odot \left.\V_s\right.^T\!\!,},
\end{eqnarray}
and similarly,
\begin{eqnarray}\label{eq:extMPA2}
    \eqalign{
    \bra{\left.\lL\right.^{\ast}(t)}=\bra{\lL(t)} \odot \bra{l^{\prime}},\qquad
    &\bra{\left.\rL\right.^{\ast}}=\frac{\bra{\rL} \odot \bra{r}}
    {\mel{r}{\left.\V_0\right.^T+\left.\V_1\right.^T}{l^{\prime}}},\\
    \ket{\left.\rR\right.^{\ast}(t)}= 
    \ket{\rR(t)} \odot \ket{l^{\prime}},\qquad
    &\ket{\left.\lR\right.^{\ast}}=
    \frac{\ket{\lR}\odot \ket{r}}{\mel{l^{\prime}}{\V_0+\V_1}{r}}.
}
\end{eqnarray}
Here $\odot$ denotes the element-wise multiplication, i.e.\ $\big(
a_1 \ a_2 \ a_3\big)^T \odot \big( b_1 \ b_2 \ b_3\big)^T=
\big(a_1 b_1 \ a_2 b_2 \ a_3 b_3\big)^T$ (defined analogously
for $3\times 3$ matrices).
The generalised expressions~(\ref{eq:extMPA1},\ref{eq:extMPA2}) can be
understood simply as attaching a weight $\xi$ or $\omega$ to
each left or right-mover that we observe in the soliton-counting procedure.
When $\xi=\omega=1$ we recover the original tMPA matrices and
vectors~(\ref{eq:tMPAleftmats},\ref{eq:tMPAleftvecs},\ref{eq:tMPArightmats}).
Note that in the case of right-movers we have to use the transpose of the GGE
matrices, and additionally the role of parameters is swapped. The explicit form
of the new set of operators can be found in~\cite{klobas2020exactPhD}.

This construction in principle allows us to obtain $C(x,t)$ for the full class
of two-parameter GGEs (space and time translation invariant states) and, by appropriately choosing parameters, one can also
obtain the tMPA formulation of the time-dependent density profile after a bipartite quench from an arbitrary pair of
these GGE states. However, so far no attempt has been done to evaluate these expressions
and obtain explicit results. Whether or not this is feasible is still an open question.

\subsection{Quantum interpretations and operator spreading}
Even though we have been discussing RCA54 in the context of classical dynamics,
there is nothing preventing us from treating it in the quantum setting.
Indeed, since the one-time-step evolution operators $\Ue$ and $\Uo$ describe
deterministic dynamics, they are also a special case of unitaries,
$\left(\Uo\Ue\right)\left(\Uo\Ue\right)^{\dagger}=\one$. In this respect the
model represents one of the simplest interacting models, in which different
measures of operator spreading can be easily
studied~\cite{gopalakrishnan2018operator,gopalakrishnan2018hydrodynamics,
alba2019operator,alba2020diffusion}.  In particular, one such quantity is
\emph{operator space entanglement entropy} (OSEE), which measures the
complexity of simulability of quantum dynamics in the Heisenberg
picture~\cite{prosen2007operator,hartmann2009density,pizorn2009operator,muth2011dynamical}.
In the case of RCA54 the bound of the rate of
OSEE growth with time can be obtained exactly, by generalising the tMPA
introduced above to the time-evolution of \emph{quantum} (rather than
classical) observables~\cite{alba2019operator}.

Time-evolution of a quantum one-site observable $a$ can be split into $4$ contributions, each
one corresponding to a basis operator $\ketbra{s_1}{s_2}$, 
\begin{eqnarray}\fl
    a=\begin{bmatrix}
        a_{00} & a_{01}\\
        a_{10} & a_{11}
    \end{bmatrix},\quad
    \eqalign{
        a(t)=\U(t)^{\dagger} a \U(t) &= 
        a_{00} 
        \U(t)^{\dagger}\ketbra{0}{0}\U(t)+
        a_{01}\U(t)^{\dagger}\ketbra{0}{1}\U(t)\\
        &+
        a_{10}\U(t)^{\dagger}\ketbra{1}{0}\U(t)+
        a_{11}\U(t)^{\dagger}\ketbra{1}{1}\U(t).
    }
\end{eqnarray}
We immediately note that the diagonal elements,
\begin{eqnarray}
    \U(t)^{\dagger}\ketbra{1}{1}\U(t)=\rho(t),\qquad
    \U(t)^{\dagger}\ketbra{0}{0}\U(t)=\bar{\rho}(t)=\one-\rho(t),
\end{eqnarray}
are given by the tMPA introduced in the previous subsections, while one of the
off-diagonal terms can be expressed as the transpose of the other,
\begin{eqnarray}
    \bar{\sigma}(t)=\U(t)^{\dagger}\ketbra{1}{0}\U(t)=
    \left(\U(t)^{\dagger}\ketbra{0}{1}\U(t)\right)^{\dagger}
    =\sigma(t)^{\dagger}.
\end{eqnarray}
Therefore, the only remaining thing to do is to find an efficient matrix-product description
of $\sigma(t)=\U(t)^{\dagger}\ketbra{0}{1}\U(t)$.

\begin{figure}
  \centering
  \includegraphics[width=0.6\textwidth]{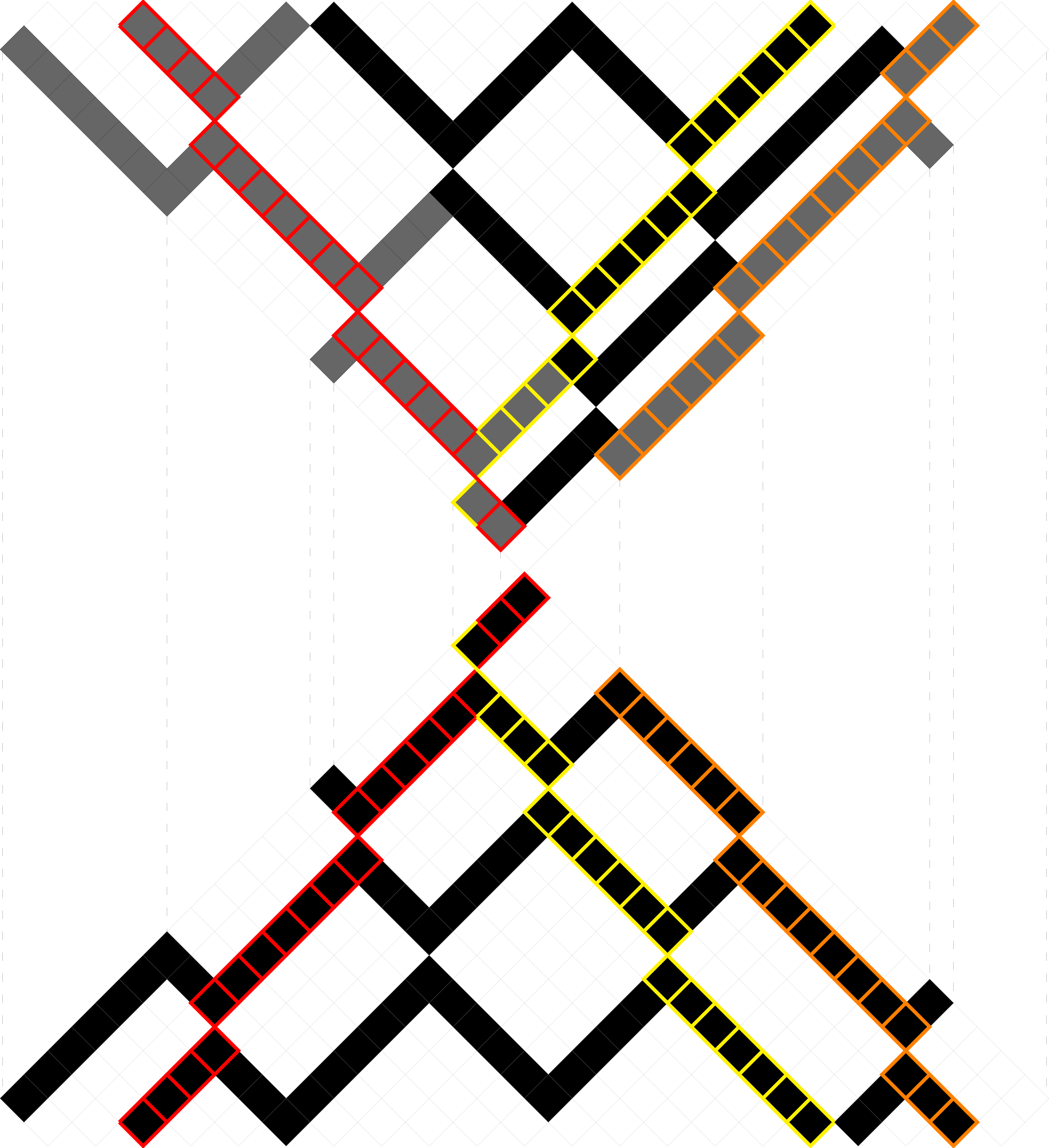}
    \caption{\label{fig:tMPAfig3}
    An example of two trajectories, starting with $1$ and $0$ in the centre and
    resulting in an \emph{accessible} pair of configurations $\ul{s}$
    (bottom-most zig-zag line), $\ul{b}$ (top-most configuration), for which
    $d_{\ul{s},\ul{b}}(t)=1$. For convenience we represent one trajectory on
    top of the other with time visually pointing in opposite directions.  In
    both cases, the configurations of bits at the very edges of the light-cone
    are the same (suggested by dashed vertical lines), except for the initial
    site, which is $1$ for $\ul{s}$ and $0$ for $\ul{b}$. The sites denoted by
    the red, yellow and orange colour in both cases denote the same three
    solitons, while in the top part the grey sites denote solitons, whose
    positions are the same as in the bottom part. To understand how $\ul{s}$
    maps into $\ul{b}$, we note that in this example, changing the initial
    state from $1$ to $0$ causes an additional right-mover to appear (between
    the yellow and orange soliton), which causes the displacement of all the
    solitons that scatter with it.
    }
\end{figure}

Analogously to the diagonal case, at time $t$ the time-evolved operator $\sigma(t)$ nontrivially
acts on the sublattice of length $2t+1$ and we can introduce coefficients $d_{\ul{s},\ul{b}}(t)$,
such that
\begin{eqnarray}
    \sigma(t)=\smashoperator{\sum_{\ul{s},\ul{b}\in\mathbb{Z}_2^{2t+1}}}
    d_{\ul{s},\ul{b}}(t) \ketbra{\ul{s}}{\ul{b}}.
\end{eqnarray}
By definition,  each one of these terms will be at time $t+1$ mapped
into $4$ different operators with support $2(t+1)+1$,
\begin{eqnarray}
    \eqalign{
        &\Ueo \ketbra{s_{-t}s_{-t+1}\ldots s_{t}}{b_{-t}b_{-t+1}\ldots b_{t}}\Ueo\\
    &=
    \smashoperator{\sum_{\substack{s_{-t-1},s_{t+1}\\b_{-t-1},b_{t+1}}}}
    \ketbra{
        s^{\phantom{\prime}}_{-t-1} 
        s^{\prime}_{-t} \cdots
        s^{\prime}_{t}
        s^{\phantom{\prime}}_{t+1}}{
        b^{\phantom{\prime}}_{-t-1} 
        b^{\prime}_{-t} \cdots
        b^{\prime}_{t}
        b^{\phantom{\prime}}_{t+1}}
    \delta_{b_{-(t+1)},s_{-(t+1)}}\delta_{b_{t+1},s_{t+1}},
}
\end{eqnarray}
where we use the shorthand notation
$s_{x}^{\prime}=\chi(s_{x-1},s_{x},s_{x+1})$ and
$b_{x}^{\prime}=\chi(b_{x-1},b_x,b_{x+1})$.
The deterministic nature of time evolution has direct implications that
can be summarised in the next three points.
\begin{enumerate}[label=(\roman*)]
    \item The coefficient $d_{\ul{s},\ul{b}}(t)$ can be either $0$ or $1$.
    \item If $c_{\ul{s}}(t)=0$, then also $d_{\ul{s},\ul{b}}(t)=0$ for all $\ul{b}$.
    \item \label{num:lightcone}
        If $c_{\ul{s}}(t)=1$, there exists a \emph{unique} configuration 
        $\ul{b}$, for which $d_{\ul{s},\ul{b}}(t)=1$. In this case, if the configurations
        $\ul{s}$ and $\ul{b}$ are evolved backwards in time, we will end up in
        a central $1$ and $0$ respectively, while in \emph{all} intermediate time-steps,
        the states at the edges of the two light-cones coincide.
\end{enumerate}
%We immediately notice that the
%coefficient $d_{\ul{s},\ul{b}}(t)$ can be either $0$ or $1$, and
%$c_{\ul{s}}(t)=0$ implies $d_{\ul{s},\ul{b}}(t)=0$ for all $\ul{b}$. If
%$c_{\ul{s}}(t)=1$, there exists a \emph{unique} configuration $\ul{b}$ such
%that $d_{\ul{s},\ul{b}}(t)=1$.
%In particular, if $d_{\ul{s},\ul{b}}(t)=1$,
%evolving $\ul{s}$ and $\ul{b}$ backwards in time for $t$ time-steps
%and keeping only the information inside the light-cone, will result in $1$ and $0$ 
%respectively, and additionally, the values of bits on the very edges of the light-cone
%are the same in both instances.
An example illustrating the point~\ref{num:lightcone} is shown in
Fig.~\ref{fig:tMPAfig3}. In this case, changing the initial bit from $1$ to $0$
produces an additional right-mover, therefore $\ul{s}$ can be mapped into
$\ul{b}$ by adding the additional soliton and displacing all the
quasi-particles that were affected by it.

More generally, it can be shown that changing $1$ to $0$ in the initial
configuration can be always understood as adding or removing $1$ or $2$
solitons, and in all these cases, there exists a simple deterministic map $M_t$
from $\ul{s}$ to $\ul{b}$ that accounts for the relevant displacements of
quasi-particles. Furthermore, as was demonstrated in Ref.~\cite{alba2019operator}, the
map $M_t$ can be expressed in terms of products of \emph{finite-dimensional}
matrices, with the size that does not change with time $t$.  This immediately implies
that the growth of the complexity of time-evolution is given by the size of
operators constituting $c_{\ul{s}}(t)$, which is $\mathcal{O}(t^2)$.  The
growth of OSEE is therefore bounded from above by $2\log(t)$ (up to a constant
factor). Note that this bound need not be saturated, and indeed, there are
recent indications of OSEE increasing as
$\frac{1}{2}\log{t}$~\cite{alba2020diffusion}.

\section{Related exactly solvable models}\label{sec:relModels}
While, as argued in this review, the RCA54 model allows for a remarkable set of
explicit results on nonequilibrium dynamics and statistical mechanics, an
important question is if similar can be achieved for other, perhaps closely
related models.

Similar results have been so far demonstrated in two other (RCA type) deterministic lattice models. Specifically, in Refs.~\cite{wilkinson2020exact,iadecola2020nonergodic} the rule 201 RCA has been studied, which can be understood simply as a {\em negation} of rule 54, namely with the local rule modified as
\be
\chi_{201}(s,s',s'') = 1-\chi_{54}(s,s',s'').
\ee
Both the RCA201 and RCA54 can be interpreted as deterministic variants of \emph{kinetically constrained} (or \emph{facilitated}) models, typically studied in the context of glassy dynamics~\cite{fredrickson1984kinetic,palmer1984models,jackle1991hierarchically,ritort2003glassy,garrahan2011dynamical}: the middle bit (cell $s'$) flips whenever the neighbouring bits (cells $s,s^{\prime\prime}$) satisfy a particular condition. In the case of RCA54 the flip occurs when at least one of $s$ and $s^{\prime\prime}$ is in the state $1$, while in RCA201 the flip happens when both $s$ and $s^{\prime\prime}$ are in the state $0$, and as such it provides a deterministic classical analogy of the PXP model~\cite{PXP,lesanovsky2011manybody,turner2018weak,fendley2004competing}.
It has been shown \cite{wilkinson2020exact} that the invariant (equilibrum) states and NESS of the boundary driven setup can again be written in terms of a patch state and matrix product ansatz with very similar constituent matrix algebra as for the rule 54. However, at the moment it is still unclear precisely which other features of RCA54 dynamics remain exactly solvable for the RCA201, which is left as an open question. Similar explicit results are expected to be achievable for the \emph{free} RCA rule 150, $\chi_{150}(s,s',s'') = s + s' + s'' \pmod{2}$ (or equivalently, the flip occurs when exactly one of $s$, $s^{\prime\prime}$ is $0$ and one $1$), which represents a caricature of non-interacting dynamics in $1+1$ dimensions.

Another class of related deterministic discrete nonequilibrium dynamics that has been studied in some detail is the two-species cellular automation describing a synchronous lattice gas of charged point particles with hard-core interaction~\cite{medenjak2017diffusion,klobas2018exactly,medenjak2019two}. This model fundamentally differs from the RCAs of Ref.~\cite{bobenko1993two} (such as rules 54, 150, 201) in the way the global time-evolution is defined. In the first time-step, the \emph{even} pairs of neighbouring cells are updated using a two-site update rule,
\begin{eqnarray}
    (s^{t+1}_x,s^{t+1}_{x+1}) = \phi(s^t_x,s^t_{x+1}),\qquad s^{t}_x\in\{+,-,0\},
\end{eqnarray}
while in the second time-step, the update rule is applied to \emph{odd} pairs of sites, and the two steps are then periodically repeated. This is analogous to trotterisations of nearest-neighbour interacting models, or to brickwork circuits. The two-species hardcore interacting model specified by a self-invertible bijective mapping,
\begin{eqnarray}
    \phi(\pm,\mp) = (\pm,\mp), \text{ and } \phi(s,s^{\prime}) = (s^{\prime},s)
    \text{ otherwise,}
\end{eqnarray}
has been shown to display a coexistence of ballistic and diffusive transport (similar to Rule 54) with explicitly calculable diffusion constant and Drude weight~\cite{medenjak2017diffusion,klobas2018exactly}, 
explicitly solvable boundary driven NESS exhibiting a nonequilibrium phase transition~\cite{medenjak2019two},
but with a slightly lower complexity of time evolution compared to RCA54: the Schmidt rank of time-dependent MPA grows as $\propto t$~\cite{medenjak2019two} rather than $t^2$.

\section{Conclusions and perspectives}

This review paper provides a coherent overview of a series of explicit results, appearing over the last five years -- most of which were co-contributed by the authors, on dynamics and nonequilibrium statistical mechanics of a particular interacting reversible cellular automaton (RCA54). We believe that establishing such a set of explicit results on the connection between microscopic reversible laws of motion and effectively irreversible dynamics of macroscopic states is indispensable for the field whose holy grail is to establish such a connection on a general level \cite{lebowitz1999statistical}.

The phenomenology that the model helps to rigorously understand is the emergence of Fick's law of diffusive transport, as a leading-order correction to the ballistic motion.
%An important physics phenomenology that this model helps to understand is a rigorous emergence of the Fick's law of diffusive matter transport.
One can thus consider the RCA54 as an exactly solvable example of nonequilibrium universality class of diffusive transport. However, it has been recently discovered, that even without any intrinsic sources of noise (or dissipation), one-dimensional systems can display other types of super-diffusive (but sub-ballistic) transport, for example in the presence of integrability and non-abelian global symmetries, spin (charge) transport seem to follow the scaling of KPZ universality class with dynamical exponent $3/2$ (rather than $2$ for the diffusive class) \cite{ljubotina2019,spohn2019,krajnik2020a,krajnik2020b,bulchandani2021superdiffusion}.
It would be an interesting challenge to look for minimal solvable models of such anomalous transport within a class of reversible cellular automata. 
For instance, one can generalise RCA to multi-colour internal states of non-empty cells such that it reduces exactly to RCA54 if the colours are disregarded. The conserved charges of such automata (some of which appear to be integrable) seem to display superdiffusive transport \cite{klobasproseninprep}.

A different potentially interesting direction is to study quantum and/or stochastic deformations of Rule 54. Specifically, one can allow for coherent (quantum superposition) or stochastic mixtures of several local processes which may preserve exact solvability. For instance, replacing the
local (3-site) propagator (\ref{eq:defU}) by a non-permutation matrix which maps $(000) \to u_{00} (000) + u_{01} (010)$,
$(010) \to u_{10} (000) + u_{11} (010)$ (while all other processes remain as defined by (\ref{eq:defU})), where the $2\times 2$
mixing matrix $u_{ij}$ is either unitary or stochastic, one finds strong circumstantial evidence of integrability in all such cases \cite{proseninprep}. Another kind of quantum deformation of RCA54, introducing a non-trivial quasiparticle dispersion relation, was shown to allow for standard coordinate Bethe ansatz treatment \cite{friedman2019integrable}.

Perhaps the most urgent question regarding RCA54 is how to precisely fit it into the family of Yang-Baxter solvable models. At the moment it seems that the model can be interpreted as a (in some sense) degenerate limit of a more complicated integrable system. One example of the mapping in this direction is the quantum deformation~\cite{friedman2019integrable} mentioned above, in the context of which RCA54 is a zero-dispersion limit of a more general Bethe-ansatz solvable model. Another possibility~\cite{vernier2021yang} is to identify it as a singular limit of a vertex model~\cite{schultz1981solvable}, by first mapping it to a random-tiling model~\cite{widom1993bethe,degier1997integrability}. However, at the moment it is still not clear what is the best way to fit everything together. Nonetheless, interpreting RCA54 as a singular limit of a more generic model, puts it into the same family as other special (but not free) points of interacting models, such as the $q\to\infty$ limit of the $q$-boson model studied in~\cite{pozsgay2014quantum,pozsgay2016realtime}, or the model considered in~\cite{zadnik2021foldedI,zadnik2021foldedII,pozsgay2021integrable}, with which it shares some common features. It would be interesting to understand this connection in more detail.

Another interesting perspective for future research is studying non-stationary dynamics in the RCA54. More specifically, this would involve the existence of extensive quantities $A$ (sums of local terms) satisfying a Floquet \emph{dynamical symmetry} \cite{Buca2019nonstationary,Marko1,Marko2,Chinzei} condition $\U A=e^{\ii \omega} A$, where $\omega\neq 0$ is real, which would imply persistent oscillations and absence of relaxation.

\ack
We thank C.~Mej\'{i}a-Monasterio, M.~Medenjak, M.~ Vanicat, J.~P.~Garrahan, J.~W.~P.~Wilkinson, B.~Bertini and L.~Piroli for collaboration on parts of the topics discussed in this review. We thank B.~Bertini, J.~P.~Garrahan, E.~Vernier, and J.~W.~P.~Wilkinson for insightful comments and suggestions.
This work has been supported by the European Research Council under the
Advanced Grant No.\ 694544 -- OMNES, by the Slovenian Research Agency (ARRS)
under the Programme P1-0402, under the European Union's Seventh Framework Programme (FP7/2007-2013)/ERC Grant Agreement no. 319286, Q-MAC, by EPSRC under grant EP/S020527/1, programme grant EP/P009565/1, and by EPSRC National Quantum Technology Hub in Networked Quantum Information Technology (EP/M013243/1).

\section*{References}
\bibliographystyle{iopart-num}
\bibliography{bibliography}

\appendix
\addtocontents{toc}{\protect\setcounter{tocdepth}{1}}
\section{Boundary vectors for the NESS orbital}
\label{app:boundaryvectorsMPS}
Here we give the boundary vectors solving the boundary equations for the NESS orbital for the case of \emph{conditional driving} \eqref{eq:conditional}. The physical-space-components of the right boundary vectors that satisfy (\ref{bound3a},\ref{bound4a}) are given as,
\begin{eqnarray}\label{eq:rightBoundVecs}\fl
    \eqalign{
        \ket{r'_{0}} =
        \begin{bmatrix}
            \gamma  (\delta -1) \Lambda _{\text{R}} \\
            (\delta -1) \delta   \\
            (\delta -1) \delta  \Lambda _{\text{R}}^2
        \end{bmatrix},
        &\ket{r'_{1}} = 
        \begin{bmatrix}
            (\gamma-\gamma \delta +\delta -1) \Lambda _{\text{R}} \\
            -(\delta -1)^2  \\
            -(\delta -1)^2 \Lambda _{\text{R}}^2 
        \end{bmatrix},\\
        \ket{r_{00}} =
        \begin{bmatrix}
            \gamma  (\delta -1) \Lambda _{\text{R}} \\
            \delta  \left(\gamma +\delta -\delta  \Lambda _{\text{R}}-1\right) \\
            (\delta -1) \delta  \Lambda _{\text{R}}^2
        \end{bmatrix},\qquad
        &\ket{r_{01}} =
        \begin{bmatrix}
            -(\delta -1) \Lambda _{\text{R}} \left(\gamma -\Lambda _{\text{R}}\right) \\
            -(\delta -1) \left(\gamma +\delta -\delta  \Lambda _{\text{R}}-1\right) \\
            -(\delta -1)^2 \Lambda _{\text{R}}^2
        \end{bmatrix},\\
        \ket{r_{10}} = 
        \begin{bmatrix}
            \delta  \Lambda _{\text{R}} \left(\gamma -\Lambda _{\text{R}}\right) \\
            \gamma  (\delta -1) \Lambda _{\text{R}} \\
            \gamma  \Lambda _{\text{R}}^2 \left(\gamma -\Lambda _{\text{R}}\right)
        \end{bmatrix},
        &\ket{r_{11}} = 
        \begin{bmatrix}
            -(\gamma -1) (\delta -1) \Lambda _{\text{R}} \\
            -(\gamma -1) (\delta -1) \Lambda _{\text{R}} \\
            -(\gamma -1) \Lambda _{\text{R}}^2 \left(\gamma -\Lambda _{\text{R}}\right)
        \end{bmatrix}.
    }
\end{eqnarray}
Similarly, the left boundary vectors solving (\ref{bound1a},\ref{bound2a}) are,
\begin{eqnarray}\label{eq:leftBoundVecs}\fl
    \eqalign{
\bra*{l_0}\!=\!\!
    \begin{bmatrix}
        -\beta  \beta _1 \Lambda_{\mathrm{L}} (\alpha -\beta  \Lambda_{\mathrm{L}}) \left(\alpha +\beta _1-\Lambda_{\mathrm{L}}\right) \\
        \beta _1 \Lambda_{\mathrm{L}}^2 (\Lambda_{\mathrm{L}}-\alpha ) \left(\alpha  (\alpha +\beta -\Lambda_{\mathrm{L}}-1)-\beta  \beta
        _1 (\Lambda_{\mathrm{L}}+1)\right) \\
        %-\beta _1 \left(\alpha ^2 \left(\beta -\beta _1 \Lambda_{\mathrm{L}}\right)+\beta _1 \left(\Lambda_{\mathrm{L}}^2 \left(\alpha +\beta ^2\right)+\alpha  \beta \right)+\alpha  \left(\beta _1-2 \beta ^2\right) \Lambda_{\mathrm{L}}+\beta  \Lambda_{\mathrm{L}} \left(\Lambda_{\mathrm{L}}-\beta _1\right)\right)
        -\!\beta _1 \left(
        \alpha ^2 \left(\beta -\beta _1 \Lambda_{\mathrm{L}}\right)\!+\!\beta _1 \left(\Lambda_{\mathrm{L}}^2 \left(\alpha +\beta ^2\right)+\alpha  \beta \right)\!+\!\alpha  \left(\beta _1-2 \beta ^2\right) \Lambda_{\mathrm{L}}
        \!+\!\beta  \Lambda_{\mathrm{L}} \left(\Lambda_{\mathrm{L}}-\beta _1\right)
        \right)
    \end{bmatrix}^T\mkern-14mu,
            \mkern-48mu \\
\bra*{l_1}\!=\!\! 
    \begin{bmatrix}
 -\beta _1^2 \Lambda _{\text{L}} \left(\alpha +\beta _1-\Lambda _{\text{L}}\right) \left(\alpha -\beta  \Lambda
   _{\text{L}}\right) \\
 \beta _1 \Lambda _{\text{L}}^2 \left(\alpha -\Lambda _{\text{L}}\right) \left(\alpha ^2+\alpha  (\beta -2)-\beta ^2+\beta
   -(\alpha +(\beta -2) \beta ) \Lambda _{\text{L}}\right) \\
 \beta _1^2 \left[\beta _1 \left(\alpha +\beta  \Lambda _{\text{L}}^2\right)+\alpha  \left(\alpha +\Lambda _{\text{L}}
   \left[-\alpha -2 \beta +\Lambda _{\text{L}}+1\right]\right)\right]
    \end{bmatrix}^T\mkern-14mu,\\
\bra*{l'_{00}}\!=\!\! 
    \begin{bmatrix}
 \beta  \Lambda _{\text{L}} \left(\alpha +\beta _1-\Lambda _{\text{L}}\right){}^2 \left(\beta  \Lambda _{\text{L}}-\alpha
   \right) \\
 -\beta _1 \Lambda _{\text{L}}^2 \left(\alpha -\Lambda _{\text{L}}\right) \left(\alpha +\beta _1-\Lambda _{\text{L}}\right)
   \left(\alpha -\beta  \Lambda _{\text{L}}\right) \\
 (\beta -1) \beta _1 \Lambda _{\text{L}} \left(\alpha +\beta _1-\Lambda _{\text{L}}\right) \left(\alpha -\beta  \Lambda
   _{\text{L}}\right) \\
    \end{bmatrix}^T\mkern-14mu,\\
\bra*{l'_{01}}\!=\!\! 
    \begin{bmatrix}
 0 \\
 \Lambda _{\text{L}}^2 \left(-\left(\alpha -\Lambda _{\text{L}}\right)\right) \left(\alpha +\beta _1-\Lambda
   _{\text{L}}\right) \left(\alpha -\beta  \Lambda _{\text{L}}\right) \\
 -\left(\alpha +\beta _1\right) \left(\alpha -\Lambda _{\text{L}}\right) \left(\alpha +\beta _1-\Lambda _{\text{L}}\right)
   \left(\alpha -\beta  \Lambda _{\text{L}}\right) \\
    \end{bmatrix}^T\mkern-14mu,\\
\bra*{l'_{10}}\!=\!\! 
    \begin{bmatrix}
 \beta _1 \Lambda _{\text{L}} \left(\alpha +\beta _1-\Lambda _{\text{L}}\right){}^2 \left(\alpha -\beta  \Lambda
   _{\text{L}}\right) \\
 \beta _1 \Lambda _{\text{L}}^2 \left(\alpha -\Lambda _{\text{L}}\right) \left(\alpha +\beta _1-\Lambda _{\text{L}}\right)
   \left(\alpha -\beta  \Lambda _{\text{L}}\right) \\
 -\beta _1^2 \Lambda _{\text{L}} \left(\alpha +\beta _1-\Lambda _{\text{L}}\right) \left(\alpha -\beta  \Lambda
   _{\text{L}}\right)
    \end{bmatrix}^T\mkern-14mu,\\
\bra*{l'_{11}}\!=\!\! 
    \begin{bmatrix}
 0 \\
 0 \\
 (\alpha -1) \left(\alpha +\beta _1-\Lambda _{\text{L}}\right){}^2 \left(\alpha -\beta  \Lambda _{\text{L}}\right)
    \end{bmatrix}^T\mkern-14mu,
    }
\end{eqnarray}
where $\beta_1=\beta-1$.

\section{Explicit form of the representation for the large deviation MPA}
\label{app:Reps}
The coefficients in \eqref{DevMat} are given by the following explicit two-parameter $(\rho,\kappa)$ form ($\rho$ and $\kappa$ play the role of spectral parameters, see Ref.~\cite{buca2019exact} for details),
\begin{eqnarray}\fl
    \begin{aligned}
        &w_1^{(2k)} = \kappa b^{(k)} \frac{t_{10}^{(2k)}}{u_{10}^{(2k)}},&
        &w_2^{(2k)} = \kappa b^{(k)}, &
        &w_3^{(2k)} = \rho c^{(k)},\\
        &w_1^{(2k+1)} = \rho c^{(k)} \frac{1}{u_{00}^{(2k)}}, &
        &w_2^{(2k+1)} = \rho c^{(k+1)}, &
        &w_3^{(2k+1)} = \kappa b^{(k)}t_{00}^{(2k)},\\
        &v_1^{(2k)} = \rho c^{(k)}t_{10}^{(2k)}, &
        &v_2^{(2k)} = \frac{\rho c^{(k)}}{\varpi^{(2k)}}, &
        &v_3^{(2k)} = \frac{\kappa b^{(k)}}{u_{10}^{(2k)}},\\
        &v_1^{(2k+1)} = \kappa b^{(k)}t_{00}^{(2k)}u_{10}^{(2k+1)},\quad &
        &v_2^{(2k+1)} = \kappa b^{(k)}t_{00}^{(2k)}t_{10}^{(2k+1)},\quad &
        &v_3^{(2k+1)} = \frac{\rho c^{(k)}}{u_{00}^{(2k)}u_{10}^{(2k+1)}},
    \end{aligned}
\end{eqnarray}
and
\begin{eqnarray}\fl
  \begin{aligned}
  &z_1^{(2k)} = \prod_{l=1}^{2k-1} g_{00}^{(l)} , &
  &z_2^{(2k)} = z_1^{(2k)}w_1^{(2k)}, &
  &z_3^{(2k)} = z_1^{(2k)},\\ 
  &z_1^{(2k+1)} = \prod_{l=1}^{2k} \frac{1}{f_{00}^{(l)}},&
  &z_2^{(2k+1)} = z_1^{(2k+1)}w_1^{(2k+1)},\qquad&
  &z_3^{(2k+1)} = z_1^{(2k+1)},\\
  &\frac{z_4^{(2k)}}{z_1^{(2k)}} =\kappa b^{(k)}\frac{g_{00}^{(2k)}}{g_{10}^{(2k)}}, &
  &\frac{z_5^{(2k)}}{z_1^{(2k)}} =\frac{1}{\varpi^{(2k)}}\frac{g_{01}^{(2k-1)}}{g_{00}^{(2k-1)}}, &
  &\frac{z_6^{(2k)}}{z_1^{(2k)}}=\frac{\rho c^{(k)}}{\varpi^{(2k)}}\frac{g_{01}^{(2k-1)}}{g_{00}^{(2k-1)}},\\
  &\frac{z_4^{(2k+1)}}{z_1^{(2k+1)}}=\frac{\rho c^{(k)}}{\varpi^{(2k)}}\frac{f_{00}^{(2k)}}{f_{01}^{(2k)}},\qquad&
  &\frac{z_5^{(2k+1)}}{z_1^{(2k+1)}}=\frac{f_{10}^{(2k+1)}}{f_{00}^{(2k+1)}},\qquad&
  &\frac{z_6^{(2k+1)}}{z_1^{(2k+1)}}=\kappa b^{(k)}y^{(2k)}\frac{f_{10}^{(2k)}}{f_{11}^{(2k)}},
  \end{aligned}
\end{eqnarray}
where we introduced
\begin{eqnarray}
    \begin{aligned}
        &y^{(i)} = \frac{f_{01}^{(i-1)}f_{11}^{(i)}f_{10}^{(i+1)}}{f_{00}^{(i-1)}f_{00}^{(i)}f_{00}^{(i+1)}},\qquad&
        &\varpi^{(i)} = \frac{g_{01}^{(i-1)}g_{11}^{(i)}g_{10}^{(i+1)}}{g_{00}^{(i-1)}g_{00}^{(i)}g_{00}^{(i+1)}} \\
        &t_{nn'}^{(i)} = \frac{f_{n1}^{(i-1)}f_{1n'}^{(i)}}{f_{n0}^{(i-1)}f_{0n'}^{(i)}}, \qquad &
        &u_{nn'}^{(i)} = \frac{g_{n1}^{(i-1)}g_{1n'}^{(i)}}{g_{n0}^{(i-1)}g_{0n'}^{(i)}}, \\
        &b^{(i)} = \prod_{k=0}^{i-1} y^{(2k)}\varpi^{(2k+1)}, \qquad & 
        &c^{(i)} = \prod_{k=0}^{i-1} \frac{1}{y^{(2k+1)}\varpi^{(2k)}},
    \end{aligned}
\end{eqnarray}
and we implicitly assumed the following initial values for
$f_{n n^{\prime}}^{(l)}$, and $g_{n n^{\prime}}^{(l)}$,
\begin{eqnarray}
    f_{nn'}^{(-1)}=f_{nn'}^{(0)}=1,\qquad
    g_{nn'}^{(-1)}=g_{nn'}^{(0)}=1.
\end{eqnarray}

\section{MPA for generalized Gibbs states}\label{app:MPSGibbs}
\subsection{Asymptotic MPA for finite configurations}\label{appsubsec:thermodynamicEVs}
We first note that
the transfer matrices $T$ and $T^{\prime}$,
\begin{eqnarray}
    \eqalign{
        T=(\W_0+\W_1)(\V_0+\V_1)=
        \begin{bmatrix}
            1+\xi\omega&\omega&\xi\\
            1+\xi&\xi\omega&\xi\\
            1+\omega&\omega&\xi\omega
        \end{bmatrix},\\
        T^{\prime}=(\V_0+\V_1)(\W_0+\W_1)=
        \left.T^{\phantom{\prime}}\right|_{\xi\leftrightarrow\omega},}
\end{eqnarray}
have an isolated leading eigenvalue $\lambda$, which corresponds to the \emph{largest}
solution of the following cubic equation
\begin{eqnarray}
    \lambda^3 - (1+3\xi\omega) \lambda^2- (\xi+\omega+\xi\omega(1-3\xi\omega))\lambda
    -\xi\omega(1-\xi\omega)^2=0.
\end{eqnarray}
The corresponding left and right leading eigenvectors can be expressed as,
\begin{eqnarray}\fl
    \ket{l} =
    \begin{bmatrix}
        \omega\Big( (\lambda-\xi\omega)^2 - \xi\omega\Big)\\
        \omega^2\Big(\lambda-\xi\omega+\xi\Big)\\
        \xi\omega\Big(\lambda-\xi\omega+\omega\Big)
    \end{bmatrix},\quad
    \ket{r} =
    \begin{bmatrix}
        \omega\Big( \lambda- \xi\omega+\xi\Big)\\
        (\lambda-\xi\omega)^2-\lambda-\xi\\
        \omega\Big(\lambda-\xi\omega+\omega\Big)
    \end{bmatrix},\qquad
    \eqalign{
        \ket{l^{\prime}}=\left.\ket{l}\right|_{\xi\leftrightarrow \omega},\\
        \ket{r^{\prime}}=\left.\ket{r}\right|_{\xi\leftrightarrow \omega},}
\end{eqnarray}
where 
$\bra{l^{(\prime)}} T^{(\prime)} = \lambda\bra{l^{(\prime)}}$,
$T^{(\prime)}\ket{r^{(\prime)}} = \lambda\ket{r^{(\prime)}}$.
Since the leading eigenvalue $\lambda$ is isolated, the thermodynamic
expectation values~\eqref{eq:asymptPropEven} can be conveniently expressed
as finite products with the leading eigenvectors,
\begin{eqnarray}
    \eqalign{
        p\Big(\!\begin{tikzpicture}
            [scale=0.3,baseline={([yshift=-0.6ex]current bounding box.center)}]
            \textrectangle{0}{-0.5}{$s_1$};
            \textrectangle{1}{0.5}{$s_2$};
            \textrectangle{2}{-0.5}{$\cdots$};
            \textrectangle{3}{0.5}{};
            \textrectangle{4}{-0.5}{};
            \textrectangle{5}{0.5}{};
            \textrectangle{6}{-0.5}{$\cdots$};
            \textrectangle{7}{0.5}{$s_{2k}$};
        \end{tikzpicture}\!\Big)=
        \frac{\mel{l}{\W_{s_1}\V_{s_2}\cdots \V_{s_{2k}}}{r}}{\lambda^k \braket{l}{r}},\\
        p\Big(\!\begin{tikzpicture}
            [scale=0.3,baseline={([yshift=-0.6ex]current bounding box.center)}]
            \textrectangle{0}{0.5}{$s_1$};
            \textrectangle{1}{-0.5}{$s_2$};
            \textrectangle{2}{0.5}{$\cdots$};
            \textrectangle{3}{-0.5}{};
            \textrectangle{4}{0.5}{};
            \textrectangle{5}{-0.5}{};
            \textrectangle{6}{0.5}{$\cdots$};
            \textrectangle{7}{-0.5}{$s_{2k}$};
        \end{tikzpicture}\!\Big)=
        \frac{\mel{l^{\prime}}{\V_{s_1}\W_{s_2}\cdots \V_{s_{2k}}}{r^{\prime}}}
        {\lambda^k \braket{l^{\prime}}{r^{\prime}}}.}
\end{eqnarray}
The probabilities of odd configurations~\eqref{eq:asymptPropOdd} can be expressed analogously,
\begin{eqnarray}
    \eqalign{
        p\Big(\!\begin{tikzpicture}
            [scale=0.3,baseline={([yshift=-0.6ex]current bounding box.center)}]
            \textrectangle{0}{-0.5}{$s_1$};
            \textrectangle{1}{0.5}{$s_2$};
            \textrectangle{2}{-0.5}{$\cdots$};
            \textrectangle{3}{0.5}{};
            \textrectangle{4}{-0.5}{};
            \textrectangle{5}{0.5}{$\cdots$};
            \textrectangle{6}{-0.5}{\scalebox{0.8}{$s_{2k-1}$}};
        \end{tikzpicture}\!\Big)=
        \frac{\mel{l}{\W_{s_1}\V_{s_2}\cdots \V_{s_{2k-1}}}{r^{\prime}}}
        {\lambda^{k-1}\mel{l}{\W_0+\W_1}{r^{\prime}}}
        ,\\
        p\Big(\!\begin{tikzpicture}
            [scale=0.3,baseline={([yshift=-0.6ex]current bounding box.center)}]
            \textrectangle{0}{0.5}{$s_1$};
            \textrectangle{1}{-0.5}{$s_2$};
            \textrectangle{2}{0.5}{$\cdots$};
            \textrectangle{3}{-0.5}{};
            \textrectangle{4}{0.5}{};
            \textrectangle{5}{-0.5}{$\cdots$};
            \textrectangle{6}{0.5}{\scalebox{0.8}{$s_{2k-1}$}};
        \end{tikzpicture}\!\Big)=
        \frac{\mel{l^{\prime}}{\V_{s_1}\W_{s_2}\cdots \V_{s_{2k-1}}}{r}}
        {\lambda^{k-1}\mel{l^{\prime}}{\V_0+\V_1}{r}}
        ,
        }
\end{eqnarray}
where we use the fact that auxiliary space vectors $(W_0+W_1)\ket{r}$ and $\ket{r^{\prime}}$
are linearly dependent, i.e.\ there exists an $\alpha\in\mathbb{R}$ so that 
$(W_0+W_1)\ket{r}=\alpha\ket{r^{\prime}}$ (and a similar relation holds for swapped parameters).

\subsection{Partition sum}\label{appsubsec:partitionSum}
The MPA can be used to obtain the configurational entropy of the state with fixed numbers of 
left and right-movers. We start by noting that the partition sum $Z_{n}$ is expressed in terms
of powers of the transfer matrix $T=(\W_0+\W_1)(\V_0+\V_1)$, $Z_{n}=\tr T^{n/2}$, which can be
obtained recursively by introducing the following parametrisation of $T^k$,
\begin{eqnarray}
    T^{k} =
    \begin{bmatrix}
        a_k & \omega b_k & \xi c_k \\
        b_k+\xi c_k & a_k-b_k & \xi b_k \\
        \omega b_k+c_k & \omega c_k & a_k-c_k
    \end{bmatrix}.
\end{eqnarray}
By requiring $T^k\cdot T = T^{k+1}$
we obtain a set of recursive relations that $a_k$, $b_k$ and $c_k$ have to satisfy,
\begin{eqnarray}\label{eq:recurrencePartitionSum}
    \eqalign{
        a_{k+1}=(1+\xi\omega)a_k+(1+\xi)\omega b_k+(1+\omega)\xi c_k,\\
        b_{k+1}=a_k +\xi\omega b_k + \xi c_k,\\
        c_{k+1}=a_k +\omega b_k + \xi\omega c_k.
    }
\end{eqnarray}
From here we straightforwardly realize that $Z_{n}$ can be expressed only in terms 
of $a_{n/2}$,
\begin{eqnarray}
    Z_{n}=\tr T^{n/2}=3 a_{n/2}-(b_{n/2}+c_{n/2})=2 a_{n/2}+(\xi\omega-1) a_{n/2-1},
\end{eqnarray}
and therefore $Z_{2k}$ has to satisfy the same recurrence relation in $k$ as
$a_k$, which is obtained by reducing the set of identities of order
one into a single higher-order relation,
\begin{eqnarray}\label{eq:relationZ}
    \fl
    Z_{2k+2} = 
    (1+3\xi\omega)Z_{2k}
    + \big(\xi+\omega+\xi\omega(1-3\xi\omega)\big) Z_{2(k-1)}
    + \xi\omega(\xi\omega-1)^2 Z_{2(k-2)}.
\end{eqnarray}
Now we express $Z_{n}$ in terms of powers of $\xi$ and $\omega$,
\begin{eqnarray}
    Z_{n} = \sum_{x,y} \alpha_{n/2}(x,y) \xi^x \omega^y,
\end{eqnarray}
and recast the requirement~\eqref{eq:relationZ} in terms of identities that have
to be satisfied for coefficients $\alpha_{n/2}(x,y)$. It can be then verified that the
solution takes the form
\begin{eqnarray}
    \alpha_{k}(x,y) =
    \frac{k(k+x+y)}{(k+x-y)(k-x+y)}
    \binom{k+x-y}{y}
    \binom{k-x+y}{x},
\end{eqnarray}
where we implicitly assume $\alpha_k(x,y)=0$ for $k\le \pm(x-y)$. Asymptotically,
the combinatorial term reduces to the configurational entropy introduced
in~\eqref{eq:YYent},
\begin{eqnarray}
    \lim_{n\to\infty}
    \Big(\alpha_{n/2}\left( \tfrac{n}{2} n_{+} , \tfrac{n}{2} n_{-}\right)
    - \mathrm{e}^{\frac{n}{2}s[n_{+},n_{-}]}\Big)=0.
\end{eqnarray}
\end{document}